\documentclass[
12pt,pdflatex,
]{sn-jnl}

\usepackage{amsmath,amsfonts}
\usepackage{array}
\usepackage[caption=false,font=normalsize,labelfont=sf,textfont=sf]{subfig}
\usepackage{textcomp}
\usepackage{stfloats}
\usepackage{url}
\usepackage{verbatim}
\usepackage{graphicx}
\usepackage{cite}
\usepackage[title]{appendix}%

\usepackage{amssymb,amsthm}
\usepackage{bm}
\usepackage{xcolor}
\usepackage{listings}
\usepackage[capitalize]{cleveref}
\usepackage{array}
\usepackage{xfrac}
\usepackage{comment}
\usepackage{enumitem} 
\usepackage[utf8]{inputenc}
\DeclareUnicodeCharacter{03BA}{\ensuremath{\kappa}}
\usepackage[T1]{fontenc}
\usepackage{bigints} 
\usepackage{orcidlink}
\usepackage{cancel}
\usepackage{multirow}
\usepackage{mathrsfs,lmodern,exscale}
\usepackage[section]{placeins} 
\usepackage{csquotes}
\usepackage{manyfoot}
\usepackage{booktabs}%

\usepackage{etoolbox} 
\usepackage{subcaption}

\usepackage{hyperref}
\hypersetup{
    colorlinks=true,
    linkcolor=blue,
    citecolor=blue,
    filecolor=blue,
    urlcolor=blue,
}

\newtheorem{theorem}{Theorem} 
\newtheorem{lemma}[theorem]{Lemma}

\newcommand{\coupledsum}[3]{\sum_{#1}^{#2}{\!}_{\raisebox{-0.3ex}{\scalebox{1.5}{$\scriptscriptstyle#3$}}}}
\newcommand{\coupledprod}[3]{\prod_{#1}^{#2}{\!}_{\raisebox{-0.3ex}{\scalebox{1.5}{$\scriptscriptstyle#3$}}}}

\theoremstyle{thmstyleone}%
\theoremstyle{thmstyletwo}%
\newtheorem{remark}{Remark}%

\theoremstyle{thmstylethree}%
\newtheorem{definition}{Definition}%

\begin{document}

\title[Coupled Entropy]{The unique, universal entropy for complex systems}

\author[1]{\fnm{Kenric} \sur{Nelson} \orcidlink{0000-0001-6962-7459}}\email{kenric.nelson@photrek.io}

\affil[1]{\orgname{Photrek, Inc.}, \orgaddress{\street{56 Burnham St \#1}, \city{Watertown}, \postcode{02472}, \state{MA}, \country{USA}}}

\maketitle

\begin{abstract}

An axiomatic foundation regarding the entropy for complex systems is established. Missing from decades of research was the requirement that entropy must measure the uncertainty at the informational scale of the maximizing distribution, where the log-log slope equals $-1$. Additionally, entropy must be extensive across the scaling classes defined by Hanel-Thurner. The coupled entropy, maximized by the coupled stretched exponential distributions, is proven to be the unique, universal entropy that satisfies these requirements. The non-additivity of the entropy is equal to the long-range dependence or nonlinear statistical coupling. The entropy-matched extensivity is a function of the coupling, stretching parameter, and dimensions. Evidence is provided that the Tsallis $q$-statistics creates misalignment in the physical modeling of complex systems. Information thermodynamic applications are reviewed, including measuring complexity, a zeroth law of temperature, the thermodynamic consistency of the coupled free energy, and a model of intelligence in non-equilibrium. 
\end{abstract}


\section{Introduction}

Life has been shown to reside at the edge of chaos \cite{lewinComplexityLifeEdge1999,marinoMethodologyPerformingGlobal2008, eskovClassificationUncertaintiesModeling2019}, just one of the motivations for precise models of complex adaptive systems \cite{ComplexAdaptiveSystems,yoshidaNonlinearScienceChallenge2010}. Likewise artificial intelligence systems are pushing the boundaries of safe control \cite{shneidermanHumanCenteredArtificialIntelligence2020,almeidaPredictiveNonlinearModeling2002,benderskayaNonlinearTrendsModern2013}, necessitating robust designs trained against outliers from fluctuating uncertainty \cite{dietterichStepsRobustArtificial2017}. In these domains, the difficulty with quantifying heavy-tailed, non-equilibrium \cite{resnickHeavyTailPhenomenaProbabilistic2007, casas-vazquezTemperatureNonequilibriumStates2003} phenomena and integrating this phenomena into foundational methodologies creates an analytical bottleneck restricting analysis and design.

The Boltzmann-Gibbs-Shannon (BGS) entropy is well-established as the appropriate uncertainty metric for equilibrium systems with linear sources of uncertainty \cite{goldsteinGibbsBoltzmannEntropy2019}. Given moment constraints, the BGS entropy is maximized by members of the exponential family of distributions, which E.T. Jaynes showed provides a broad principle for model development \cite{jaynesInformationTheoryStatistical1957}. Uncertainty quantification of non-equilibrium systems with nonlinear sources of uncertainty presents significantly more difficult challenges. First, the moments depend on both the scale and shape of the distributions, and diverge for large values of the shape parameter. Secondly, the BGS entropy is quickly dominated by the shape of the distribution, breaking the connection between the scale and entropy. The generalized Pareto \cite{arnoldParetoDistribution2015, pickandsStatisticalInferenceUsing1975} and Gosset (Student's t) \cite{pearsonStudentsCollectedPapers1942} distributions are shown to be of critical importance for defining a generalized entropy function, since they each utilize a definition of scale that is independent of the shape of the distribution. 

This paper provides the long-missing axiomatic foundation requiring that the entropy of non-equilibrium, nonlinear, complex systems must measure the uncertainty at the scale of the maximizing distribution. The coupled entropy function is proven to be the unique fulfillment of this function, and to satisfy the scaling universality classes. While the Rényi \cite{renyiMeasuresEntropyInformation1961}; Sharma, Mohan, and Mitter \cite{sharmaMeasuresUsefulInformation1978, sharmaNewNonadditiveMeasures1975, nielsenClosedformExpressionSharma2012}; and Tsallis \cite{tsallisPossibleGeneralizationBoltzmannGibbs1988, tsallisNonadditiveEntropyConcept2009} entropies, in particular, and many other generalized entropies \cite{hanelComprehensiveClassificationComplex2011, tempestaShannonKhinchinFormulation2016, beckGeneralisedInformationEntropy2009, kaniadakisMaximumEntropyPrinciple2009} showed promise in modeling the non-exponential distributions characteristic of complex adaptive systems, lingering questions about their derivation from first principles \cite{choFreshTakeDisorder2002} have restricted their applicability. Here it is proven that the foundational gap was the failure to recognize that entropy must measure the uncertainty at the informational scale, which is independent from the tail shape. Further, the coupled entropy solution is universal because it explicitly models the external short-range dependencies (stretching parameter) and internal long-range dependencies (nonlinear statistical coupling).

The Results section \ref{sec_Results} begins with Preliminaries \ref{subsec_prelim} to establish and motivate the coupled algebra and distributions. The foundational results (unique information scale \ref{subsec_InfoScale}, required entropy solution \ref{subsec_required}, independent-equals moments \ref{subsec_IE}) then build to the main proofs for uniqueness  \ref{subsec_CE} and universality \ref{subsec_univ} of the coupled entropy. Having established the importance of the coupled entropy, two new information thermodynamic axioms are introduced \ref{subsec_axioms}. Importantly, these axioms maintain the universal scaling of the first three Shannon-Khinchin axioms, but they assure the correct measure by a) requiring a precise relationship between composability and extensivity and b) requiring macroscopic observability via use of the independent-equals moments.

The significance of these proofs is demonstrated via applications in information thermodynamics \ref{subsec_apps} \cite{saltheWhatInfodynamics2001, parrondoThermodynamicsInformation2015}, which refers to the overlapping methodologies of information theory \cite{shannonMathematicalTheoryCommunication1948, jaynesInformationTheoryStatistical1957}, thermodynamics \cite{fermiThermodynamics2012, lewisThermodynamics2020}, and statistical mechanics \cite{maStatisticalMechanics1985, davidsonStatisticalMechanics2013}. The nonlinear statistical coupling is shown to be proportional to a measure of complexity \ref{subsubsec_Complexity}. A foundation for the laws of thermodynamics in non-equilibrium is proposed beginning with the temperature being equal to the scale and the free energy being a nonlinear function of the entropy and energy \ref{subsubsec_thermo}. Results and proposals in the application of the coupled free energy to biological and artificial intelligence systems, along with communications systems are reviewed in \ref{subsubsec_info}. Following a Discussion \ref{sec_disc} of the implications of the Results, details of the Methods \ref{sec_methods} are provided, including maximization of the coupled entropy \ref{subsec_maxCE}, a reference for the generalized entropies \ref{subsec_Entropies}, and description of the classification of entropies by their scaling properties \ref{subsec_Hanel}. References for a Github repository of Mathematica functions are provided \ref{subsec_github}. 

\section{Results}
\label{sec_Results}

\subsection{Preliminaries} \label{subsec_prelim}

\subsubsection{Origin of nonlinearities}
Complex systems are governed by two distinct sources of nonlinearity. The first is the stretching parameter $\alpha$, which governs short-range dependence and modifies the power of the exponential function, $\exp \frac{x^\alpha}{\alpha}$. The second is the nonlinear (statistical) coupling $\kappa$, which governs long-range dependence and modifies the exponential function into a power-law function, $\exp_\kappa x \equiv (1+\kappa x)^\frac{1}{\kappa}$. The source of these nonlinearities can be identified from ordinary differential equations. The stretching parameter originates from consideration of external influences on a system in which short-range, memoryless dynamics create the relationship, $y' \propto y$. In these systems, such as bottlenecks or hazard rates, the nonlinearity of the coefficient variable $x$ modifies the local rate of the equation,
\begin{align}
    y' + x^{\alpha -1}y =0 \quad\Rightarrow \quad y = \exp \frac{-x^\alpha}{\alpha}.
\end{align}
The coupling parameter originates from internal, state-dependent feedback, such as memory, that creates a global long-range dependence. In this case, the differential equation becomes nonlinear and the coupling measures the deviation from linearity,
\begin{align}
    y' + xy^{1+\kappa} = 0 \quad \Rightarrow \quad 
    y = \exp_\kappa^{-1} x \equiv (1+\kappa x)^{-\frac{1}{\kappa}}.
\end{align}
In each case, the integrating constants were set to 1 for the simplest expression. The combined differential equation defines the coupled stretched exponential function:
\begin{align}
    y' + \frac{x^{\alpha -1}}{\sigma^\alpha } y^{1+\kappa} = 0 \quad \Rightarrow \quad 
    y = \exp_{\kappa}^{-1} \frac{x^\alpha}{ \alpha \sigma^\alpha} 
    \equiv \left(1+\kappa \frac{x^\alpha}{ \alpha\sigma^\alpha}\right)^{-\frac{1}{\kappa}}.
\end{align}
where, the scale $\sigma$ defines the boundary between exponential behavior $(x \ll \sigma)$ and power-law behavior $(x \gg \sigma$). This solution to the nonlinear differential equation with nonlinear coefficient will serve as the survival function for the maximizing distribution for the entropy of the corresponding complex system. The role of the coupled exponential function defining the survival function is a key distinction with the development of the Tsallis $q$-statistics \cite{tsallisStatisticalMechanicsComplex2017}, which used this structure to define the density, and thereby, defined the scale improperly.  The coupling is the long-range nonlinear source of uncertainty, the scale is the linear source of uncertainty and the stretching parameter is the short-range nonlinearity. While there are many varieties of nonlinear differential equations and corresponding dynamics, this structure provides a foundational kernel from which other systems can be modeled as either additional arguments within the coupled exponential or as mixtures of this solution.

\subsubsection{Functions for nonlinear analysis}\label{subsubsec_CPA}

The expression and manipulation of nonlinear relationships is facilitated by a pseudo-algebra \cite{borgesPossibleDeformedAlgebra2004,nelsonAverageUncertaintySystems2017} that generalizes the operations of algebra. Raising the coupled exponential function to a power is equivalent to modifying the coupling,  $\exp_\kappa^a(x) \equiv \exp_{\kappa/a}(a x)\equiv(1+\kappa x)^\frac{a}{\kappa}$. The inverse of the coupled exponential function is the coupled logarithm, $\frac{1}{a}\ln_\kappa x\equiv \ln_{a\kappa} x^\frac{1}{a}\equiv\frac{1}{a\kappa}(x^\kappa-1)$. These generalized functions have the following pseudo-algebraic properties:
\begin{align}
    \exp_\kappa(x+y)&=\exp_\kappa(x) \otimes_\kappa \exp_\kappa(y);\quad 
    A \otimes_\kappa B \equiv (A^\kappa + B^\kappa-1)^\frac{1}{\kappa} \label{equ_cprod}\\
    \ln_\kappa (xy) &= \ln_\kappa (x) \oplus_\kappa \ln_\kappa (y); \quad 
    A \oplus_\kappa B = A + B + \kappa AB. \label{equ_csum}
\end{align}
The coupled subtraction function follows from the coupled sum via the property:
\begin{align}
    A\oplus_\kappa(\ominus_\kappa A)&=0, \text{ therefore, }
    \ominus_\kappa A = \frac{-A}{1+\kappa A}, 
    \text{ for } A \neq \frac{-1}{\kappa} \\
    A\ominus_\kappa B&=A\oplus_\kappa(\ominus_\kappa B)
    = \frac{A-B}{1+\kappa B}, 
    \text{ for } B \neq \frac{-1}{\kappa}.
\end{align}
The coupled subtraction generalizes a common property of the coupled exponential and coupled logarithm :
\begin{align}
    \exp_\kappa \left(\ominus_\kappa x\right) 
    &= \left(1+\frac{-\kappa x}{1+\kappa x}\right)^\frac{1}{\kappa}
    = \exp_\kappa^{-1} x\\
    \ominus_\kappa \ln_\kappa x &=\frac{-\frac{1}{\kappa}\left(x^\kappa-1\right)}{x^\kappa}
    =\ln_\kappa x^{-1}.
\end{align}
The coupled division function follows from the coupled product, via the property:
\begin{align}
     A\otimes_\kappa(\oslash_\kappa A)=1, \text{ therefore, }
    \oslash_\kappa A = (2-A^\kappa)_+^\frac{1}{\kappa}, 
    \text{ for } A > 0 \\
    A\oslash_\kappa B=A\otimes_\kappa(\oslash_\kappa B)
    = (A^\kappa - B^\kappa+1)_+^\frac{1}{\kappa} , 
    \text{ for } A,B >0.
\end{align}
The methods extend to summations and products over $N$ variables are:
\begin{align}
\coupledsum{i=1}{N}{\kappa} x_i &= x_1 \oplus_\kappa x_i 
    ... \oplus_\kappa x_N \nonumber \\ 
    &= \ln_\kappa \left( \prod_{i=1}^N \exp_\kappa x_i \right)\\
\coupledprod{i=1}{N}{\kappa} x_i &= x_1 \otimes_\kappa x_i 
    ... \otimes_\kappa x_N \nonumber \\
    &=\exp_\kappa\left(\sum_{i=1}^N\ln_\kappa x_i\right).
\end{align}
Finally, the coupled power is:
\begin{align}
    x^{\otimes_\kappa^N} &= (Nx^\kappa-(n-1))^\frac{1}{\kappa} \nonumber \\
    &= \coupledprod{i=1}{N}{\kappa} x = x \otimes_\kappa x ... 
    \otimes_\kappa x \nonumber \\
    &= (2x^\kappa -1) \otimes_\kappa x ... 
    \otimes_\kappa x = (3x^\kappa -2)^\frac{1}{\kappa} ... \oplus_\kappa x 
\end{align}
\newpage
\subsubsection{The Coupled Exponential Family}\label{subsubsec_CEF}
The coupled exponential family (CEF) of distributions is foundational to the analysis of complex systems, given its origin from nonlinear dynamics. This family generalizes the exponential family $(\kappa=0)$ for heavy-tailed $(\kappa>0)$ and compact-support $(-\frac{\alpha}{d}<\kappa<0)$ distributions, while assuring that each parameter in the family has a precise interpretation for mathematical physics. The lower bound of the range is the uniform distribution below which the function cannot be normalized. The notation $d_\alpha\equiv\frac{d}{\alpha}$ will be used throughout since the ratio appears frequently. The broadest definition of the family comes from a multivariate representation using information geometry. The definition is provided for continuous variables, though it extends to discrete distributions.
\begin{definition}[The Coupled Exponential Family]
    Given a $d$-dimensional random variable $\mathbf{X}_d \sim f_\kappa(\mathbf{x}_d;\boldsymbol{\theta},\alpha)$, within the coupled exponential family, its definition is:
    \begin{align}
        f_\kappa(\mathbf{x}_d;\boldsymbol{\theta},\alpha) &= \frac{h(\mathbf{x})}{Z_\kappa(\boldsymbol{\theta},\alpha,d)}
        \exp_\kappa^{-(1+d_\alpha\kappa)}
        \left[\boldsymbol{\eta}(\boldsymbol{\theta}) \cdot \mathbf{T}(\mathbf{x})\right] \nonumber \\
        &= \exp_\kappa^{-(1+d_\alpha\kappa)}
        \left[\boldsymbol{\eta}(\boldsymbol{\theta}) \cdot \mathbf{T}(\mathbf{x}) \oplus_\kappa \ln_\kappa\left(h(\mathbf{x})\right)^{-\frac{1}{1+d_\alpha\kappa}}
        \oplus_\kappa \ln_\kappa\left(Z_\kappa(\boldsymbol{\theta},\alpha,d)\right)^\frac{1}{1+d_\alpha\kappa}
        \right], \label{equ_cefPDF}
    \end{align}
where $\alpha$ is the highest power of the variable function $\mathbf{T}(\mathbf{x})$, $Z_\kappa$ is the normalization or partition function, and $h(\mathbf{x})$ is the base measure. 
\end{definition}
Notice that this definition of the coupled exponential family preserves the definition of the partition function, shown in Figure \ref{fig_CG_Norm}, as the integral of the other terms:
\begin{align}
    Z_\kappa(\boldsymbol{\theta},\alpha,d)= \int_{\mathbf{X}} 
    h(\mathbf{x}) \exp_\kappa^{-(1+d_\alpha\kappa)}
        \left[\boldsymbol{\eta}(\boldsymbol{\theta}) \cdot \mathbf{T}(\mathbf{x})\right]
        \mathrm{d}\mathbf{x}
\end{align}
This property is assured by bringing $Z$ and possibly $h$ into the coupled exponential function via the coupled sum of the coupled logarithm. This contrasts with the definition for the $q$-exponential family defined by Ohara and Amari \cite{oharaGeometryDistributionsAssociated2007, amariGeometryQExponentialFamily2011}:
\begin{align}
    f_q(\mathbf{x}_d;\boldsymbol{\theta}) &=h(\mathbf{x})
    \exp_q
        \left[\boldsymbol{\eta}(\boldsymbol{\theta}) \cdot \mathbf{T}(\mathbf{x})
        - \ln_qZ_q^{-1}(\boldsymbol{\theta})
        \right]\\
        \exp_q(x) &\equiv (1 + (1-q) x)^\frac{1}{1-q}; \quad 
        \ln_q(x) \equiv \frac{1}{1-q}\left(x^{1-q} - 1\right), \label{equ_qexpln}
\end{align}
in which the term $Z_q$ is not the integral of the other terms and thus looses crucial properties regarding the partition function.
\begin{figure*}[ht]
    \centering
    \subfloat[]{
        \includegraphics[width=0.45\linewidth,page=1]{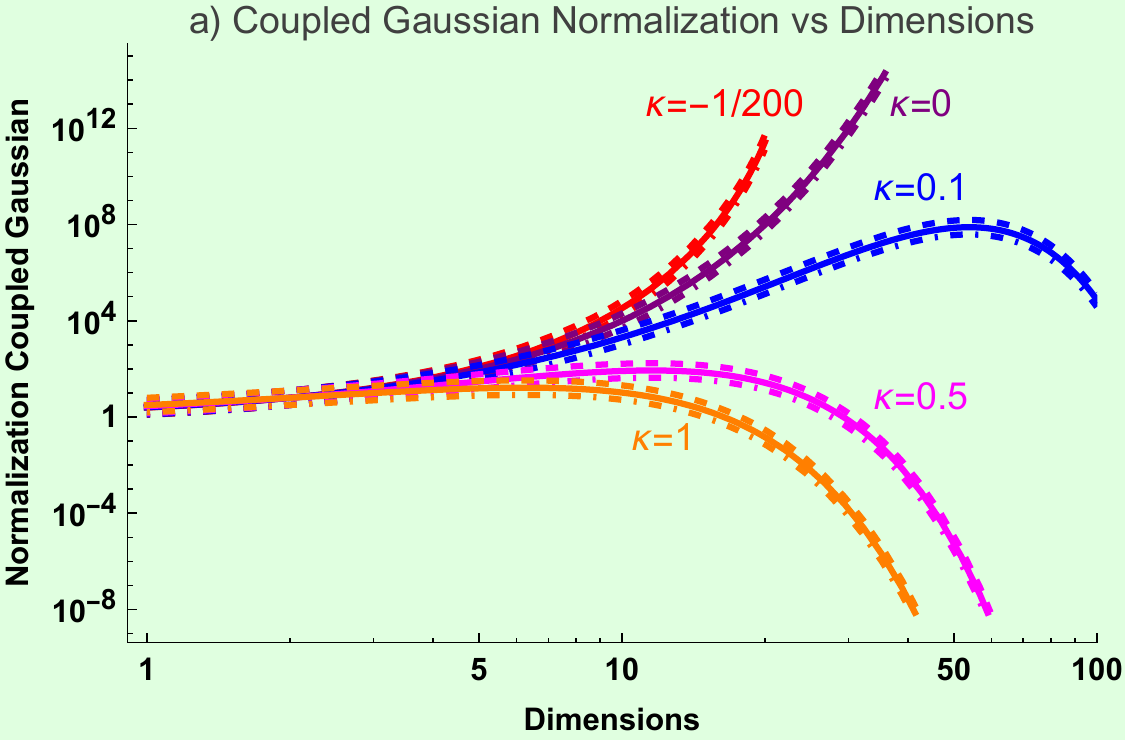}
        \label{fig_CG_Norm_a}
    }%
    \hfill
    \subfloat[]{
        \includegraphics[width=0.45\linewidth,page=1]{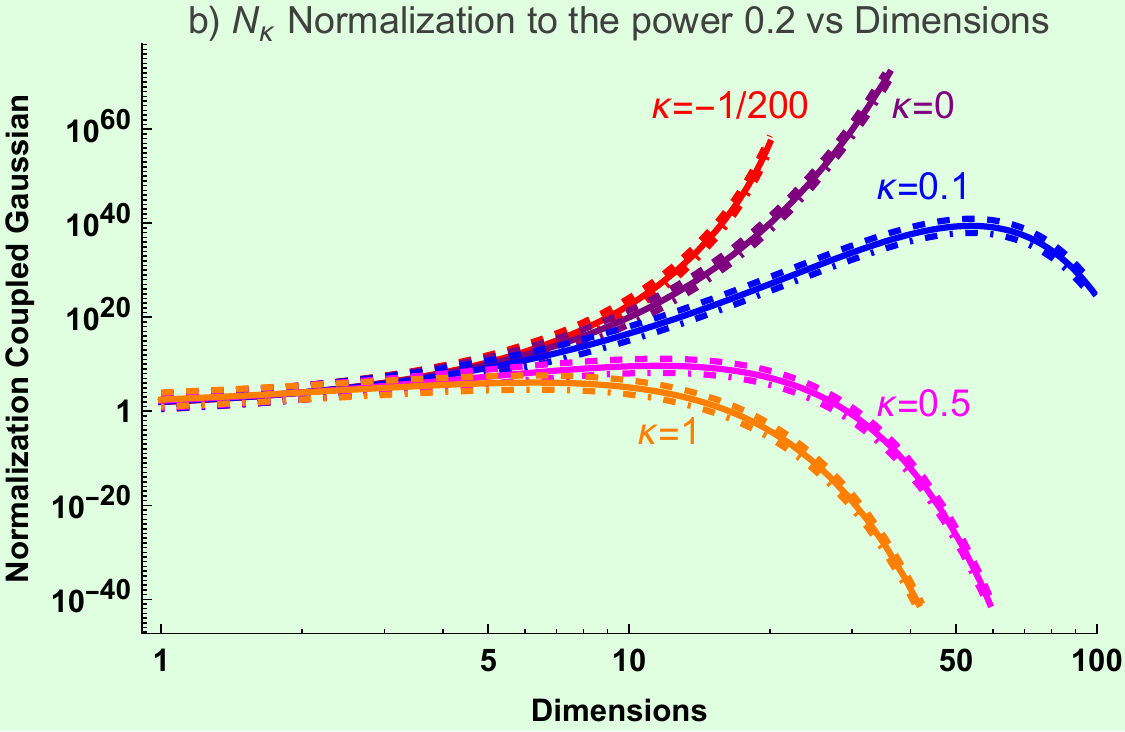}
        \label{fig_CG_Norm_b}
    }
    \caption{The Coupled Gaussian normalization as a function of dimensions, a) without modification b) raised to the power 1/5.}
    \label{fig_CG_Norm}
\end{figure*}
\clearpage
    Two broad groups of the CEF are when $h(\mathbf{x})=1$ (the coupled stretched exponential distribution) and when the survival function is only $\exp_{\kappa}^{-1}
        \left[\boldsymbol{\eta}(\boldsymbol{\theta}) \cdot \mathbf{T}(\mathbf{x})\right]$ (the coupled Weibull distribution). To simplify the expression, a radial variable with the stretching parameter $\sfrac{\alpha}{2}$ applied to elements of the vectors and matrix (signified with $\circ$) is defined as $r^\alpha\equiv ((\mathbf{x}-\boldsymbol{\mu})^{\circ \sfrac{\alpha}{2} \top} 
(\boldsymbol{\Sigma}^{\circ \sfrac{\alpha}{2}})^{-1}(\mathbf{x}-\boldsymbol{\mu})^{\circ \sfrac{\alpha}{2}}\geq 0$. For the compact-support domain, the statistical domain is $-\frac{1}{d_\alpha}<\kappa<0$. The domain of concave entropies will further restrict this domain to $-\frac{1}{1+d_\alpha}<\kappa<0$.

    \begin{definition}{Coupled Stretched Exponential Distribution}
        \label{def_CSED} \\
        PDF:
        \begin{align}
f_\kappa^{\text{Stretch}}(\mathbf{x}) &\equiv \frac{1}{Z_\kappa}\left(1+\kappa
\frac{r^\alpha}{\alpha}\right)
^{-\frac{1+d_\alpha\kappa}{\kappa}} \nonumber \\ \label{equ_csePDF}
&\equiv \frac{1}{Z_\kappa}\exp_{\kappa}^{-(1+d_\alpha\kappa)}
\left(\frac{r^\alpha}{\alpha}\right) \nonumber \\
Z_\kappa&=\frac{2\pi^\frac{d}{2}}{\Gamma\left(\frac{d}{2}\right)}
|\boldsymbol{\Sigma}|^\frac{1}{2}
\left\{
\begin{matrix}
    \frac{1}{\alpha}\left(\frac{\kappa}{\alpha}\right)^{-d_\alpha}
    B\left(d_\alpha,\frac{1}{\kappa}\right)& \kappa > 0 \\
    \alpha^{\left(d_\alpha-1\right)}\Gamma\left(d_\alpha\right) & \kappa=0 \\
    \frac{1}{\alpha}(-\frac{\kappa}{\alpha})^{-d_\alpha} B\left(d_\alpha,1-\frac{1+d_\alpha\kappa}{\kappa}\right)& -\frac{1}{d_\alpha}<\kappa<0 .
\end{matrix}
\right. \\
\nonumber \end{align}
        SF = 1 - CDF:
        \begin{align}
            S_\kappa^{\mathrm{Stretch}}(\mathbf{x})=1-F_\kappa(\mathbf{x}) 
            &\equiv \left\{
    \begin{matrix}
    1-I_z\left(d_\alpha,\frac{1}{\kappa}\right)
     & \kappa > 0 \\
      \frac{\Gamma\left(d_\alpha,r^\alpha/\alpha\right)}{\Gamma\left(d_\alpha\right)} & \kappa = 0 \\
     1-I_z\left(d_\alpha,1-\frac{1+d_\alpha\kappa}{\kappa}\right)
     & -1<\kappa< 0 
    \end{matrix}\right. \label{equ_cseSF} \\ 
    z&=\left\{
    \begin{matrix}
        \frac{\kappa r^\alpha/\alpha}{1+\kappa r^\alpha/\alpha} & \kappa > 0 \nonumber \\
        -\kappa r^\alpha/\alpha & -\frac{1}{d_\alpha}<\kappa<0
    \end{matrix}\right. \nonumber \\
    I_z&=\frac{B_z(a,b)}{B(a,b)};\text{  Regularized Incomplete Beta Function} \nonumber
        \end{align}
        Special Cases:
        $\alpha=1$: Coupled Exponential Distribution
        $\alpha=2$: Coupled Gaussian Distribution

    \end{definition}

        \begin{definition}{Coupled Weibull Distribution}
        Assuming radial symmetry and starting with the survival function, which has just the coupled exponential structure. \\
        SF = 1 - CDF:
        \begin{align}
            S_\kappa^{\mathrm{Weibull}}(\mathbf{x})=1-F_\kappa(\mathbf{x}) \equiv
            \left\{\begin{matrix}
                \left(1+\kappa r^\alpha\right)_+^{-\frac{1}{\kappa}}
                & \kappa > -\frac{1}{d}; \ \kappa \neq 0\\ 
                \exp\left(-r^\alpha\right) & \kappa = 0
            \end{matrix}\right. \label{equ_cwSF}
        \end{align} 

         PDF: 
        \begin{align}
            f_\kappa^{\mathrm{Weibull}}(\mathbf{x}) 
            &\equiv \left\{\begin{matrix}
                \frac{r^{\circ \frac{\alpha-1}{2}}}{\alpha Z_\kappa}
                \left(1+\kappa \frac{r^\alpha}{\alpha}\right)_+^{-\frac{1}{\kappa}-d_\alpha}
                & \kappa > -\frac{1}{d_\alpha}; \ \kappa \neq 0\\
                 \frac{r^{\circ \frac{\alpha-1}{2}}}{\alpha Z_\kappa}
                \exp\left(-\frac{r^\alpha}{\alpha}\right) & \kappa = 0
            \end{matrix}\right. \label{equ_cwPDF} \\ 
            Z_\kappa&=2\pi^\frac{d}{2}|\boldsymbol{\Sigma}|^\frac{1}{2}
\left\{
\begin{matrix}
    \left(\alpha\left(\frac{\kappa}{\alpha}\right)^{d_\alpha}\Gamma(\frac{d}{2})\right)^{-1}
    B\left(d_\alpha,\frac{1}{\kappa}+1 - d_\alpha\right)& \kappa > 0 \\
    \left(\alpha\Gamma(\frac{d}{2})\right)^{-1} \Gamma\left(d_\alpha\right) & \kappa=0 \\
    \left(\alpha(-\frac{\kappa}{\alpha})^{d_\alpha}\Gamma(\frac{d}{2})\right)^{-1} B\left(d_\alpha,-\frac{1}{\kappa}-d_\alpha+1\right)& -\frac{1}{d_\alpha}<\kappa<0 .
\end{matrix}
\right. \nonumber \\
\nonumber
        \end{align}
        Special Cases:
        $\alpha=1$: Coupled Exponential Distribution
        $\alpha=2$: Coupled Rayleigh Distribution

    \end{definition}

        The two most important distributions, which will be utilized frequently, are the coupled exponential and the coupled Gaussian distributions. The one-dimensional coupled exponential distribution is unique in retaining just the coupled exponential structure for both the SF and the PDF.
\newpage
\begin{definition}[Coupled Exponential Distribution]\label{def_CED}
For location $\mu$, scale $\sigma$, shape $\alpha=1$,  and $d=1,$ the survival function for the one-sided coupled exponential distribution is:
\begin{align} 
\text{SF: } S_\kappa^\text{exp}(x;\mu,\sigma,1)
\equiv \left(1 + \kappa\left(\frac{x-\mu}{\sigma}\right)\right)_+^{-\frac{1}{\kappa}}
\equiv\exp_{\kappa}^{-1}
\left(\frac{x-\mu}{\sigma}\right); \\ 
x>\mu; \ \kappa > -1 \nonumber
\end{align}

The probability density function, $-\frac{\mathrm{d}S}{\mathrm{d}x},$ for the one-sided coupled exponential distribution is:
\begin{align} 
\text{PDF: }f_\kappa^{\text{exp}}(x;\mu,\sigma,1) &\equiv \frac{1}{\sigma}\left(1+\kappa\frac{x-\mu}{\sigma}\right)_+^{-\frac{1+\kappa}{\kappa}}\label{equ_cexpdist} 
\equiv \frac{1}{\sigma}\exp_\kappa^{-(1+\kappa)}\left(\frac{x-\mu}{\sigma}\right), \\  \label{equ_cepdf}
\text{for } x\geq \mu, \kappa>-1.\nonumber
\end{align}
\end{definition}

\begin{definition}[Coupled Gaussian Distribution]\label{def_CGD}
For location $\mu$, scale $\sigma$, stretching parameter $\alpha=2$, and $d=1,$ the survival function for the coupled Gaussian distribution is:
 \begin{align}
            S_\kappa(x;\mu,\sigma,2)=1-F_\kappa(x) 
            &\equiv \left\{
    \begin{matrix}
    \left(1-I_z\left(\frac{1}{2},\frac{1}{\kappa}\right)\right)
     & \kappa > 0 \\
      \frac{1}{2}\left(1-\text{erf}\left(\frac{x-\mu}{\sqrt{2}\sigma}\right)\right) & \kappa = 0 \\
     \left(1-I_z\left(\frac{1}{2},\frac{-1+\kappa/2}{\kappa}\right)\right)
     & -1<\kappa< 0 
    \end{matrix}\right.\\
    z&=\left\{
    \begin{matrix}
        \frac{\kappa \frac{1}{2}\left(\frac{x-\mu}{\sigma}\right)^{2}}{1+\kappa
        \frac{1}{2} \left(\frac{x-\mu}{\sigma}\right)^{2}} & \kappa > 0 \nonumber \\
        -\kappa \frac{1}{2}\left(\frac{x-\mu}{\sigma}\right)^{2} & -2<\kappa<0
    \end{matrix}\right. \nonumber \\
    I_z&=\frac{B_z(a,b)}{B(a,b)};\text{  Regularized Incomplete Beta Function} \nonumber
        \end{align}

The probability density function, $-\frac{\mathrm{d}S}{\mathrm{d}x},$  for the coupled Gaussian distribution is:
\begin{align} 
f_\kappa^{\text{exp}}(x;\mu,\sigma,2) &\equiv \frac{1}{Z}\left(1+\kappa\frac{1}{2}\left(\frac{x-\mu}{\sigma}\right)^2\right)_+^{-\frac{1+\kappa/2}{\kappa}} \nonumber \\
&\equiv \frac{1}{Z}\exp_{\kappa}^{-(1+\kappa/2)}\left(\frac{1}{2}\left(\frac{x-\mu}{\sigma}\right)^2\right), 
\ \text{for } \kappa>-2. \label{equ_cgpdf}
\end{align}
\end{definition}

\subsubsection{Multiplicative Noise Fluctuations}\label{subsubsec_noise}
The fluctuations of multiplicative noise provide further evidence that the structure of the coupled stretched exponentials provides a unique model of the mathematical physics of complex systems. A multiplicative noise process with a parabolic potential relating the deterministic and multiplicative functions has a non-equilibrium stationary state (NESS) of a coupled Gaussian, $X_{NESS} \sim N_\kappa(X_0,\sigma),$ \cite{al-najafiIndependentApproximatesProvide2026, anteneodoMultiplicativeNoiseMechanism2003}. The Stratonovich stochastic equation in terms of the NESS scale $\sigma$ and coupling $\kappa$ is: 
\begin{equation}\label{equ_multproc}
    dX_t=-X_tdt+ \sqrt{2}\sigma \circ dW_t^{(a)}+ \sqrt{\kappa} X_t^2 \circ dW_t^{(m)}.
\end{equation}
Figure \ref{fig:MultProc} shows several samples of the process for $\sigma=5$ and a set of couplings $\kappa= (0.1, 1, 5)$. While the coupled Gaussian scale is independent of the coupling, which creates fluctuations in the process, a $q$-Gaussian model has a scale of $\sqrt{\beta^{-1}}=\frac{\sigma}{\sqrt{1+\kappa/2}}$. The inverse dependence on the coupling undermines the ability of $\sqrt{\beta^{-1}}$ to model a generalized scale and physical properties, such as a generalized temperature. Antendeodo \cite{anteneodoMultiplicativeNoiseMechanism2003} showed that the multiplicative noise process extends to a stationary distribution with the coupled stretched exponential distribution.
\begin{figure}[ht]
    \centering
    \includegraphics[width=1\linewidth,page=1]{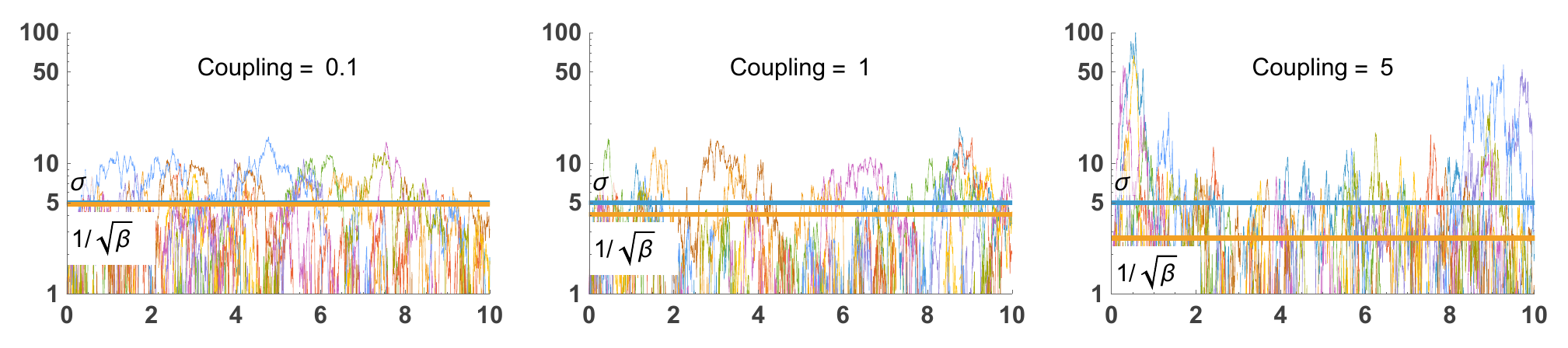}
\caption{\textbf{Multiplicative Process Samples} 10 samples from the multiplicative noise process defined by equation \eqref{equ_multproc}, showing the positive values on a logarithmic scale. The NESS distribution is a coupled Gaussian with $\sigma=5$ and a set of couplings $\kappa=(0.1,1,2)$. As the coupling increases, the fluctuations of the process increase, while the scale $\sigma=5$ is independent of the fluctuations. In contrast, the $q$-Gaussian scale $\sqrt{\beta^{-1}}$  is dependent on the multiplicative noise, which undermines its ability to be a measure of generalized temperature. }
    \label{fig:MultProc}
\end{figure}

\subsection{Uniqueness of the Informational Scale}\label{subsec_InfoScale}

Specifying the scale of a shape-scale distribution would appear to be an elementary task, but, in fact, the lack of a clear criteria has resulted in different statistical physics communities using different criteria. For instance, Tsallis statistics \cite{shalizicosmar.TsallisStatisticsStatistical2021} uses a scale derived from the hypothesized Tsallis entropy, while the space plasma \cite{pierrardKappaDistributionsTheory2010} community specifies the average velocity as a scale. A proof is provided here that the only scale that separates the linear and nonlinear sources of uncertainty is the value at which the derivative of the log-log of the distribution is negative one. Given the connection to the surprisal of the distribution, this will be referred to as the \textit{informational scale}. 

Recall that a shape-scale distribution can be scaled if given a standard (non-scaled) random variable, $X \sim f(x),$ the scaled random variable is $\sigma X \sim \frac{1}{\sigma}f(\frac{x}{\sigma})$. While this is a criterion for the distribution, it does not specify the definition of the scale, since $\sigma'=a\sigma+b$ also satisfies the relationship.

\begin{definition}[Informational Scale]
    The informational scale, $\sigma$, of a distribution with shape parameters $(\alpha,\kappa)$ and location, $\mu$, is the value of $x$ such that:
    \begin{align}\label{equ_info_scale}
        \frac{\mathrm{d}\ln f(x-\mu;\sigma,\kappa,\alpha)}{\mathrm{d}\ln x}\vert_{x=\mu+\sigma}=x \frac{f'(x-\mu;\sigma,\kappa,\alpha)}{f(x-\mu;\sigma,\kappa,\alpha)}\vert_{x=\mu+\sigma}=-1
    \end{align}
    Equivalently, using the surprisal rather than the derivative of the log-log, the criterion is:
    \begin{align}
        -\frac{\mathrm{d}\ln f(x-\mu;\sigma,\kappa,\alpha)}{\mathrm{d} x}\vert_{x=\mu+\sigma}=\frac{1}{\sigma}
    \end{align}
\end{definition}

As shown in Table \ref{tab_scale} the parameterization for the generalized Pareto \cite{arnoldParetoDistribution2015} and the Student's t \cite{pearsonStudentsCollectedPapers1942} distributions satisfy the informational scale, while the Tsallis $q$-exponential, $q$-Gaussian, and the space plasma kappa distribution, do not. The Tsallis distribution incorrectly defines the inverse-scale, thereby resulting in a parameter that is dependent on both the scale and the shape of the distribution. The space plasma definition seeks to represent energy due to the velocity of a plasma, $\frac{2k_B T}{m}$,  ($k_B:$ Boltzmann constant;  $T:$ temperature;  $m:$ mass); however, in doing so dynamics above $\kappa>\sfrac{3}{2}$ are lost.

\begin{table}[h]
\centering
\caption{\textbf{Distribution Scales \& Entropies}}
\label{tab_scale}
\begin{tabular}{ccc>{\centering\arraybackslash}p{0.25\linewidth}}
\toprule
\textbf{Distribution} & \textbf{PDF} & \textbf{Info Scale} & \textbf{Entropy} \\
\midrule
Gen. Pareto & $\frac{1}{\sigma}\left(1+\kappa\frac{x}{\sigma}\right)^{-\frac{1}{\kappa}-1}$ 
& $\sigma$ & $1+\ln\sigma +\kappa$ \\
$q$-exponential & $(2-q)\beta\left(1+(q-1)\beta x\right)^{\frac{1}{1-q}}$
& $\frac{1}{(2-q)\beta}$ & $1+\ln\frac{1}{(2-q)\beta} +\frac{q-1}{2-q}$ \\
Student's t & $\frac{1}{Z_S}\left(1+\frac{x^2}{\nu\sigma^2}\right)^{-\frac{1}{\alpha}(\nu+1)}$ 
& $\sigma$ & $\ln Z_S +\frac{\nu + 1}{2}\left(\psi(\frac{\nu+1}{2})-\psi(\frac{\nu}{2})\right)$ \\
$q$-Gaussian & $\frac{1}{Z_{TG}}\left(1+(q-1)\beta x^2\right)^{\frac{1}{1-q}}$ 
& $\frac{1}{\sqrt{(3-q)\beta}}$ & $1+\ln Z_{TG} +\frac{1}{q-1}\left(\psi(\frac{1}{q-1})-\psi(\frac{q-1}{2(3-q)})\right)$ \\
Plasma $\kappa$-Dist. & $\frac{1}{Z_{\kappa_P}} \left[ 1 + \frac{m x^2}{2k_BT (\kappa - 3/2)} \right]^{-\kappa - 1}$ 
& $\frac{2k_BT}{m}\frac{\kappa_P-\sfrac{3}{2}}{\sqrt{2\kappa_P+1}}$ & $1+\ln Z_{TG} +(\kappa_P+1)\left(\psi(\kappa_P+1)-\psi(\kappa_P+\sfrac{1}{2})\right)$ \\
\bottomrule
\end{tabular}
\end{table}

The informational content gained by clearly separating the scale and shape of non-exponential distributions is shown by the Boltzmann-Gibbs-Shannon (BGS) entropy, $H(f(x))=-\int_{x\in X}f(x)\ln f(x) \mathrm{d}x$, of the distributions, shown in the last column of Table \ref{tab_scale}. The structure of the entropy for the shape-scale distributions is a constant plus the logarithm of the partition function (normalization) plus a function of the shape. The entropy of the GPD equation has a direct simplicity. 

In Shannon's classic paper on the theory of communication \cite{shannonMathematicalTheoryCommunication1948}, he states after reviewing the entropy axioms, "The real justification for these definitions will reside in their implications." In this vein, the uniqueness proof for the Coupled Entropy will begin with a pragmatic derivation and a proof that its solution for a broad range of shape-scale distributions, $f(\mathbf{x};\sigma,\kappa,\alpha,d),$ has a simple structure, whose dependence on the asymptotic shape, $\kappa$, is contained within a generalized logarithm, $\ln_g x$, of the partition function, Z:
\begin{equation}
H_\kappa(f(\mathbf{x};\sigma,\kappa,\alpha,d),\alpha,d)
=d_\alpha+\ln_{g(\kappa,\alpha,d)}Z(\sigma,\kappa,\alpha,d);
\end{equation}
With such a solution, the measurement of uncertainty is focused on the partition function, much like the exponential family. Furthermore, this is the solution is that quantifies the uncertainty at the informational scale of the density.

For clarity, the proof of a unique scale is completed for the one-dimensional case.  
\begin{lemma}[Independent Properties of the CSED Parameters]\label{lem_CSED}
   Given a random variable X distributed as the CSED pdf equation \eqref{equ_cexpdist}, then 
   \begin{enumerate}[label=\bfseries\Roman*.]
       \item the coupling divided by the stretching parameter, $\sfrac{\kappa}{\alpha}$, is the asymptotic tail shape  of the distribution and when $|x-\mu|\gg\sigma$ the distribution approaches a Type I Pareto or a pure power law distribution;
       \item the shape near the location is equal to the stretching parameter, $\alpha$, and when $|x-\mu|\ll\sigma$  the distribution approaches a stretched exponential distribution;
       \item the scale parameter, $\sigma$, is exclusively the informational scale as defined by equation \eqref{equ_info_scale}.
   \end{enumerate}
\end{lemma}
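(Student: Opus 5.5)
The plan is to reduce everything to the one-dimensional CSED density of Definition \ref{def_CSED} with $d=1$ (so $d_\alpha=1/\alpha$), written in the standardized variable $u=(x-\mu)/\sigma\ge 0$:
\begin{align*}
f(u)=\frac{1}{\sigma Z}\left(1+\kappa\frac{u^\alpha}{\alpha}\right)_+^{-\frac{1+\kappa/\alpha}{\kappa}} .
\end{align*}
The lemma cites \eqref{equ_cexpdist}, which is only the $\alpha=1$ case, so I will treat the general CSED form \eqref{equ_csePDF}. Each of the three claims then follows from the asymptotics or the logarithmic derivative of this single expression.

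For I, take $\kappa>0$ and $u\gg1$. Then $1+\kappa u^\alpha/\alpha\sim(\kappa/\alpha)u^\alpha$, so
\begin{align*}
f\sim C\,u^{-\alpha\frac{1+\kappa/\alpha}{\kappa}}=C\,u^{-\left(\frac{\alpha}{\kappa}+1\right)} .
\end{align*}
This is a pure (Type I Pareto) power-law density whose survival function decays as $u^{-\alpha/\kappa}$. The tail index is therefore $\alpha/\kappa$, and equivalently the asymptotic tail shape is $\kappa/\alpha$.

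For II, take $u\ll1$ and expand $\ln(1+\kappa u^\alpha/\alpha)\approx\kappa u^\alpha/\alpha$. This gives
\begin{align*}
f\approx \frac{1}{\sigma Z}\exp\!\left(-\left(1+\tfrac{\kappa}{\alpha}\right)\frac{u^\alpha}{\alpha}\right).
\end{align*}
This is a stretched exponential whose power is exactly $\alpha$. The coupling enters only through a multiplicative constant on the exponent, so the shape near the location is governed by $\alpha$ alone.

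For III, I differentiate with respect to $\ln(x-\mu)$, which is how I interpret the log-log derivative in \eqref{equ_info_scale} once the location is shifted out. Setting $x=\mu+\sigma u$ gives
\begin{align*}
g(u)\equiv(x-\mu)\frac{f'}{f}=-\frac{(1+\kappa/\alpha)\,u^\alpha}{1+\kappa u^\alpha/\alpha}.
\end{align*}
At $u=1$ this equals $-1$, so $x=\mu+\sigma$ satisfies the criterion. To show the scale is \emph{exclusively} $\sigma$, I will prove that $|g|$ is strictly monotone on the support, so that $-1$ is attained only once. We have $\frac{\mathrm{d}}{\mathrm{d}u}|g|=(1+\kappa/\alpha)\,\alpha u^{\alpha-1}/(1+\kappa u^\alpha/\alpha)^2>0$ whenever $1+\kappa/\alpha>0$, which holds on the stated domain. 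For $\kappa>0$, $|g|$ rises from $0$ to $1+\alpha/\kappa>1$. For $-\alpha<\kappa<0$, $|g|$ rises from $0$ to $+\infty$ at the compact-support boundary $u^\alpha=-\alpha/\kappa$. By the intermediate value theorem the crossing of $1$ is unique, and it occurs at $u=1$. Consequently no other affine reparametrization $\sigma'=a\sigma+b$, as discussed before the definition, can satisfy \eqref{equ_info_scale} unless $a=1$ and $b=0$.

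The main obstacle is making III rigorous across the whole parameter range. I need to check the boundary cases carefully: the $\kappa\to0$ limit, where $|g|=u^\alpha$ and the crossing is trivially at $u=1$, and the negative-coupling regime, where both the positivity of $1+\kappa/\alpha$ and the support truncation must be verified. I also need to state clearly that the derivative is taken with respect to $\ln(x-\mu)$ rather than $\ln x$.
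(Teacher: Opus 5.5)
Your proposal is correct and follows the paper's proof essentially step for step: you use the large-$u$ asymptotics giving the density exponent $-(\alpha/\kappa+1)$ for I, the small-argument expansion of the coupled exponential for II, and the same log-log derivative $-(1+\kappa/\alpha)u^\alpha/(1+\kappa u^\alpha/\alpha)$ evaluated at $u=1$ for III, where the paper's ``without loss of generality $\mu=0$'' amounts to your reading of the derivative as taken with respect to $\ln(x-\mu)$. Two of your steps go beyond the paper, which only checks that $x=\mu\pm\sigma$ satisfies the criterion: you keep the factor $1+\kappa/\alpha$ in the exponent in II, and you give a strict-monotonicity and intermediate-value argument that the $-1$ crossing is unique on the support for all $\kappa>-\alpha$, which is what the word ``exclusively'' actually requires.
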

\begin{proof}
    $ $\\
   \begin{enumerate}[label=\bfseries\Roman*.]
   \item 
   \begin{align}
        \lim_{|x|\gg\mu+\sigma}\frac{1}{Z}\left(1+\kappa\frac{1}{\alpha}\left|\frac{x-\mu}        {\sigma}\right|^\alpha\right)^{-\frac{1+\kappa/\alpha}{\kappa}}
        &=\frac{1}{Z}\left(\kappa\frac{1}{\alpha}\left|\frac{x-\mu}{\sigma}\right|^\alpha\right)^{-\frac{1+\kappa/\alpha}{\kappa}} \nonumber \\
        &=\frac{\alpha\left(\frac{\kappa}{\alpha}\right)^{-\frac{1}{\kappa}}}
        {\sigma B\left(\frac{1}{\alpha},\frac{1}{\kappa}\right)}
       \left|\frac{x-\mu}{\sigma}\right|^{-\frac{\alpha+\kappa}{\kappa}}
    \end{align}
     Therefore, for $|x-\mu|\gg\sigma$  is a pure power law with exponent, $-\frac{\alpha}{\kappa}-1$. For a $d$-dimensional distribution, the exponent is $-\frac{\alpha}{\kappa}-d$. Therefore, the asymptotic shape of the CSED is $\kappa/\alpha$.
    \item For $|x-\mu|\ll\sigma$ and a fixed $\kappa$, $\frac{\kappa}{\alpha}\left|\frac{x-\mu}{\sigma}\right|^\alpha\ll 1$. Therefore  $\kappa\ll \alpha\left|\frac{\sigma}{x-\mu}\right|^\alpha$, which is the same requirement for the convergence of the generalized exponential to the exponential function.  Therefore,
    \begin{align}
        f_\kappa^{\text{Stretch}}(x) \approx \frac{\alpha}{\sigma \Gamma\left(\frac{1}{\alpha}\right)}
        \exp\left(-\frac{1}{\alpha}\left|\frac{x-\mu}{\sigma}\right|^\alpha\right) 
        \text{ when  } |x-\mu|\ll\sigma.
    \end{align}
    \item Without loss of generality, assuming $\mu=0$,
    \begin{align}
        -\frac{\mathrm{d}\ln\left[\frac{1}{Z}\left(1+\frac{\kappa}{\alpha}\left|\frac{x}{\sigma}\right|
        ^\alpha\right)^{-\frac{1+\kappa/\alpha}{\kappa}}\right]}{\mathrm{d}\ln(x)} 
        &= -x \frac{\partial_x\left[\frac{1}{Z}\left(1+\frac{\kappa}{\alpha}\left|\frac{x}{\sigma}\right|
        ^\alpha\right)^{-\frac{1+\kappa/\alpha}{\kappa}}\right]}{\left[\frac{1}{Z}\left(1+\frac{\kappa}{\alpha}\left|\frac{x}{\sigma}\right|
        ^\alpha\right)^{-\frac{1+\kappa/\alpha}{\kappa}}\right]}\nonumber \\
        &=\frac{(1+\frac{\kappa}{\alpha})\left|\frac{x}{\sigma}\right|^\alpha}{1+\frac{\kappa}{\alpha}\left|\frac{x}{\sigma}\right|^\alpha},
    \end{align}
    which for $x=\pm\sigma$ is equal to 1. Therefore, $\sigma$ is the informational scale.
   \end{enumerate}
\end{proof}

The contrast between the independence of the scale of the coupled exponential and the dependence of the scale on the shape for the $q$-exponential distribution is shown in Figure \ref{fig_scale}.
\begin{figure*}[ht] 
    \centering
    \subfloat{
        \includegraphics[width=0.475\linewidth,page=1]{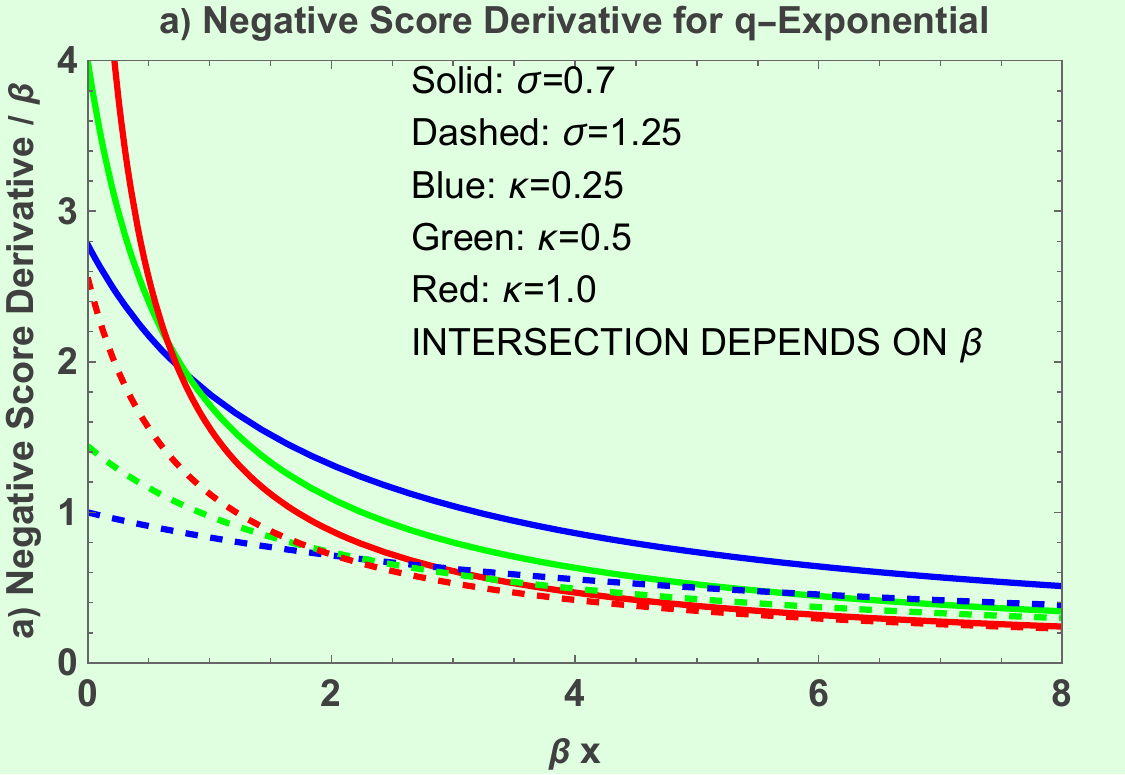}
        \label{fig_scale_a}
        }
    \hfill
    \subfloat[]{
        \includegraphics[width=0.475\linewidth,page=1]{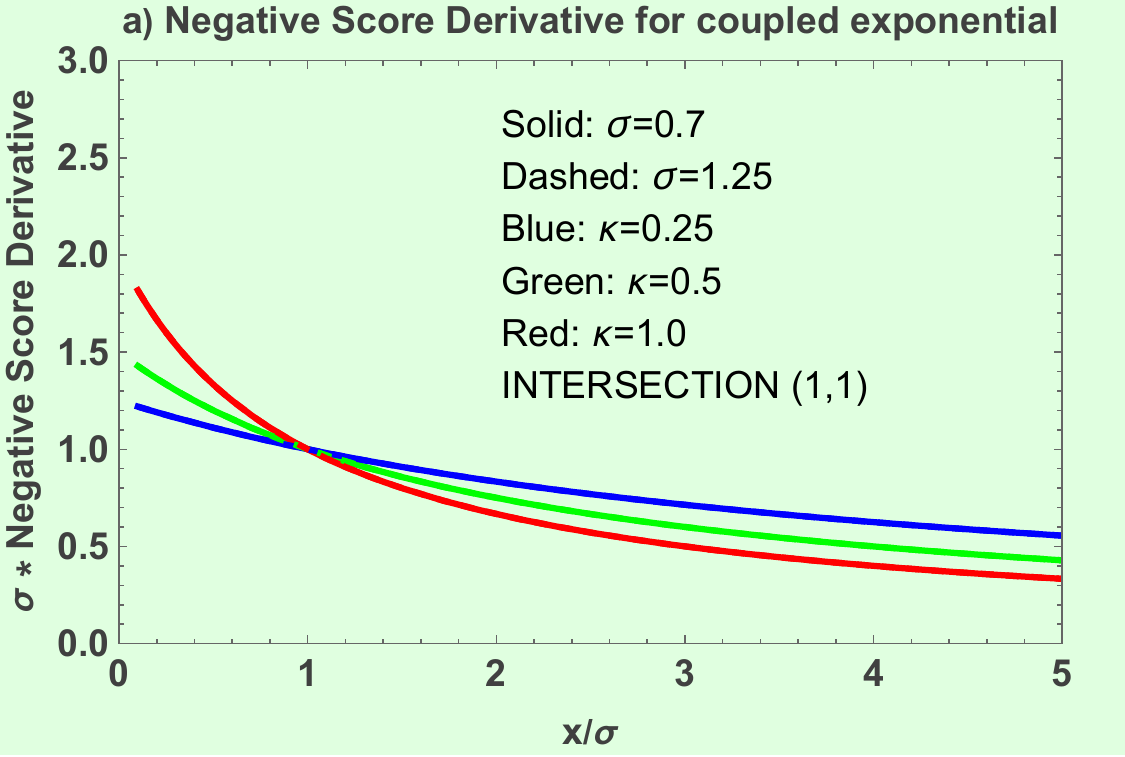}
        \label{fig_scale_b}
        }
    \caption{The negative derivative of the score function (logarithm of the distribution) shows the uniqueness of the information scale, $\sigma.$ a) The inverse-scale $(\beta)$  of the $q$-exponential does not have a common intersection. b) The information scale $(\sigma),$  when normalized, has a common intersection independent of the shape $\kappa.$ The $-\sigma \ \mathrm{d}\ln(f(x))/\mathrm{d}x$ function is not dependent on $\sigma.$ } 
    \label{fig_scale}
\end{figure*}

\subsection{Required Solution for a Generalized Entropy}\label{subsec_required}

The required solution for the entropy of nonlinear, complex systems can be established prior to considering a definition for the entropy. This is because given the derivation of the coupled exponential family from the nonlinear differential equations, and the independence of the scale from the long-range and short-range nonlinearities, the correct solution for an entropy maximizing these distributions is the uncertainty at the scale. First consider the linear $(\kappa=0)$, stretched exponential case with $h(x)=1$ and considering $d=1$. The BGS entropy is:
\begin{align}
    &H(f((x-\mu)^\alpha/\alpha;\sigma)) \nonumber \\ 
    &=-\int_{x\in\mathcal{X}} f((x-\mu)^\alpha/\alpha;\sigma)\ln\left[
    \exp\left(-\frac{(x-\mu)^\alpha}{\alpha \sigma^\alpha}-\ln Z(\sigma,\alpha)\right)
    \right]dF(x) \\
    &=\frac{1}{\alpha}+\ln Z(\sigma,\alpha). \nonumber
\end{align}
An infrequently recognized but important property of the entropy is that this is the negative logarithm of the density at the location plus the scale (at the scale, for short), 
\begin{align}
    H[f((x-\mu)^\alpha/\alpha;\sigma)]
    &= -\ln[f(\mu+\sigma; \mu,\sigma,\alpha)] \nonumber \\
    &=\frac{1}{\alpha}+\ln Z(\sigma,\alpha).
\end{align}
The entropy for the nonlinear case must preserve the structure of this solution. That is, the generalized entropy of a coupled stretched exponential distribution must be equal to the density at the scale transformed to the entropy domain with the inverse of the generalized exponential. For clarity, the solution for one dimension is shown first, followed by the extension to $d$-dimensions.
\begin{definition}[Required Generalized Entropy Solution] 
\label{def_required} 
\ \\

\begin{enumerate} 
    \item Given a 1-D coupled stretched exponential distribution, \eqref{def_CED}, 
    \begin{equation}
        f_\kappa(\mu+\sigma;\mu,\sigma,\alpha)
        = \exp_\kappa^{-(1+\kappa/\alpha)}
        \left(\frac{1}{\alpha} \oplus_\kappa 
        \ln_\kappa
        Z_\kappa(\sigma,\alpha)^\frac{1}{1+\kappa/\alpha}
        \right)
    \end{equation}
    then, the generalized entropy function is required to have the following solution:
    \begin{align}
    H_\kappa^{Required}(f_\kappa(x;\mu,\sigma,\alpha),\alpha) 
    &= \ln_\kappa\left(f_\kappa(\mu+\sigma;\sigma,\alpha)
    ^{-\frac{1}{1+\kappa/\alpha}}\right) \nonumber \\
    &=\frac{1}{\alpha} \oplus_\kappa 
        \ln_\kappa
        Z_\kappa(\sigma,\alpha)^\frac{1}{1+\kappa/\alpha} \nonumber \\
    &=\frac{1}{\alpha} + (1 + \kappa/\alpha)\ln_\kappa
    Z_\kappa(\sigma,\alpha)^\frac{1}{1+\kappa/\alpha} \nonumber \\
    &= \frac{1}{\alpha} + \ln_\frac{\kappa}{1+\kappa/\alpha}
    Z_\kappa(\sigma,\alpha).
    \end{align}
    \item For the extension to $d$-dimensions, the notation $\mathbf{x}^{\circ \frac{\alpha}{2}}$ indicates that each element of the vector or matrix  is raised to the power $\sfrac{\alpha}{2}.$ The $d$-dimensional coupled stretched exponential and its required generalized entropy solution are:
    \begin{align}
        f_\kappa((\boldsymbol{\mu+\sigma})
        ^{\circ \frac{\alpha}{2}};
        \boldsymbol{\mu}^{\circ \frac{\alpha}{2}},
        \boldsymbol{\Sigma}^{\circ \frac{\alpha}{2}})
        &= \exp_\kappa^{-(1+d_\alpha\kappa)}
        \left(d_\alpha \oplus_\kappa 
        \ln_\kappa
        Z_\kappa(\sigma,\alpha)^\frac{1}{1+d_\alpha\kappa}
        \right) \\
        H_\kappa^{Required}(f_\kappa((\boldsymbol{\mu+\sigma})
        ^{\circ \frac{\alpha}{2}};
        \boldsymbol{\mu}^{\circ \frac{\alpha}{2}},
        \boldsymbol{\Sigma}^{\circ \frac{\alpha}{2}}),\alpha,d) 
    &= \ln_\kappa\left(
    f_\kappa((\boldsymbol{\mu+\sigma})
        ^{\circ \frac{\alpha}{2}};
        \boldsymbol{\mu}^{\circ \frac{\alpha}{2}},
        \boldsymbol{\Sigma}^{\circ \frac{\alpha}{2}})
    ^{-\frac{1}{1+d_\alpha\kappa}}\right) \nonumber \\
    &=d_\alpha \oplus_\kappa 
    \ln_\kappa Z_\kappa(\sigma,\alpha)^\frac{1}{1+d_\alpha\kappa} \nonumber \\
    &=d_\alpha + (1 + d_\alpha\kappa)\ln_\kappa
    Z_\kappa(\sigma,\alpha)^\frac{1}{1+d_\alpha\kappa} \nonumber \\
    &= d_\alpha + \ln_\frac{\kappa}{1+d_\alpha\kappa}
    Z_\kappa(\sigma,\alpha).
    \end{align}
\end{enumerate}
\end{definition}
\begin{remark}
    This is the only solution that associates the generalized entropy with the density at the scale, thereby minimizing the dependence on the nonlinear statistical coupling. Furthermore, in the Applications subsection on thermodynamics \ref{subsubsec_thermo}, it will be shown that this structure corresponds to a nonlinear generalization of the relationship between the entropy, internal energy, and the partition function. In this model the generalized temperature is the informational scale. 
\end{remark}

The distinction between the required solution and the most common entropies, BGS, Rényi, Tsallis, and Normalized Tsallis, for the coupled exponential distribution is shown in Figure \ref{fig_entDensity}. Each point on a density curve represents an entropy translated to the density domain via the function $\exp_\kappa^{-(1+\kappa)}(H(f(x)).$ Further comparisons of the properties of the generalized entropies are reviewed in the Methods subsection \ref{subsec_Entropies}.

For the BGS entropy, which has a linear dependence on the shape and a logarithmic dependence on the scale, as the shape increases, its equivalent density value is further away from the scale.  The disconnect between the BGS entropy and the scale of a non-exponential distribution is the essence of why a generalized entropy is required for the uncertainty quantification of complex systems. Rényi \cite{renyiMeasuresEntropyInformation1961} took the first important step by replacing the geometric mean with the generalized mean, and Tsallis \cite{tsallisPossibleGeneralizationBoltzmannGibbs1988} and Borges \cite{borgesPossibleDeformedAlgebra2004} introduced the generalized logarithm. The proof that the requirement to measure the uncertainty at the scale is fulfilled by the coupled entropy is completed in Section \ref{subsec_CE}. 

A key element of the required entropy solution is that a moment with respect to the expression $\frac{(x-\mu)^\alpha}{\alpha\sigma^\alpha}$ must utilize a modified distribution such that its integral, equivalent to a generalization of the $\alpha-$moment, is equal to $\sfrac{1}{\alpha}$ even for non-exponential distributions. This is fulfilled by the Independent Equals distribution, known in the nonextensive statistical mechanics literature as the escort distribution. The details of this component of the coupled entropy function and its constraints will be reviewed next.  
\begin{figure}
    \centering
    \includegraphics[width=0.75\linewidth,page=1]{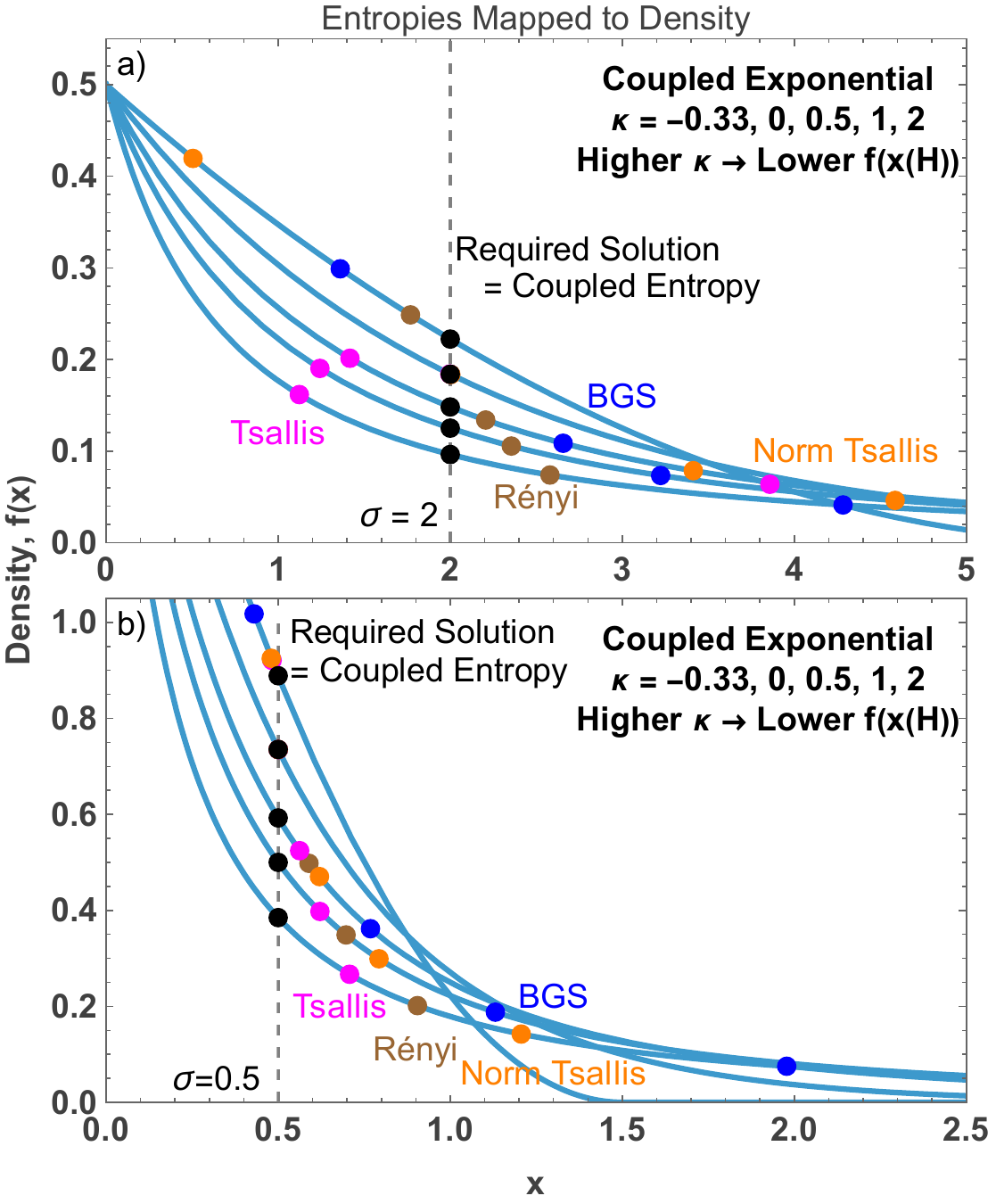}
    \caption{Entropies of the coupled exponential distribution are mapped to points on the density. Coupling values of -0.33, 0, 0.5, 1, and 2 are displayed. The required solution satisfied by the coupled entropy aligns along the scale, a) $\sigma=2,$ and b) $\sigma=0.5.$ Higher entropy values correspond with lower density values and a higher value of $x.$ For high coupling values, the BGS entropy corresponds to a value $x$ far from the scale. The Rényi entropy lowers the entropy measure for heavy-tailed distributions via use of the generalized mean. Tsallis further improved the measure via use of a generalized logarithm. Although the normalized Tsallis is structurally closer to the correct solution its high measure of entropy is unsuitable. All of the entropies converge for $\kappa=0.$}
    \label{fig_entDensity}
\end{figure}

\subsection{Independent Equals Moments} \label{subsec_IE}
In the heavy-tailed domain, $\kappa>0,$ the moments, $\mu_m=\int_{x\in\mathcal{X}}x^mp(x)dF(x),$ of the coupled exponential distributions are either undefined or divergent for $\kappa\geq\sfrac{\alpha}{m}.$ Nevertheless, the independent equals probability or density\cite{beckThermodynamicsChaoticSystems1993,ferriRoleConstraintsTsallis2005,nelsonIndependentApproximatesEnable2022,al-najafiIndependentApproximatesProvide2026}, in which the probability or density is raised to a power $(q)$ and renormalized, has been shown to define the distribution of $q$ independent random variables that share the same state. This clear physical property for the parameter $q(\kappa,\alpha=1,m)=1+\frac{\kappa}{1+\kappa/m}$, has not been discussed in the $q$-statistics literature in part because it clarifies that $q$ is a secondary rather than primary property of complex systems. When the moment matches the stretching parameter, $\alpha$, the measure is the power of the informational scale, $\sigma^\alpha$. Throughout the text, the Lebesgue–Stieltjes integral is used to provide a concise expression of both discrete and continuous distributions.
\begin{lemma}[Independent Equal Moments]
\label{def_IEM}
\hspace{1em}
\begin{enumerate}
    \item Given a distribution with nonlinear statistical coupling, $\kappa$, the independent equals distribution with power $q(\kappa,m)=1+\frac{\kappa}{1+\kappa/m},$
\begin{equation} \label{equ_IE}
    f/^{1+\frac{\kappa}{1+\kappa/m}} \equiv 
    \frac{f(x)^{1+\frac{\kappa}{1+\kappa/m}}}{\int_{x\in X}f(x)^{1+\frac{\kappa}{1+\kappa/m}}dF(x)} ,
\end{equation}
has a modified coupling of $\kappa'=\frac{\kappa}{1+\kappa}$ and thus finite moments for $m \leq \alpha \leq \alpha\frac{1+\kappa}{\kappa}$.
\newpage
    \item Given $X\sim f_\kappa(x;\mu,\sigma,\alpha)$ distributed as a CSED equation \eqref{def_CSED} the independent equals moment in which $m=\alpha$ is equal to the informational scale raised to the power $\alpha$:
\begin{align} \label{equ_IEmoment}
    \mu_m^{(1+\frac{\kappa}{1+\kappa/m})} &\equiv E_{(1+\frac{\kappa}{1+\kappa/m)}}\left[X^m\right]
    \equiv \int_X x^m f/^{1+\frac{\kappa}{1+\kappa/m}}dF(x) \\
    \mu_\alpha^{(1+\frac{\kappa}{1+\kappa/\alpha})} &= \sigma^\alpha
\end{align}
\item Given $X\sim f_\kappa(x;\mu,\sigma,\alpha)$ distributed as a coupled Weibull distribution, equation \eqref{equ_cwPDF}, the $\alpha$ moment and its independent equals moment are:
    \begin{align}
        \mu_\alpha^{(1)}&=\mathrm{E}[X^\alpha]=\sigma^\alpha,
        \text{ for } -\alpha<\kappa<1\\
        \mu_\alpha^{(1+\frac{\kappa}{1+\kappa/\alpha})}&=\mathrm{E}_{1+\frac{\kappa}{1+\kappa/\alpha}}[X^\alpha]=\frac{\sigma^\alpha}{1+\kappa},
        \text{ for } \kappa>-1
    \end{align}
\end{enumerate}
\end{lemma}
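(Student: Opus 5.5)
The plan is to reduce every claim to Beta-function integrals in the radial variable $u=\frac{|x-\mu|^\alpha}{\alpha\sigma^\alpha}$, taking $\mu=0$ and $d=1$ throughout. The key structural fact is that powers of a coupled exponential are again coupled exponentials, $\exp_\kappa^a(x)=\exp_{\kappa/a}(ax)$. So raising the CSED density to a power should give another CSED with a modified coupling and a rescaled scale, and no new integrals are needed beyond those already encoded in the partition function $Z_\kappa$ of Definition~\ref{def_CSED}.

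\textbf{Part 1.} Write $f\propto(1+\kappa u)^{-(1+\kappa/\alpha)/\kappa}$ and multiply the exponent by $q=1+\frac{\kappa}{1+\kappa/\alpha}$, i.e.\ take $m=\alpha$. The new exponent is $-\left(\frac1\kappa+\frac1\alpha+1\right)$. I will match this to the standard form $(1+\kappa' v)^{-(1/\kappa'+1/\alpha)}$ under the constraint $\kappa' v=\kappa u$. That forces
\begin{equation*}
\frac{1}{\kappa'}=\frac{1+\kappa}{\kappa},\qquad \kappa'=\frac{\kappa}{1+\kappa},\qquad v=(1+\kappa)u .
\end{equation*}
Hence $f/^{q}$ is a CSED with coupling $\kappa'$ and scale $\sigma'^\alpha=\sigma^\alpha/(1+\kappa)$. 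By Lemma~\ref{lem_CSED}(I) its tail decays like $|x|^{-\alpha/\kappa'-1}$. The $m$-th moment is therefore finite exactly when $m<\alpha/\kappa'=\alpha\frac{1+\kappa}{\kappa}$, which is the stated range. Since $\kappa'<1$ for every $\kappa>-1$, the case $m=\alpha$ always lies inside it. For general $m$ the same matching goes through, but $\kappa'$ then depends on $m$; I would state the $m=\alpha$ computation explicitly and note how the general case follows.

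\textbf{Part 2.} Change variables to $u$. A CSED with coupling $c$ has density in $u$ proportional to $u^{1/\alpha-1}(1+cu)^{-(1/c+1/\alpha)}$. Substituting $t=cu/(1+cu)$ gives
\begin{equation*}
E[cu]=\frac{B\left(\frac1\alpha+1,\frac1c-1\right)}{B\left(\frac1\alpha,\frac1c\right)}=\frac{1/\alpha}{1/c-1},\qquad\text{so}\qquad E[u]=\frac{1}{\alpha(1-c)}.
\end{equation*}
Applying this to the escort distribution with $c=\kappa'$ and scale $\sigma'$ gives
\begin{equation*}
E_q[|X|^\alpha]=\alpha\sigma'^\alpha\cdot\frac{1}{\alpha(1-\kappa')}=\frac{\sigma^\alpha}{1+\kappa}\,(1+\kappa)=\sigma^\alpha .
\end{equation*}
The compact-support case $\kappa<0$ uses the other branch of $Z_\kappa$, with $t=-\kappa u$. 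The same identity $B(a+1,b-1)/B(a,b)=a/(b-1)$, or its $B(a+1,b)/B(a,b)$ analogue, should give an identical result, and I would check it separately.

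\textbf{Part 3.} For the coupled Weibull distribution I would use $E[X^\alpha]=\int_0^\infty S(x)\,\mathrm{d}(x^\alpha)$ with $S=\exp_\kappa^{-1}(\cdot)$. This is an elementary integral of $(1+\kappa y)^{-1/\kappa}$, finite only for $\kappa<1$. The escort moment follows by repeating the Part 1 trick on the PDF \eqref{equ_cwPDF}: the power $q$ shifts the tail exponent and introduces a factor $(1+\kappa)^{-1}$, exactly as the rescaling $v=(1+\kappa)u$ did above.

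\textbf{Main obstacle.} I expect the difficulty to be bookkeeping rather than analysis. The paper writes the Weibull survival function as $(1+\kappa r^\alpha)^{-1/\kappa}$ but the density with $r^\alpha/\alpha$, and the prefactor $r^{(\alpha-1)/2}$ interacts with the power $q$. The escort density is then no longer a pure Weibull form, so I would compute its normalization directly as a Beta integral. I would first fix a single consistent scale convention, chosen so that $E[X^\alpha]=\sigma^\alpha$. I would then verify that the $q$-escort yields $\sigma^\alpha/(1+\kappa)$, adjusting the stated parameter ranges to wherever those Beta functions converge.
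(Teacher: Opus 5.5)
Your Parts 1 and 2 are correct and follow the paper's route. You recognise the escort $f/^{q}$ as a CSED with $\kappa'=\frac{\kappa}{1+\kappa}$ and $\sigma'^\alpha=\frac{\sigma^\alpha}{1+\kappa}$, then use $E|X|^\alpha=\sigma'^\alpha/(1-\kappa')=\sigma^\alpha$. Restricting to $m=\alpha$ is right, since for other $m$ the paper's own formula gives a different $\kappa'$. Your exact exponent matching improves on the paper's tail-asymptotic argument, and your Beta computation $E[u]=\frac{1}{\alpha(1-c)}$ supplies the factor the paper only asserts; it also holds on the $c<0$ branch.

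The gap is Part 3. The step ``the power $q$ \dots\ introduces a factor $(1+\kappa)^{-1}$, exactly as the rescaling $v=(1+\kappa)u$ did'' overlooks that in Part 2 that factor is cancelled by $1/(1-\kappa')=1+\kappa$. At $\alpha=1$ the coupled Weibull is the coupled exponential, as the paper's definition states, and that is the $\alpha=1$ CSED. So your own Part 2 gives $E_q[X]=\sigma$, not $\sigma/(1+\kappa)$. Likewise, your integral $\int_0^\infty S\,\mathrm{d}(x^\alpha)$ with $S=\exp_\kappa^{-1}((x/\sigma)^\alpha)$ evaluates to $\sigma^\alpha/(1-\kappa)$, not $\sigma^\alpha$.

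Your fallback, fixing a scale convention with $E[X^\alpha]=\sigma^\alpha$ and then verifying $\sigma^\alpha/(1+\kappa)$, cannot succeed either. The ratio $E_q[X^\alpha]/E[X^\alpha]$ does not depend on the scale. At $\alpha=1$ it equals $1-\kappa$, while the two claimed identities force $1/(1+\kappa)$, so they are incompatible for every $\kappa\neq0$.

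For general $\alpha$ with $S=\exp_\kappa^{-1}((x/\sigma)^\alpha)$, the escort in $w=(x/\sigma)^\alpha$ is $\propto w^{a-1}(1+\kappa w)^{-(a+b)}$. Here $a=\frac{1+q(\alpha-1)}{\alpha}$ and $a+b=\frac{q(1+\kappa)}{\kappa}$. One checks that $\kappa(b-1)=1$ identically. Hence, for $\kappa>0$,
$E_q[X^\alpha]=\sigma^\alpha\frac{a}{\kappa(b-1)}=\frac{1+\kappa}{1+\kappa/\alpha}\,\sigma^\alpha$, again not $\sigma^\alpha/(1+\kappa)$. So Part 3 is false as stated, and carrying out your plan honestly yields a counterexample and corrected formulas, not a proof.

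The paper's own Part 3 argument offers no way to repair this. It assumes the escort of a coupled Weibull is again a coupled Weibull with $(\kappa',\sigma')$, which holds only for $\alpha=1$; you noticed this yourself in your ``main obstacle.'' It then evaluates a Weibull antiderivative at $0$ and a Pareto antiderivative at $\infty$, which is not a computation of the integral.
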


\begin{proof}
\ \\
    \begin{enumerate}
        \item The asymptotic distribution is a Type I Pareto distribution (pure power-law), $f(x) \sim C x^{-(\frac{1}{\kappa}+1)}$ as $x\rightarrow\infty$, and therefore, the independent equals distribution is $F^{(1+\frac{\kappa}{1+\kappa/m})}\sim C'x^{-\frac{m+(1+m)\kappa}{\kappa}}$. Therefore,
\begin{align}
    \frac{1}{\kappa'}+\frac{1}{\alpha} &=
    \left(\frac{1}{\kappa}+\frac{1}{\alpha}\right)
    \left(1+\frac{\kappa}{1+\kappa/m}\right) \nonumber \\
     \frac{1}{\kappa'} &= \frac{1}{\kappa} +\frac{1}{1+\kappa/m} 
     + \frac{\kappa/\alpha}{1+\kappa/m} \nonumber \\
     \frac{1}{\kappa'} &= \frac{1+\kappa/m +\kappa(1+\kappa/\alpha)}
     {\kappa(1+\kappa/m)} \nonumber \\
     \kappa' &= \frac{\kappa(1+\kappa/m)}{1+\kappa/m +\kappa(1+\kappa/\alpha)} \nonumber \\
     \kappa' &= \frac{\kappa}{1+\kappa} \text{  if } m=\alpha
\end{align}
          and thus, the finite moments are $m \leq \alpha \leq \alpha\frac{1+\kappa}{\kappa}$.
        \item Given that the coupling of the independent equals distribution of the coupled stretched exponential is modified to $\kappa'=\frac{\kappa}{1+ \kappa}$ and the ratio $\frac{\kappa}{\sigma^\alpha}$ is unchanged, then $\sigma'^\alpha=\frac{\sigma^\alpha}{1+\kappa}$. Therefore,
    \begin{align}
        \mu_\alpha&=\int_X x^\alpha f_{\kappa'}(\mu,\sigma',\alpha)\mathrm{d}x=\sigma'^\alpha\frac{1}{1-\kappa'} \nonumber \\
        &=\frac{\sigma^\alpha}{1+\kappa}\frac{1+\kappa}{1}=\sigma^\alpha.
    \end{align}
    \item Likewise, the independent equals distribution of the coupled Weibull has a modified coupling and scale of $\kappa'=\frac{\kappa}{1+ \kappa}$ and $\sigma'^\alpha=\frac{\sigma^\alpha}{1+\kappa}$, respectively. Given that the coupled Weibull distribution approximates a Weibull distribution for $x\rightarrow0$, and a Type-I Pareto distribution for $x\rightarrow\infty$, then  
    \begin{align}
        \mu_\alpha^{(1+\frac{\kappa}{1+\kappa/\alpha})} &= \int_0^\infty x^\alpha f_{\kappa'}(0,\sigma',\alpha)\mathrm{d}x \nonumber \\
        &\approx \left[ -\frac{C \kappa'}{\alpha} x^{-\frac{\alpha}{\kappa'}} \right]_{x \to \infty} 
        - \left[ -\sigma'^\alpha\Gamma\left(2,\left(\frac{x}{\sigma'}\right)^\alpha\right) \right]_{x \to 0} \nonumber \\
        &= 0 + \sigma'^\alpha = \frac{\sigma^\alpha}{1+\kappa} \quad \text{for } \kappa \geq -1.
    \end{align}

    Notice that the independent equals moment is proportional to the modified scale. Therefore, the unmodified moment is proportional to the scale but over a restricted heavy-tailed domain:
    \begin{align}
        \mu_\alpha^{(1)}&=\int_0^\infty x^\alpha f_\kappa(0,\sigma,\alpha)\mathrm{d}x \nonumber \\
          &\approx \left[ -\frac{C \kappa}{\alpha} x^{\alpha-\frac{\alpha}{\kappa}} \right]_{x \to \infty} 
        - \left[ -\sigma^\alpha\Gamma\left(2,\left(\frac{x}{\sigma}\right)^\alpha\right) \right]_{x \to 0} \nonumber \\
        &= \sigma^\alpha 
        \text{ for } -\alpha <\kappa < 1.
    \end{align}
    
    \end{enumerate}
\end{proof}

\subsection{Uniqueness of the Coupled Entropy}\label{subsec_CE}

The coupled entropy was first discussed in \cite{nelsonNonlinearStatisticalCoupling2010}, defined with the independent-equals distribution in \cite{nelsonReducedPerplexitySimplified2020}, and assessed in comparison with other metrics in \cite{nelsonAverageUncertaintySystems2017}. In this section, the coupled entropy will be proven to uniquely satisfy the requirement to measure the uncertainty at the scale. In the next section first principle axioms are defined. Derivations using either variational methods \cite{wangProbabilityDistributionEntropy2008, hanelGeneralizedEntropiesLogarithms2012} , or the extensivity and group-theoretic requirements specified by Tempesta \cite{tempestaUniversalityClassesInformationTheoretic2020} would also be valuable contributions. A full proof that the coupled entropy is maximized by the CSEDs is completed in the Methods subsection  \ref{subsec_maxCE}. 

First, the structure of the CSEDs equation \eqref{def_CSED}, which has a different multiplicative term $(\kappa)$ than its exponential term $\left(-\frac{1+d_\alpha\kappa}{\kappa}\right)$, dictates that the coupled logarithm for the coupled entropy inverts these terms, $\ln_{\kappa} x^{-\frac{1}{1+d_\alpha\kappa}}=$$\frac{1}{\kappa}\left(x^{-\frac{\kappa}{1+d_\alpha\kappa}}-1\right).$ Second, putting aside the partition function (normalization) for the moment, the inversion of the coupled exponential leaves an average over the elements of the argument  $\left(\frac{1}{\alpha}x_i^\frac{\alpha}{2} \sigma_{ij}^{-\frac{\alpha}{2}} x_j^\frac{\alpha}{2}\right)$.  If this average is taken over the distribution, then the resulting $\alpha-$moment will be dependent on the shape and only finite for $\kappa<1.$ Thus, the independent equals moment of equation  \eqref{def_IEM} is required. Each of these moments over the $d^2$ terms cancels its $\Sigma_{ij}^{-1}$ term, leaving $d_\alpha.$ There is a third issue regarding the dependence of the solution on the power $u^\alpha$, which will be addressed as an optional, non-trace power of $\frac{1}{\alpha}$.

\begin{definition}[The Coupled Entropy]
    Given a $d$-dimensional random variable $\mathbf{X}$ over the sample space, $\mathcal{X}$, a coupling parameter that constitutes the long-range dependence, $\kappa > -\frac{1}{1+d_\alpha},$  and a stretching parameter that constitutes the short-range dependence, $\alpha>0$, the trace-form of the coupled entropy is defined as: 
    \begin{align} \label{equ_CEtrace}
        & H_\kappa(\mathbf{X};\alpha,d) \nonumber \\
        &= \left\{
        \begin{matrix}
        \int_{\mathbf{x}\in\mathcal{X}}
        f/^\frac{1+(1+d_\alpha)\kappa}{1+d_\alpha\kappa}(\mathbf{x})
        \ln_{\kappa}f^{-\frac{1}{1+d_\alpha\kappa}}(\mathbf{x})
        dF(\mathbf{x})
        & \kappa \neq 0, \kappa > -\frac{1}{1+d_\alpha} \\
         -\int_{\mathbf{x}\in\mathcal{X}}
         f(\mathbf{x})\ln f(\mathbf{x})dF(\mathbf{x})
         & \kappa=0
        \end{matrix}\right. \nonumber \\
        &= 
        \left\{
        \begin{matrix}
        \ln_{\kappa}\left(
        \int_{\mathbf{x}\in\mathcal{X}}
        f/^{\frac{1+(1+d_\alpha)\kappa}{1+d_\alpha\kappa}}(\mathbf{x})
        f/^\frac{\kappa}{1+d_\alpha\kappa}(\mathbf{x})
        dF(\mathbf{x}) \right)
        ^{-\frac{1}{\kappa}}
        & \kappa \neq 0, \kappa > -\frac{1}{1+d_\alpha} \\
        -\int_{\mathbf{x}\in\mathcal{X}}
         f(\mathbf{x})\ln f(\mathbf{x})dF(\mathbf{x}) 
         & \kappa=0
        \end{matrix}
        \right.   \\ 
   \nonumber \end{align} 
The first set of expressions shows the generalized mean (geometric mean for $\kappa=0$) within the coupled logarithm, while the second set of expressions shows the arithmetic average outside the coupled logarithm. The equivalence of the expressions is derived in the Methods subsection \ref{subsec_Entropies}. For discrete distributions, the generalized logarithm of the generalized mean converges to the logarithm of the geometric mean, though that form cannot be expressed with integrals.

A non-trace power-factor will be shown to facilitate linear extensivity:
\begin{equation} \label{equ_CEnontrace}
    \mathcal{H}_\kappa(\mathbf{X};\alpha,d)
    = H_\kappa(\mathbf{X};\alpha,d)^\frac{1}{\alpha}
\end{equation}

\end{definition}

\begin{remark}[Structure \lowercase{o}f the Coupled Entropy]
\leavevmode\newline
The exponent of $\sfrac{1}{\alpha}$ for the non-trace coupled entropy is defined outside the coupled logarithm and summation, which follows the definition of \cite{tempestaMultivariateGroupEntropies2020, tempestaUniversalityClassesInformationTheoretic2020}, since preservation of the trace-form solution of the coupled logarithm is critical to achieving the required solution for the coupled stretched exponentials \ref{def_required}.
\end{remark}

\begin{remark}[Domain of the Coupled Entropy]
    While the coupled stretched exponential distributions have a domain, $\kappa>-\frac{\alpha}{d}$, defined by requirement of integration, the coupled entropy has a more restricted domain, $\kappa>-\frac{1}{1+d/\alpha}$, defined by the requirement that entropy be concave.
\end{remark}

\begin{theorem}[Uniqueness: Coupled Entropy \lowercase{o}f Coupled Stretched Exponentials]
    \label{lem_unique}
    Given a $d$-dimensional random variable distributed as a coupled stretched exponential, $\mathbf{X} \sim f_\kappa(\mathbf{x}; \boldsymbol{\mu},\boldsymbol{\Sigma},\alpha,d)$, as defined in equation \eqref{def_CSED}, then the coupled entropy (in trace and non-trace forms) with matching values of the $\kappa,\alpha,\text{and }d$ is 
\begin{align}
    H_\kappa (\mathbf{X};\alpha,d) &= d_\alpha + \ln_\frac{\kappa}{1+d_\alpha\kappa}Z_\kappa(\sigma,\alpha,d) \\
   \mathcal{H}_\kappa (\mathbf{X};\alpha,d) &= \left( d_\alpha + \ln_\frac{\kappa}{1+d_\alpha\kappa}Z_\kappa(\sigma,\alpha,d) \right)^\frac{1}{\alpha}
\end{align} 
The trace form is the density of the distribution at the radius $r=\frac{1}{\alpha}\left(((\mathbf{x}-\boldsymbol{\mu})^{\circ \sfrac{\alpha}{2} \top} 
(\boldsymbol{\Sigma}^{\circ \sfrac{\alpha}{2}})^{-1}(\mathbf{x}-\boldsymbol{\mu})^{\circ \sfrac{\alpha}{2}}\right)^\frac{1}{\alpha}$ which for example occurs when $x_i=\mu_i+\sigma_{ii}$ for all $i.$ The non-trace forms assures that in limit of infinite coupling the entropy is proportional to the partition function,
\begin{align}
    \lim\limits_{\kappa\rightarrow\infty} \mathcal{H}_\kappa(\mathbf{X};\alpha,d)= \left(\frac{d}{\alpha}\right)^\frac{1}{\alpha} Z_\kappa(\sigma,\alpha,d)^\frac{1}{d}.
\end{align}
\end{theorem}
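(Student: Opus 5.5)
The computation will be done directly in the trace form \eqref{equ_CEtrace}. First substitute the CSED density of Definition \ref{def_CSED}, $f=Z_\kappa^{-1}\exp_\kappa^{-(1+d_\alpha\kappa)}(r^\alpha/\alpha)$, into the coupled logarithm. The inversion pseudo-algebra of Section \ref{subsubsec_CPA} gives $\ln_\kappa f^{-\frac{1}{1+d_\alpha\kappa}} = \frac{r^\alpha}{\alpha}\oplus_\kappa \ln_\kappa Z_\kappa^{\frac{1}{1+d_\alpha\kappa}}$. This is exactly the argument identified in Definition \ref{def_required}. Expanding the coupled sum gives
\begin{align}
\ln_\kappa f^{-\frac{1}{1+d_\alpha\kappa}} = \frac{r^\alpha}{\alpha}\left(1+\kappa\ln_\kappa Z_\kappa^{\frac{1}{1+d_\alpha\kappa}}\right) + \ln_\kappa Z_\kappa^{\frac{1}{1+d_\alpha\kappa}} .
\end{align}
This expression is affine in $r^\alpha/\alpha$. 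The trace-form entropy is therefore $A\,\mathrm{E}_{q}[r^\alpha/\alpha]+B$, where the expectation is taken under the independent-equals (escort) density with power $q=\frac{1+(1+d_\alpha)\kappa}{1+d_\alpha\kappa}=1+\frac{\kappa}{1+d_\alpha\kappa}$.

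The second step evaluates $\mathrm{E}_q[r^\alpha/\alpha]$. I will invoke Lemma \ref{def_IEM}, but in its $d$-dimensional form. Raising the CSED to the power $q$ multiplies the exponent $-(1+d_\alpha\kappa)/\kappa$ by $q$, giving $-(1+(1+d_\alpha)\kappa)/\kappa$. Rewrite this as a CSED with coupling $\kappa'=\kappa/(1+\kappa)$ and scale $\sigma'^\alpha=\sigma^\alpha/(1+\kappa)$, so that $\kappa/\sigma^\alpha$ is preserved. To check this, verify that $(1+d_\alpha\kappa')/\kappa' = (1+(1+d_\alpha)\kappa)/\kappa$; it does. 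Then compute $\mathrm{E}[r^\alpha/\alpha]$ for the modified CSED radially, using the beta-function partition functions of Definition \ref{def_CSED}. With $u=r^\alpha/\alpha$, the radial density is proportional to $u^{d_\alpha-1}(1+\kappa' u)^{-(1+d_\alpha\kappa')/\kappa'}$ in scaled units. Its mean is $d_\alpha/(1-\kappa')$ times the scale factor $1/(1+\kappa)$, which equals $d_\alpha$. This is the claim that each of the $d^2$ moments cancels its $\Sigma^{-1}_{ij}$ entry, and it should hold for both signs of $\kappa$ within $\kappa>-1/(1+d_\alpha)$. The beta identity $B(a+1,b-1)/B(a,b)=a/(b-1)$ covers $\kappa>0$, and the analogous identity covers $\kappa<0$.

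Substituting $\mathrm{E}_q[r^\alpha/\alpha]=d_\alpha$ gives
\begin{align}
H_\kappa = d_\alpha\oplus_\kappa\ln_\kappa Z_\kappa^{\frac{1}{1+d_\alpha\kappa}} = d_\alpha+\ln_{\frac{\kappa}{1+d_\alpha\kappa}}Z_\kappa ,
\end{align}
where the last equality is the algebra already displayed in Definition \ref{def_required}. The non-trace form then follows by raising to the power $1/\alpha$ via \eqref{equ_CEnontrace}. The interpretation as the density at radius $r$ with $r^\alpha/\alpha=d_\alpha$ follows immediately: the entropy equals $\ln_\kappa f^{-1/(1+d_\alpha\kappa)}$ evaluated where the argument is $d_\alpha$.

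The limit $\kappa\to\infty$ is the step I expect to be the main obstacle. The expansion is $\ln_{\kappa/(1+d_\alpha\kappa)}Z=\frac{1+d_\alpha\kappa}{\kappa}\left(Z^{\kappa/(1+d_\alpha\kappa)}-1\right)$. As $\kappa\to\infty$, the exponent tends to $1/d_\alpha=\alpha/d$ and the prefactor tends to $d_\alpha$. This gives $H\to d_\alpha+d_\alpha(Z^{\alpha/d}-1)=d_\alpha Z^{\alpha/d}$, so $\mathcal{H}\to d_\alpha^{1/\alpha}Z^{1/d}$, as stated. The delicacy is that $Z_\kappa$ itself depends on $\kappa$, through the factor $(\kappa/\alpha)^{-d_\alpha}B(d_\alpha,1/\kappa)$. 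The limit should therefore be read with $Z_\kappa$ kept as the $\kappa$-dependent partition function, and I will need to check that the convergence of the exponent and prefactor is uniform enough that no extra terms survive. Since $Z_\kappa$ stays finite and positive, the ratio of the two sides tends to one.
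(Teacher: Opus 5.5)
Your derivation of the closed form is correct and follows the paper's structure. The coupled logarithm of the CSED is $\frac{r^\alpha}{\alpha}\oplus_\kappa\ln_\kappa Z_\kappa^{1/(1+d_\alpha\kappa)}$, which is affine in $u=r^\alpha/\alpha$. Everything therefore reduces to showing that the independent-equals expectation of $u$ is $d_\alpha$, and the coupled-sum algebra of the required-solution definition does the rest.

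You differ only in how you evaluate that expectation. The paper computes it directly in radial coordinates as a Beta ratio, $B(d_\alpha+1,\frac1\kappa)/\big(\kappa\,B(d_\alpha,\frac1\kappa+1)\big)=d_\alpha$ (up to typos in its display), and only for $\kappa>0$. You instead recognize the escort as a CSED with $\kappa'=\kappa/(1+\kappa)$ and $\sigma'^\alpha=\sigma^\alpha/(1+\kappa)$, which your exponent check confirms in any dimension, and then use the ordinary mean $d_\alpha/(1-\kappa')$. This costs one extra verification but gains two things:
\begin{itemize}
\item It covers $-\frac{1}{1+d_\alpha}<\kappa<0$ by the same argument, whereas the paper's substitution $t=\kappa u$ on $(0,\infty)$ does not.
\item It explains the domain. $\kappa>-\frac{1}{1+d_\alpha}$ is exactly the condition that $\kappa'$ lies in the CSED range $\kappa'>-\frac{1}{d_\alpha}$, and $\kappa'<1$ for all $\kappa>0$ is why the moment is always finite.
\end{itemize}

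The limit step, however, rests on a false claim: $Z_\kappa$ does not stay finite and positive. From the CSED partition function, $Z_\kappa\propto\frac1\alpha(\kappa/\alpha)^{-d_\alpha}B(d_\alpha,\frac1\kappa)\sim\alpha^{d_\alpha-1}\kappa^{1-d_\alpha}$, since $B(d_\alpha,\frac1\kappa)\sim\kappa$. It has a finite nonzero limit only when $d=\alpha$; for the coupled Gaussian it grows like $\kappa^{1/2}$. The paper simply holds $Z_\kappa$ fixed when taking the limit, so your concern is legitimate; it is your resolution that fails.

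The statement must be read as $\mathcal{H}_\kappa/\big(d_\alpha^{1/\alpha}Z_\kappa^{1/d}\big)\to1$, and proving this needs a growth bound on $Z_\kappa$. Write $H_\kappa=\frac{1+d_\alpha\kappa}{\kappa}Z_\kappa^{\kappa/(1+d_\alpha\kappa)}-\frac1\kappa$ and use $\frac{\kappa}{1+d_\alpha\kappa}=\frac1{d_\alpha}-\frac{1}{d_\alpha(1+d_\alpha\kappa)}$ to get
\[
\frac{H_\kappa}{d_\alpha Z_\kappa^{1/d_\alpha}}
=\frac{1+d_\alpha\kappa}{d_\alpha\kappa}\,
\exp\!\Big(-\frac{\ln Z_\kappa}{d_\alpha(1+d_\alpha\kappa)}\Big)
-\frac{1}{d_\alpha\,\kappa\,Z_\kappa^{1/d_\alpha}} .
\]
Because $\ln Z_\kappa=O(\ln\kappa)$, the exponential tends to $1$. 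Because $\kappa Z_\kappa^{1/d_\alpha}\sim c\,\kappa^{1/d_\alpha}\to\infty$, the last term vanishes. Taking the $1/\alpha$ power then gives the stated asymptotic. So the conclusion holds, but because $\ln Z_\kappa$ grows sublinearly in $\kappa$, not because $Z_\kappa$ is bounded.
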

\begin{proof}
    Without loss of generality, $\boldsymbol{\mu}$ is set to zero. The CSED in radial form is:
    \begin{align}
        f_\kappa(\mathbf{x}) &= 
        \frac{1}{Z_\kappa}\exp_{\kappa}^{-(1+d_\alpha\kappa)}
        \left(\frac{r^\alpha}{\alpha}\right) \nonumber \\
        r^\alpha &= (\mathbf{x}^{\circ \sfrac{\alpha}{2} \top} 
(\boldsymbol{\Sigma}^{\circ \sfrac{\alpha}{2}})^{-1}\mathbf{x}^{\circ \sfrac{\alpha}{2}},
    \end{align}
    where $\circ$ indicates that the power is applied to each element of the vector or matrix. The coupled entropy for $\kappa \neq 0$ is
     \begin{align}
        H_\kappa(f_\kappa(\mathbf{x}))&=\bigintssss_\Omega F^{(\frac{1+(1+d_\alpha)\kappa}{1+d_\alpha\kappa})}(\mathbf{x})
        \ln_{\kappa}f^{-\frac{1}{1+d_\alpha\kappa}}(\mathbf{x})\mathrm{d}\mathbf{x}.
    \end{align}
    The coupled logarithm of the density reduces to
    \begin{align}
        \ln_{\kappa}f^{-\frac{1}{1+d_\alpha\kappa}}(\mathbf{x})
        &= \ln_{\kappa}\left[
        \frac{1}{Z_\kappa}\exp_{\kappa}^{-(1+d_\alpha\kappa)}
        \left(\frac{r^\alpha}{\alpha}\right)
        \right]^{-\frac{1}{1+d_\alpha\kappa}} \nonumber \\
        \ln_{\kappa}Z_\kappa^\frac{1}{1+d_\alpha\kappa}
        &\oplus_{\kappa}\frac{r^\alpha}{\alpha}
    \end{align}
    The independent-equals expectation of $u=r^\alpha/\alpha$ is $d_\alpha$, as derived below. In radial-angular coordinates, the volume element is $\mathrm{d}\mathbf{x}= C_{\mathbf{{\Sigma}},\alpha}u^{d_\alpha -1}\mathrm{d}u\quad\mathrm{d}\Omega$, where $C$ is a constant that depends on $\mathbf{\Sigma}$ and $\alpha$. Since the distribution does not depend on the angular variable and its integral appears in both the numerator and denominator, this factor cancels out of the expectation. Thus the independent-equals $\alpha$-moment is equal to 
    \begin{align}
        \mathbb{E}_{f/^{1+\frac{\kappa}{1+d_\alpha\kappa}}}[u]
        &= \frac{\int_{u\in U} u 
        \exp_{\kappa}^{-(1+d_\alpha\kappa)}
        \left(u \oplus_{\kappa} 
        \ln_{\kappa} Z_\kappa^\frac{1}{1+d_\alpha\kappa}\right)
        C_{\mathbf{{\Sigma}},\alpha}u^{d_\alpha-1}\mathrm{d}u}
        {\int_{u\in U}  
        \exp_{\kappa}^{-(1+d_\alpha\kappa)}
        \left(u \oplus_{\kappa} 
        \ln_{\kappa} Z_\kappa^\frac{1}{1+d_\alpha\kappa}\right)
        C_{\mathbf{{\Sigma}},\alpha}u^{d_\alpha-1} \mathrm{d}u}.
    \end{align}
    Removing the shared constants from the numerator and denominator reduces the expectation to 
    \begin{align}
        \mathbb{E}_{f/^{1+\frac{\kappa}{1+d_\alpha\kappa}}}[u]
        &= \frac{\int_{u\in U} u^{d_\alpha} 
        \exp_{\kappa}^{-(1+d_\alpha\kappa)}
        \left(u\right)
        \mathrm{d}u}
        {\int_{u\in U} u^{d_\alpha-1}
        \exp_{\kappa}^{-(1+{d_\alpha} \kappa)}
        \left(u\right)
        \mathrm{d}u}.
    \end{align}
Expanding the coupled exponential and substituting $t=u\kappa$ clarifies the integrals to be a Beta function
\begin{align}
        \mathbb{E}_{f/^{1+\frac{\kappa}{1+d_\alpha\kappa}}}[u]
        &= \frac{\int_0^\infty t^{d_\alpha} 
        \left(1+t\right)^{-\frac{1+d_\alpha\kappa}{\kappa}}
        \mathrm{d}t}
        {\frac{\kappa}{\alpha}\int_0^\infty t^{d_\alpha-1}
        \left(1+t\right)^{-\frac{1+d_\alpha\kappa}{\kappa}}
        \mathrm{d}t}
        = \frac{B\left[\frac{\alpha}{d}+1,
        \frac{1}{\kappa}\right]}
        {\frac{\kappa}{\alpha} B\left[\frac{\alpha}{d},
        \frac{1}{\kappa}+1\right]}
        =d,
    \end{align}
where the simplification uses the properties $B[a+1,b]=\frac{a}{a+b}B[a,b]$ and $B[a,b+1]=\frac{b}{a+b}B[a,b]$.

Since the coupled sum is affine, the trace-form coupled entropy of the matching CSED is
\begin{align}
   H_\kappa(f_\kappa(\mathbf{x}))
    &=\frac{\mathbb{E}_{f/^{1+\frac{\kappa}{1+d_\alpha\kappa}}}[\frac{r^\alpha}{\alpha}]}{\alpha}
    \oplus_{\kappa} \ln_{\kappa}Z_\kappa(\sigma,\alpha,d)
    ^\frac{1}{1+d_\alpha\kappa} \nonumber \\
    &=d_\alpha + \ln_\frac{\kappa}{1+d_\alpha\kappa}Z_\kappa(\sigma,\alpha,d).
\end{align}
The non-trace coupled entropy is $\mathcal{H}_\kappa(f_\kappa(\mathbf{x})) = \left( d_\alpha + \ln_\frac{\kappa}{1+d_\alpha\kappa} Z_\kappa(\sigma,\alpha,d)\right)^\frac{1}{\alpha}$. In the infinity limit the non-trace coupled entropy converges to a constant times the partition function,
\begin{align}
    \lim\limits_{\kappa\rightarrow\infty} \mathcal{H}_\kappa(f_\kappa(\mathbf{x})) 
    &= \left( d_\alpha + d_\alpha\left(Z_\kappa(\sigma,\alpha,d)
    ^\frac{1}{d_\alpha}-1\right)\right)^\frac{1}{\alpha} \nonumber \\
    &= \left(\frac{d}{\alpha}\right)^\frac{1}{\alpha} Z_\kappa(\sigma,\alpha,d)^\frac{1}{d}.
\end{align}
\end{proof}

\begin{figure*}[ht] 
    \centering
    \subfloat[]{
        \includegraphics[width=0.45\linewidth,page=1]{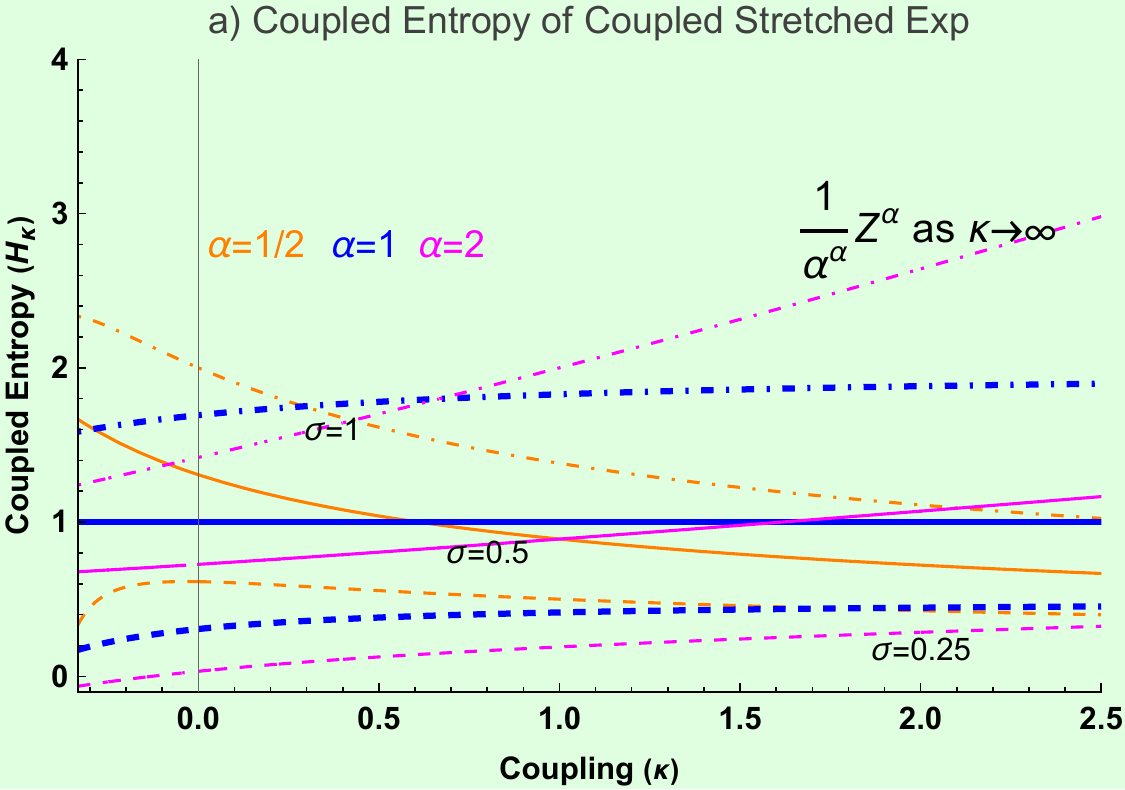}
        \label{fig_CE_CSE_a}
    }%
    \hfill
    \subfloat[]{
        \includegraphics[width=0.45\linewidth,page=1]{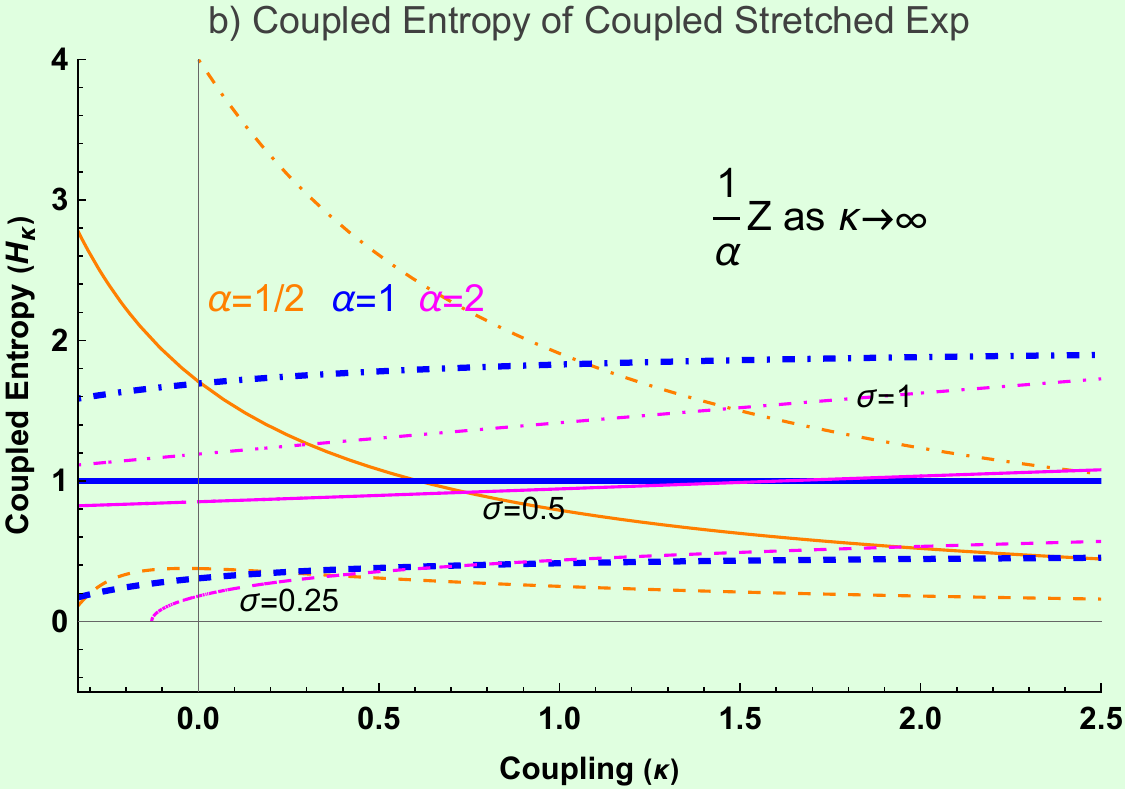}
        \label{fig_CE_CSE_b}
    }
    \caption{The Coupled Entropy of the double-sided Coupled Stretched Exponential Distribution as a function of the coupling $(\kappa)$. a) The trace-form of the coupled entropy converges to $Z^\alpha/\alpha$ as $\kappa \rightarrow \infty.$ b) The non-trace form of the coupled entropy converges to $Z/\alpha^\frac{1}{\alpha}$ as $\kappa \rightarrow \infty.$ Each graph shows three values of the variable power, $\alpha$, 0.5 - orange, 1.0 - blue, 1.5 - magenta, and three values of the scale, $\sigma$, 0.25 - dashed, 0.5 - line, 1.0 - dash-dotted. For $\alpha=1$ (blue), a coupled exponential, $Z=\sigma$ and the coupled entropy is only slightly dependent on the coupling. For $\alpha=2$ (magenta), a coupled Gaussian, the normalization has a strong dependence on the coupling, which is amplified by the power 2; thus, taking the square-root may form a better metric. However, this same root amplifies the metric for $\alpha=\frac{1}{2}$ and low values of $\kappa$.} \label{fig_CE}
\end{figure*}

\begin{remark}
\hspace{1em}
\begin{enumerate}
    \item The uniqueness of the coupled entropy stems from the fact that its solution for its maximizing distributions is a) equal to the argument of the coupled stretched exponential distribution at $x-\mu=\sigma,$ and b) the definition of the CSED is unique in specifying a scale that is independent of the shape.
    \item In the limit, $\lim\limits_{\kappa\rightarrow\infty} \frac{\kappa}{1+d_\alpha\kappa}=\frac{1}{d_\alpha}$ and therefore, $\lim\limits_{\kappa\rightarrow\infty} H_\kappa(\mathbf{X};\alpha,d)= d_\alpha Z_\kappa(\sigma,\alpha,d)^\frac{1}{d_\alpha},$ and $\lim\limits_{\kappa\rightarrow\infty} \mathcal{H}_\kappa(\mathbf{X};\alpha,d)= d_\alpha^\frac{1}{\alpha} Z_\kappa(\sigma,\alpha,d)^\frac{1}{d}$.
    \item Figure \ref{fig_CE} shows the coupled entropy for the coupled stretched exponential as a function of the coupling with $d=1.$ The two figures  a) $H_\kappa(\mathbf{X};\alpha,d)$ and b) $\mathcal{H}_\kappa(\mathbf{X};\alpha,d)$. 
    \item The coupled entropy is related to the Tsallis and Normalized Tsallis entropy by factors dependent on the coupling:
    \begin{equation}
        H_\kappa(\mathbf{p}) = \frac{H_{\kappa}^{\text{NT}}(\mathbf{p})}{1+d_\alpha\kappa}
        = \frac{H_{\kappa}^{\text{T}}
        (\mathbf{p})}{(1+d_\alpha\kappa)\sum_jp_j^{1+\frac{\kappa}{1+d_\alpha\kappa}}}.
    \end{equation}
\end{enumerate}
\end{remark}

\subsection{Universality of the Coupled Entropy} \label{subsec_univ}
The classification of entropy function scaling for complex systems was discovered by Hanel and Thurner \cite{hanelComprehensiveClassificationComplex2011, thurnerEntropyNonergodicComplex2012}. They showed that entropy functions that satisfy the first three Shannon-Khinchin axioms, discussed in the next section \ref{subsec_axioms} satisfy two scaling laws $(c,d)$. The methods subsection \ref{subsec_Hanel} explains the details. In order to focus on the extensivity, a modification in the definition of the c scaling, $\tilde{c}=1-c$ is used here. The $\tilde{c}$ scaling is the rate of growth due to long-range dependence, which is shown to be a function of $\kappa, \alpha, \text{and } d$ for the coupled entropy.  The $d$ scaling is the rate of growth due to short-range dependence, which is shown to be the inverse of $\alpha$. The short-range dependence is determined by discounting the long-range dependence. 

The scaling laws are determined by properties of the growth of states, $W (N)$, as the number of constituents increases,  $N\rightarrow \infty$. In the next section regarding the new axioms for entropy, the coupled stretched exponential function is derived as a model of the microstate growth, $W(N) \approx \exp_{\kappa}^{1+d_\alpha\kappa}\left(\frac{N^\alpha}{\alpha N_\sigma^\alpha}\right)$. This then models the sub-exponential growth of microstates $(\kappa>0)$, the exponential growth $(\kappa=0)$, and the super-exponential growth $(-\frac{1}{1+d_\alpha}<\kappa<0)$. In the super-exponential domain, there is an $N_{max}$ at which $W\rightarrow\infty$. 

While there may be systems, and therefore entropy functions, which fall outside the universality described by the mapping of $(\kappa,d,\alpha) \rightarrow (\tilde{c},d)$, these other models will still need to satisfy a uniqueness requirement similar to the measure of uncertainty at the scale, in order to assure that entropy measurement has optimized the separation of linear and nonlinear sources of uncertainty. For instance, Kaniadakis has developed an entropy function from relativistic dynamics \cite{kaniadakisRelativisticRoots2024}. This entropy function has similarities and differences with coupled entropy that would be valuable to compare. Likewise, the broad universality classes defined by Tempesta and Jensen \cite{tempestaUniversalityClassesInformationTheoretic2020}, could be examined in conjunction with the measure at the scale constraint. It's also valuable to understand that intermediate dynamics can modeled by additional independent-equals moment constraints, mixtures of the coupled stretched exponential distributions, and/or recursive regression of the methods \cite{hanelTypicalSetEntropy2023}. Thus the universal scaling of the coupled entropy should be understood to be a foundation for an even larger variety of complex systems.

\begin{theorem}[Universality Classes of the Coupled Entropy]
    \label{th_universality}
The non-trace coupled entropy  $\mathcal{H}_\kappa(X;d,\alpha)=H_\kappa(X;d,\alpha)^\frac{1}{\alpha}$, fulfills the full range of the Hanel-Thurner $(\tilde{c},d)$  universality classes with 
\begin{align}
(\tilde{c},d) = \left\{
    \begin{matrix}
        \left(\frac{\kappa/\alpha}{1+d\kappa/\alpha},
        \frac{1}{\alpha}\right), & \kappa>0 \\
        \left(0,\frac{1}{\alpha}\right), & \kappa=0 \\
         \left(0,0\right), & -\frac{1}{1+d_\alpha}< \kappa <0.
    \end{matrix}
    \right.
\end{align}
\end{theorem}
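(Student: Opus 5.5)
The plan is to compute the Hanel--Thurner exponents directly from the generating function of the non-trace coupled entropy, evaluated on the uniform distribution over $W$ states. For $p_i = 1/W$ the independent-equals distribution is again uniform. The trace-form coupled entropy therefore collapses to the coupled logarithm of the state count:
\begin{align}
H_\kappa = \ln_\kappa W^{\frac{1}{1+d_\alpha\kappa}} = \frac{1}{\kappa}\left(W^{\frac{\kappa}{1+d_\alpha\kappa}}-1\right),
\end{align}
and hence $\mathcal{H}_\kappa(W) = \bigl(\ln_{\kappa} W^{\frac{1}{1+d_\alpha\kappa}}\bigr)^{\frac{1}{\alpha}}$. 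Write this as $\mathcal{H}_\kappa = W\, g(1/W)$, with $g(x) = x\bigl(\ln_\kappa x^{-\frac{1}{1+d_\alpha\kappa}}\bigr)^{1/\alpha}$. I will then use the two limits recalled in the Methods subsection \ref{subsec_Hanel}: $\lim_{W\to\infty} g(\lambda x)/g(x) = \lambda^{c}$ with $\tilde c = 1-c$, and $\lim_{W\to\infty} g(x^{1+a})/(x^{a c} g(x)) = (1+a)^{d}$.

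For $\kappa>0$ the argument is a power law, since $\ln_\kappa W^{1/(1+d_\alpha\kappa)} \sim \kappa^{-1} W^{\kappa/(1+d_\alpha\kappa)}$. The first limit gives $g(\lambda x)/g(x) \to \lambda^{1-\frac{\kappa/\alpha}{1+d_\alpha\kappa}}$, so $\tilde c = \frac{\kappa/\alpha}{1+d\kappa/\alpha}$, matching the claim. For the second exponent I expand to next order, $\ln_\kappa W^{(\cdot)} = \kappa^{-1}W^{s}(1-W^{-s})$ with $s = \frac{\kappa}{1+d_\alpha\kappa}$. The leading power-law factor is exactly absorbed by $x^{ac}$, and what remains is the sub-leading $\ln$-type correction raised to $1/\alpha$.

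The Hanel--Thurner $d$ records the residual logarithmic growth once the long-range power has been discounted, as the statement's preamble puts it. The procedure is to factor out the power law and identify the stretched-exponential kernel. At that point the $1/\alpha$ exponent contributes $(1+a)^{1/\alpha}$, which gives $d = 1/\alpha$. For $\kappa=0$ the computation is direct: $\mathcal{H} = (\ln W)^{1/\alpha}$ gives $c=1$, so $\tilde c = 0$, and $(\ln W^{1+a}/\ln W)^{1/\alpha} = (1+a)^{1/\alpha}$ gives $d = 1/\alpha$. This is the Weibull/stretched class and also serves as a consistency check for the $\kappa\to0$ limit of the $\kappa>0$ case.

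For $-\frac{1}{1+d_\alpha}<\kappa<0$ the exponent $s$ is negative. Then $\ln_\kappa W^{1/(1+d_\alpha\kappa)} \to -1/\kappa = 1/|\kappa|$, a finite constant. Hence $\mathcal{H}_\kappa$ saturates, $g(x) \sim x\,|\kappa|^{-1/\alpha}$, and both limits give $c = 1$ and $d = 0$, i.e.\ $\tilde c = 0$ and $d = 0$. I will also note that the lower bound on $\kappa$ keeps $1+d_\alpha\kappa>0$, so that the sign of $s$ is that of $\kappa$, and it ensures concavity of the entropy, which is needed for the Shannon--Khinchin hypotheses of the classification.

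The main obstacle is the $d$ exponent for $\kappa>0$. Read literally, the Hanel--Thurner second limit applied to a pure power law $W^{s/\alpha}$ returns $d=0$ with $c = 1-s/\alpha$. Obtaining $d = 1/\alpha$ therefore requires the convention from Methods \ref{subsec_Hanel}, in which the long-range growth $\tilde c$ is discounted first (here via the identification $W(N) \approx \exp_\kappa^{1+d_\alpha\kappa}(N^\alpha/(\alpha N_\sigma^\alpha))$) before the short-range exponent is extracted. I would carry out that step by evaluating $d$ along $N$ rather than along $W$: expressing $\mathcal{H}_\kappa(W(N)) \propto N/N_\sigma$, the $1/\alpha$ root exactly linearizes the stretched argument, and that linearization is where $1/\alpha$ appears. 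I expect the bulk of the care to be needed in making this discounting precise.
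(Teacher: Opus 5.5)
Your computation of $\tilde c$ for every sign of $\kappa$ is correct, and so is your computation of $d$ for $\kappa=0$ and for $-\frac{1}{1+d_\alpha}<\kappa<0$. It is the same direct evaluation the paper uses: insert the uniform distribution, obtain $\mathcal{H}_\kappa(W)=\bigl(\tfrac{1}{\kappa}(W^{s}-1)\bigr)^{1/\alpha}$ with $s=\tfrac{\kappa}{1+d_\alpha\kappa}$, and take the two Hanel--Thurner limits. The gap is $d$ for $\kappa>0$, and it cannot be closed, because the definitions in Methods \ref{subsec_Hanel} give $d=0$ there. Your third paragraph appeals to a ``residual logarithmic growth'' that would supply $(1+a)^{1/\alpha}$, but there is none:
\[
\left(\frac{W^{(1+a)s}-1}{W^{s}-1}\right)^{\frac{1}{\alpha}}W^{-a\tilde c}
=\left(\frac{1-W^{-(1+a)s}}{1-W^{-s}}\right)^{\frac{1}{\alpha}}\longrightarrow 1.
\]
The correction to the leading power is $O(W^{-s})$, not logarithmic, so this is the $d=0$ you concede in your last paragraph. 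The root $1/\alpha$ only rescales the power $W^{s}$, which is what shifts $\tilde c$. It produces $(1+a)^{1/\alpha}$ only when the inner function is $\ln W$, i.e.\ at $\kappa=0$.

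The paper's proof reaches $(1+a)^{1/\alpha}$ by applying L'H\^opital's rule to the divergent quotient $(W^{(1+a)s}-1)/(W^{s}-1)$ and then multiplying by $W^{-a\tilde c}$. But the derivative quotient is $(1+a)W^{as}$, whereas the actual quotient is $W^{as}(1+o(1))$. L'H\^opital transfers limits, not asymptotic equivalence of divergent expressions. Applied to $(W^{(1+a)s}-1)W^{-as}$ over $W^{s}-1$, with the product rule, it returns $1$. So the factor $1+a$ is an artifact, and your argument should not try to reproduce it.

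Your proposed repair, extracting $d$ along $N$, also fails. The exponents are properties of the function $\mathcal{H}_\kappa(W)=\kappa^{-1/\alpha}W^{s/\alpha}(1-W^{-s})^{1/\alpha}$, which contains no $(\ln W)^{d}$ factor, and substituting $W=W(N)$ cannot create one. Concretely, requiring $\mathcal{H}(W)\sim W^{\tilde c}(\ln W)^{d}\propto N$ gives the Hanel--Thurner phase-space relations $\tilde c=\lim_{N\to\infty}W/(NW')$ and $d=\lim_{N\to\infty}\ln W\,\bigl(W/(NW')-\tilde c\bigr)$. Apply them to $W(N)=\exp_\kappa^{1+d_\alpha\kappa}(u)$ with $u=N^\alpha/(\alpha N_\sigma^\alpha)$. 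You get $W/(NW')=\frac{1+\kappa u}{\alpha(1+d_\alpha\kappa)u}$, hence the correct $\tilde c$, but $d=\lim_{u\to\infty}\frac{\ln(1+\kappa u)}{\alpha\kappa u}=0$. The value $1/\alpha$ arises only at $\kappa=0$, where $\ln(1+\kappa u)/\kappa$ is replaced by $u$. The identity $\mathcal{H}_\kappa(W(N))=N/(\alpha^{1/\alpha}N_\sigma)$ is just extensivity. Any unbounded entropy has it along the growth law obtained by inverting it, so it carries no information about $d$.

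What your argument actually establishes for $\kappa>0$ is therefore $(\tilde c,d)=\bigl(\tfrac{\kappa/\alpha}{1+d_\alpha\kappa},0\bigr)$, a Tsallis-type class with tunable $\tilde c$. The entry $d=1/\alpha$ for $\kappa>0$ in the statement does not follow from the stated definitions. The right conclusion is to say so, rather than adopt a convention engineered to produce it.
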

\begin{proof}
The Methods subsection \ref{subsec_Hanel} reviews the Hanel-Thurner classification of complex systems based on large system limit for entropy functions which fulfill the first three Shannon-Khinchin axioms.

Using the radial variable r, the equiprobability $p$, and considering $d$ dimensions, the trace coupled entropy, $H_\kappa$ \ref{equ_CEtrace}, is the function $g(p)= p\ln_{\kappa} p^{-\frac{1}{1+d_\alpha\kappa}}$$=\frac{p}{\kappa}\left(p^{-\frac{\kappa}{1+d_\alpha\kappa}}-1\right)$. For the non-trace coupled entropy, $\mathcal{H}_\kappa$ \ref{equ_CEnontrace} the non-trace component is $G(u)=u^\frac{1}{\alpha}$. For $p=\sfrac{1}{W}$. From equation \ref{equ_scalefunction}, the scaling is evaluated based on $F_{G,g}=G\left(Wg(W^{-1})\right)=\left(\ln_\kappa W^{\frac{\kappa}{1+d_\alpha \kappa}}\right)^\frac{1}{\alpha}$. 

The power-law, $\tilde{c},$ scaling properties of the coupled entropy are: 
\begin{align}
    \lim_{W\rightarrow \infty}\left(
    \frac{\ln_{\kappa} (\lambda W)
    ^{\frac{1}{1+d_\alpha\kappa}}}
    { \ln_{\kappa} W^{\frac{1}{1+d_\alpha\kappa}}}
    \right)^\frac{1}{\alpha} 
    &= \left\{ \begin{matrix}
    \lim_{W\rightarrow \infty}\left(
    \frac{(\lambda W)
    ^{\frac{\kappa}{1+d_\alpha\kappa}}-1}
    { W^{\frac{\kappa}{1+d_\alpha\kappa}}-1}
    \right)^\frac{1}{\alpha} & \kappa \neq 0 \nonumber \\
    \lim_{W\rightarrow \infty}\left(
    \frac{\ln \lambda + \ln W}
    { \ln W}
    \right)^\frac{1}{\alpha} & \kappa=0
    \end{matrix} \right.
    \\
    &= \left\{ \begin{matrix}
        \lambda^\frac{\kappa/\alpha}{1+d_\alpha\kappa} & \kappa>0 \\
        1 & -\frac{1}{1+d_\alpha} < \kappa \leq 0.
    \end{matrix} \right.
\end{align}
Thus, the power-law scaling is
\begin{align}
    \tilde{c}
     &= \left\{ \begin{matrix}
        \frac{\kappa/\alpha}{1+d_\alpha\kappa} & \kappa>0 \\
        0 & -\frac{1}{1+d_\alpha} < \kappa \leq 0.
    \end{matrix} \right.
\end{align}

Analysis of the $d$ scaling is based on a substitution of $\lambda \rightarrow W^a$ and compensation of the $\tilde{c}$ scaling by multiplication of $ W^{-a\tilde{c}}$. The $d$ scaling, utilizing L'Hôpital's Rule, is determined for $\kappa\geq0$ by:
\begin{align}
    \lim_{W\rightarrow \infty}\left[
    \left(
    \frac{\ln_{\kappa} ( W^{1+a})
    ^{\frac{1}{1+d_\alpha\kappa}}}
    { \ln_{\kappa} W^{\frac{1}{1+d_\alpha\kappa}}}
    \right)^\frac{1}{\alpha}W^{-a\tilde{c}}
    \right] 
    &= \lim_{W\rightarrow \infty}\left[\left(
    \frac{W^\frac{(1+a)\kappa}{1+d_\alpha \kappa}
    -1}
    {W^\frac{\kappa}{1+d_\alpha \kappa}-1}
    \right)^\frac{1}{\alpha}W^{-a\tilde{c}}\right] \nonumber \\
    &= \lim_{W\rightarrow \infty}\left[\left(
    (1+a)\frac{W^\frac{(1+a-d_\alpha-a)\kappa-1}{1+d_\alpha \kappa}}
    {W^\frac{(1-d_\alpha)\kappa-1}{1+d_\alpha \kappa}}
    \right)^\frac{1}{\alpha}\right] =(1+a)^\frac{1}{\alpha}, 
\end{align}
and for $\kappa=0$ by,
\begin{align}
    \lim_{W\rightarrow \infty}\left[\left(
    \frac{\ln W^{1+a}}
    { \ln W}
    \right)^\frac{1}{\alpha}W^0\right] = (1+a)^\frac{1}{\alpha},
\end{align}
and for $-\frac{1}{1+d_\alpha} < \kappa < 0$ by,
\begin{align}
    \lim_{W\rightarrow 0}\left[
    \left(
    \frac{\ln_{\kappa} ( W^{1+a})
    ^{\frac{1}{1+d_\alpha\kappa}}}
    { \ln_{\kappa} W^{\frac{1}{1+d_\alpha\kappa}}}
    \right)^\frac{1}{\alpha} W^0
    \right] 
    &= \lim_{W\rightarrow 0}\left[\left(
    \frac{W^\frac{(1+a)\kappa}{1+d_\alpha \kappa}
    -1}
    {W^\frac{\kappa}{1+d_\alpha \kappa}-1}
    \right)^\frac{1}{\alpha}\right] = 1. 
\end{align}
Thus, the stretch scaling is 
\begin{align}
    d
     = \left\{ \begin{matrix}
        \frac{1}{\alpha} & \kappa \geq 0 \\
        0 & -\frac{1}{1+d_\alpha} < \kappa < 0.
    \end{matrix} \right.
\end{align}
\end{proof}
In the sub-exponential domain $(\kappa>0)$ the power-law scaling $\left(\frac{\kappa/\alpha}{1+d_\alpha\kappa}\right)$ has been associated with an information relative risk aversion that can be used adjust the robustness of prediction algorithms \cite{nelsonReducedPerplexitySimplified2020}. The  range of scaling is $0<\tilde{c}<\frac{1}{d}$. This flexibility in sub-exponential domain contrasts with the finding that the Tsallis entropy has a constant scaling $\tilde{c}=0$ \cite{hanelComprehensiveClassificationComplex2011}. This result clarifies that the Tsallis entropy is unable to be an extensive entropy for the domain of highest interest in complex systems. The proposed work-around to use the dual domain, $H_{2-q}$, which does have an adjustable scaling property, with the standard moment constraint is not physically valid since both the constraint and the $(2-q)$-entropy of the maximizing distribution diverge at $q(\kappa=1,\alpha,d)=1+\frac{1}{1+d_\alpha}$.

The exponential domain has the specific long-range scaling of zero and the adjustable short-range scaling of $d=\sfrac{1}{\alpha}$. The scaling for the super-exponential domain $\left(-\frac{1}{1+d_\alpha} < \kappa < 0\right)$ are the fixed constants $(0,0)$.

\subsection{Axiomatic Foundation for the Entropy of Complex Systems}
\label{subsec_axioms}

Having established the requirement that a generalized entropy measure the uncertainty at the scale of the coupled stretched exponential distribution, a strong foundation for the axioms for the entropy of a complex system can now be established. The three Shannon-Khinchin axioms \cite{shannonMathematicalTheoryCommunication1948,khinchinMathematicalFoundationsInformation1957}  for continuity, maximality, and expandability are unchanged. Historically, the fourth axiom has defined the composability of the entropy \cite{suyariGeneralizationShannonKhinchinAxioms2004, tempestaShannonKhinchinFormulation2016}, however, it is now established that the relationship between composability and extensivity is essential for the unique solution. The fifth axiom assures that the moment defined by the entropy function is finite for all coupling values. Significantly, these additional axioms do not modify the scaling universality class established by the first three axioms.

Let $\mathbf{p}$ be a probability simplex, $\textbf{p}\equiv \{(p_1,p_2,...p_n)|\quad p_i\geq0, \sum_{i=1}^n p_i=1$. Five axioms together guarantee the uniqueness and universality of the coupled entropy. The first three axioms are drawn from Shannon-Khinchin:
\begin{enumerate}
    \item \textbf{Continuity}: For any $n$, the entropy,  $H(\mathbf{p})=H(p_1,p_2,...,p_n)$, is continuous with respect to $\mathbf{p}$.
    \item \textbf{Maximality}: Given only the constraint of a normalized distribution, the uniform distribution maximizes the entropy, $H(\frac{1}{W}, \frac{1}{W},...,\frac{1}{W})\geq H(p_1,p_2,...,p_W)$.
    \item \textbf{Expandability}: A state with $p_i=0$ does not change the entropy, $H(p_1,p_2,...,p_W,0)$$=H(p_1,p_2,...,p_W)$.
\end{enumerate}

The fourth and fifth axioms define the dependence of the entropy on the nonlinear properties of the system.
\begin{enumerate}[resume]
    \item \textbf{Composability and Extensivity}: Given a $d$-dimensional system with short-range dependency $(\alpha>0)$, which together form the ratio $d_\alpha=\frac{d}{\alpha}$,  and long-range dependency $\left(\kappa>-\frac{1}{1 + d_\alpha}\right)$, the entropy is determined by a precise relationship between the composability and extensivity. Let $W(N)$ be the number of microstates given N constituents. The microstate growth is characterized by sub-exponential $(\kappa > 0)$, exponential $\kappa = 0$, and super-exponential $\left(-\frac{1}{1 + d_\alpha}  < \kappa < 0\right)$ domains:
    \begin{align} \label{equ_stategrowth}
        W(N)
        \sim
         \begin{cases}
            cN^{\frac{\alpha}{\kappa}+d}, & \kappa > 0 \text{ as } N \rightarrow \infty \\
            c \exp \left(N^{\alpha}\right), & \kappa = 0 \text{ as } N \rightarrow \infty \\
            \infty & -\frac{1}{1 + d_\alpha}  < \kappa < 0 \text{ as } N \rightarrow N_{max}^-.
        \end{cases}
    \end{align}
In the super-exponential growth domain, the number of constituents is limited to a maximum value $(N_{max})$ where the number of microstates, $W(N)$, diverges.  For a uniform distribution with  $p_i=W(N)^{-1}$, let the trace-form of the entropy of system $A$ be $H_\kappa(A) \equiv H_\kappa(N_A,\alpha,d) $. The composability is
\begin{align}
    H_\kappa(A \cup B) 
    &= H_\kappa(A) \oplus_{\kappa} H_\kappa(B) \nonumber \\
    &= H_\kappa(A) + H_\kappa(B) 
    + \kappa H_\kappa(A)  H_\kappa(B)
\end{align}
The extensivity of the entropy is determined by the exponential and sub-exponential domain $(\kappa \geq 0) $:
\begin{align}
   \lim\limits_{N \rightarrow \infty} 
   &\frac{H_\kappa(N,\alpha,d)}{N^\alpha} = c^\alpha.
\end{align}
The non-trace entropy form has the property $\mathcal{H}_\kappa(N,\alpha,d)=H_\kappa(N,\alpha,d)^\frac{1}{\alpha}$, which then satisfies the extensivity relationship $\lim\limits_{N\rightarrow\infty} \mathcal{H}/N=c$.

    \item \textbf{Macroscopic Observability}: Given a non-equilibrium system with fluctuations $(\kappa>0)$, finite macroscopic observations require that expectations be over the effective independent degrees of freedom, thus the trace-form of the entropy is with respect to the independent-equals moment, $\mathbb{E}_\kappa[\mathbf{x}^{\circ\alpha}]\equiv \int_X \mathbf{x}^{\circ\alpha} f/^{q(\kappa,\alpha,d)}(\mathbf{x})d\mathbf{x};$ $q(\kappa,\alpha,d)=1+\frac{\kappa}{1+d_\alpha \kappa}.$
\end{enumerate}

These axioms provide the foundation for the lemmas and theorem defining the coupled entropy.
\begin{lemma}
    Given the state space growth defined by equation \ref{equ_stategrowth} of Axiom 4 there is a optimal smooth interpolation which uniquely identifies the information scale of $N_\sigma$ defined by the coupled stretched exponential function for the number of microstates:
    \begin{align}
        W(N) \approx \exp_{\kappa}^{1+d_\alpha\kappa}\left(\frac{N^\alpha}{\alpha N_\sigma^\alpha}\right) 
        = \left(1+\kappa\frac{N^\alpha}{\alpha N_\sigma^\alpha}\right)^\frac{1+d_\alpha\kappa}{\kappa}
    \end{align}
\end{lemma}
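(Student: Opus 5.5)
The approach is to mirror the ODE derivation of the Preliminaries: show that the coupled stretched exponential form is the one smooth function matching all three asymptotic regimes of equation \eqref{equ_stategrowth}. The free scale in that function is then pinned down by the informational-scale criterion of equation \eqref{equ_info_scale}, using Lemma \ref{lem_CSED}. Concretely, take the candidate $W(N)=\exp_\kappa^{1+d_\alpha\kappa}\left(\frac{N^\alpha}{\alpha N_\sigma^\alpha}\right)$ and verify the three regimes in order.

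\textbf{Sub-exponential regime} $(\kappa>0)$. For $N\gg N_\sigma$ the $1$ is negligible, so
\begin{align}
W(N)\sim\left(\frac{\kappa}{\alpha N_\sigma^\alpha}\right)^{\frac{1+d_\alpha\kappa}{\kappa}}N^{\frac{\alpha}{\kappa}+d}.
\end{align}
This is exactly $cN^{\frac{\alpha}{\kappa}+d}$, and it identifies $c$ in terms of $N_\sigma$, $\kappa$ and $\alpha$. This is the same computation as part I of Lemma \ref{lem_CSED}, with the exponent sign reversed.

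\textbf{Exponential regime} $(\kappa=0)$. Taking $\kappa\to0$ gives $\exp\left(\frac{N^\alpha}{\alpha N_\sigma^\alpha}\right)$. This is $\exp(N^\alpha)$ up to the choice of units $\alpha N_\sigma^\alpha$, which the constant $c$ and the scale absorb.

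\textbf{Super-exponential regime} $(-\frac{1}{1+d_\alpha}<\kappa<0)$. The exponent $\frac{1+d_\alpha\kappa}{\kappa}$ is negative, because $1+d_\alpha\kappa>0$ on this domain. Hence $W$ diverges where the base vanishes, which gives
\begin{align}
N_{max}=N_\sigma\left(\frac{\alpha}{-\kappa}\right)^{1/\alpha}.
\end{align}
This reproduces the finite-$N_{max}$ blow-up of Axiom 4. For small $N$, every case reduces to the stretched exponential $\exp\left(\frac{N^\alpha}{\alpha N_\sigma^\alpha}\right)$, as in part II of Lemma \ref{lem_CSED}.

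\textbf{Identification of $N_\sigma$.} Compute the log-log slope of $W^{-1}$ in $N$:
\begin{align}
-\frac{\mathrm{d}\ln W^{-1}}{\mathrm{d}\ln N}=\frac{(1+d_\alpha\kappa)\,(N/N_\sigma)^\alpha}{1+\frac{\kappa}{\alpha}(N/N_\sigma)^\alpha}.
\end{align}
This is part III of Lemma \ref{lem_CSED} with $d_\alpha$ in place of $\frac{1}{\alpha}$. Evaluated at $N=N_\sigma$, it shows that $N_\sigma$ is the crossover between the exponential and power-law behaviour. Normalizing it to the value required by equation \eqref{equ_info_scale} locates $N_\sigma$, by the same argument used for the CSED density. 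The interpretation is that $W^{-1}$ has the form of a survival-type function, consistent with the Preliminaries' claim that the coupled exponential structure belongs on the survival function.

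\textbf{Uniqueness and optimality (main obstacle).} "Optimal smooth interpolation" is not precisely defined, so I would make it precise in one of two ways. The first option is to demand that $W$ solve the minimal nonlinear ODE of the Preliminaries,
\begin{align}
y'=\frac{N^{\alpha-1}}{N_\sigma^\alpha}\,y^{1-\frac{\kappa}{1+d_\alpha\kappa}},
\end{align}
with the exponent of $y$ fixed by the tail exponent and the power of $N$ fixed by the short-range behaviour. Integrating this equation yields exactly the coupled stretched exponential, so uniqueness follows from ODE uniqueness given $W(0)=1$. The second option is to argue that any other two-regime interpolant either introduces an extra scale or shifts the $-1$ slope point away from $N_\sigma$. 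That would break the separation of scale from shape established in Lemma \ref{lem_CSED}.

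I would adopt the ODE route, since it needs only the earlier results. Its remaining weak point is justifying that this ODE, rather than some other one, is the canonical choice.
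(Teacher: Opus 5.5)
Your large-$N$ asymptotics (coefficient and exponent $\frac{\alpha}{\kappa}+d$) are correct. So are the sign analysis on $-\frac{1}{1+d_\alpha}<\kappa<0$ and $N_{max}=N_\sigma(\alpha/(-\kappa))^{1/\alpha}$. This regime-matching is essentially all the paper's proof does. It checks that $W$ behaves like $c\exp\bigl((1+d_\alpha\kappa)\frac{N^\alpha}{\alpha N_\sigma^\alpha}\bigr)$ for $N\ll N_\sigma$ and like $cN^{\alpha/\kappa+d}$ for $N\gg N_\sigma$, then asserts that the interpolation ``can be optimized,'' with no uniqueness argument and no slope computation. The steps you add to get ``uniquely identifies $N_\sigma$'' are where the proposal fails. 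The recurring culprit is the factor $1+d_\alpha\kappa$, which you drop or mishandle in three places.

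First, with $u=\frac{N^\alpha}{\alpha N_\sigma^\alpha}$, the small-$N$ limit is $(1+\kappa u)^{\frac{1+d_\alpha\kappa}{\kappa}}\approx\exp\bigl((1+d_\alpha\kappa)u\bigr)$, not $\exp(u)$. The paper keeps this factor, and it is negligible only when $|d_\alpha\kappa|\ll 1$. Second, your ODE $y'=\frac{N^{\alpha-1}}{N_\sigma^\alpha}y^{1-\frac{\kappa}{1+d_\alpha\kappa}}$ with $y(0)=1$ integrates to
\[
y=\Bigl(1+\frac{\kappa}{1+d_\alpha\kappa}\,\frac{N^\alpha}{\alpha N_\sigma^\alpha}\Bigr)^{\frac{1+d_\alpha\kappa}{\kappa}},
\]
i.e.\ the coupled stretched exponential with scale $N_\sigma(1+d_\alpha\kappa)^{1/\alpha}$ rather than $N_\sigma$. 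The right-hand side needs the factor $1+d_\alpha\kappa$. Since the lemma is about which scale is singled out, this is not cosmetic. Even after the fix, ODE uniqueness only gives uniqueness for an equation chosen to reproduce the target, not uniqueness among interpolants of \eqref{equ_stategrowth}. Third, your log-log slope at $N=N_\sigma$ equals $\frac{1+d_\alpha\kappa}{1+\kappa/\alpha}$, which is $1$ only when $d=1$ or $\kappa=0$. The cancellation in part III of Lemma~\ref{lem_CSED} requires the same coefficient $\kappa/\alpha$ in numerator and denominator, and replacing $1/\alpha$ by $d_\alpha$ destroys it. Setting the slope to $1$ gives $N_\ast=N_\sigma\bigl(1+\frac{(d-1)\kappa}{\alpha}\bigr)^{-1/\alpha}$, a $\kappa$-dependent point. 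So criterion \eqref{equ_info_scale} does not locate $N_\sigma$ for $d>1$, and ``normalizing'' the slope is a redefinition rather than a derivation. Relatedly, $W^{-1}=\exp_\kappa^{-(1+d_\alpha\kappa)}(u)$ has the form of the CSED density, not of a survival function $\exp_\kappa^{-1}(u)$, so that reading does not fit either. What survives is the asymptotic verification with the small-$N$ factor restored, which matches the paper's proof. The identification of $N_\sigma$ via the slope criterion holds only for $d=1$.
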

\begin{proof}
    In the small system domain $1<N\ll  N_\sigma$, the number of microstates approximates an exponential function,
    \begin{align}
         W(N) &\sim c \exp \left((1+d_\alpha \kappa)\frac{N^{\alpha}}{\alpha N_\sigma^\alpha} \right)
    \end{align}
    In the large system domain, $N\gg  N_\sigma$, the number of microstates approximates a power-law. 
    \begin{align}
          \lim_{N\rightarrow \infty}  W(N) 
          &\sim c N^\frac{1+d_\alpha\kappa}{\kappa/\alpha}.
    \end{align}
    Thus, the coupled exponential function provides an interpolation between the axiomatic requirements for short and long range dependency that can be optimized.
\end{proof}
\begin{theorem}
    \label{lem_compext}
    Given equiprobable states, $W(N) = \exp_{\kappa}^{1+d_\alpha\kappa}\left(\frac{N^\alpha}{\alpha N_\sigma^\alpha}\right)$, the composability by the coupled sum of $\oplus_{\kappa}$, and the extensivity by $c^\alpha N^\alpha$ are fulfilled by the trace form $H_\kappa(N) = \ln_{\kappa} W(N)^\frac{1}{1+d_\alpha\kappa}$.
\end{theorem}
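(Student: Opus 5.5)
Since the statement fixes both the microstate count $W(N)=\exp_{\kappa}^{1+d_\alpha\kappa}\!\left(\frac{N^\alpha}{\alpha N_\sigma^\alpha}\right)$ and the candidate $H_\kappa(N)=\ln_{\kappa} W(N)^{\frac{1}{1+d_\alpha\kappa}}$, the plan is to first compute $H_\kappa(N)$ in closed form and then check the two requirements of Axiom 4 directly. I will use the power rule for the coupled exponential, $\exp_\kappa^a(x)=(1+\kappa x)^{a/\kappa}$, together with the fact that $\ln_\kappa$ inverts $\exp_\kappa$. Then $W(N)^{\frac{1}{1+d_\alpha\kappa}}=\exp_\kappa\!\left(\frac{N^\alpha}{\alpha N_\sigma^\alpha}\right)$, and so
\begin{equation*}
H_\kappa(N)=\frac{N^\alpha}{\alpha N_\sigma^\alpha}.
\end{equation*}
For $\kappa=0$ this reduces to $\ln W(N)=N^\alpha/(\alpha N_\sigma^\alpha)$ via the limit $\ln_0=\ln$. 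As a consistency check, this is the uniform-distribution evaluation of the trace-form entropy \eqref{equ_CEtrace}. With $p_i=1/W$, the independent-equals weights are uniform, so $H_\kappa=\ln_\kappa W^{\frac{1}{1+d_\alpha\kappa}}$. This is exactly the function $F_{G,g}$ used in the proof of Theorem \ref{th_universality}.

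\textbf{Extensivity.} Dividing the closed form by $N^\alpha$ gives the constant $\frac{1}{\alpha N_\sigma^\alpha}$, so the limit is $c^\alpha$ with $c=(\alpha N_\sigma^\alpha)^{-1/\alpha}$. Consequently the non-trace form $\mathcal{H}_\kappa=H_\kappa^{1/\alpha}=N/(\alpha^{1/\alpha}N_\sigma)$ is linear in $N$, which is the linear extensivity claimed for \eqref{equ_CEnontrace}. In the super-exponential domain the same identity holds for $N<N_{max}$, so no separate case is needed there.

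\textbf{Composability.} For two systems $A,B$ with microstate counts $W_A,W_B$, the equiprobable states of the joint system number $W_{A\cup B}=W_AW_B$. Applying the pseudo-algebraic identity \eqref{equ_csum}, $\ln_\kappa(xy)=\ln_\kappa x\oplus_\kappa\ln_\kappa y$, with $x=W_A^{\frac{1}{1+d_\alpha\kappa}}$ and $y=W_B^{\frac{1}{1+d_\alpha\kappa}}$ yields
\begin{equation*}
H_\kappa(A\cup B)=H_\kappa(A)\oplus_\kappa H_\kappa(B)=H_\kappa(A)+H_\kappa(B)+\kappa H_\kappa(A)H_\kappa(B).
\end{equation*}

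\textbf{Main obstacle.} The difficulty is reconciling composability with the parametric form of $W(N)$. The product $W(N_A)W(N_B)$ is not of the form $W(N_A+N_B)$; only the coupled-exponential identity \eqref{equ_cprod} gives $W_AW_B=\exp_\kappa^{1+d_\alpha\kappa}(u_A\oplus_\kappa u_B)$ with $u=N^\alpha/(\alpha N_\sigma^\alpha)$. So I would state composability at the level of microstate counts, multiplying the counts of independent systems, rather than at the level of constituents. Applied to the closed form, this reads $u_{A\cup B}=u_A\oplus_\kappa u_B$. This is also why the extensivity identity has to be read within one system as $N\to\infty$, not additively across subsystems. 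The coupled sum accounts for the long-range correction $\kappa H_AH_B$, which vanishes relative to extensive growth only when $\kappa=0$.
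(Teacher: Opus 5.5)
Your proof is correct and follows the paper's argument: the paper likewise reduces $H_\kappa(N)$ to $\frac{N^\alpha}{\alpha N_\sigma^\alpha}$ to get $c^\alpha=\frac{1}{\alpha N_\sigma^\alpha}$, and it obtains composability from $\ln_\kappa(xy)=\ln_\kappa x\oplus_\kappa\ln_\kappa y$ with $W(A\cup B)$ implicitly taken as $W(N_A)W(N_B)$, which is the reading you make explicit. One small slip in your last paragraph: the identity $W_AW_B=\exp_\kappa^{1+d_\alpha\kappa}(u_A\oplus_\kappa u_B)$ follows from $\exp_\kappa(a\oplus_\kappa b)=\exp_\kappa(a)\exp_\kappa(b)$, the exponential form of \eqref{equ_csum}, not from \eqref{equ_cprod}, which concerns $\exp_\kappa(x+y)$ and the coupled product.
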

\begin{proof}
\hspace{1em}
    \begin{enumerate}
        \item 
     \begin{align}
          H_\kappa(A \cup B) 
    &= H_\kappa(A) \oplus_{\kappa} H_\kappa(B) \nonumber \\
    &= \ln_{\kappa}W(N_A)^\frac{1}{1+d_\alpha\kappa} \oplus_{\kappa}
    \ln_{\kappa}W(N_B)^\frac{1}{1+d_\alpha\kappa} \nonumber \\
    &= \ln_{\kappa} \left(
    W(N_A)^\frac{1}{1+d_\alpha\kappa}W(N_B)^\frac{1}{1+d_\alpha\kappa}\right) \nonumber \\
    &= \ln_{\kappa} W(A \cup B)^\frac{1}{1+d_\alpha\kappa}.
     \end{align}
     Thus, fulfilling the composability.
     \item 
     \begin{align}
   \frac{H_\kappa(N,\alpha,d)}{N^\alpha} 
   &= 
   \frac{\frac{N^\alpha}{\alpha N_\sigma^\alpha}}{ N^\alpha} 
   = \frac{1}{\alpha N_\sigma^\alpha}=c^\alpha.
     \end{align}
    Thus, fulfilling the extensivity.
    \end{enumerate}
\end{proof}

\begin{theorem} \label{lem_compnontrace}
    The composition of the non-trace coupled entropy is:
    \begin{align}
        \mathcal{H}_\kappa(A \cup B, d, \alpha)=
        \Phi (\mathcal{H}_\kappa(A),\mathcal{H}_\kappa(B))
        = \left(\mathcal{H}_\kappa(A)^\alpha \oplus_{\kappa}
        \mathcal{H}_\kappa(B)^\alpha \right)^\frac{1}{\alpha},
    \end{align}
    and the extensivity of the non-trace coupled entropy is $cN$.
\end{theorem}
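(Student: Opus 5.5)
The plan is to reduce the statement to Theorem \ref{lem_compext} through the defining relation $\mathcal{H}_\kappa = H_\kappa^{1/\alpha}$ in equation \eqref{equ_CEnontrace}. Since $u\mapsto u^{1/\alpha}$ is a bijection on the nonnegative reals, it can be inverted. For $\kappa>-\frac{1}{1+d_\alpha}$ and $W\geq 1$, the trace entropy $H_\kappa(N)=\ln_\kappa W(N)^{\frac{1}{1+d_\alpha\kappa}}$ is nonnegative. Hence $H_\kappa(A)=\mathcal{H}_\kappa(A)^\alpha$ and $H_\kappa(B)=\mathcal{H}_\kappa(B)^\alpha$ hold without ambiguity.

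\textbf{Composition.} Substituting these identities into the composability established in Theorem \ref{lem_compext}, $H_\kappa(A\cup B)=H_\kappa(A)\oplus_\kappa H_\kappa(B)$, gives
\begin{align}
\mathcal{H}_\kappa(A\cup B)^\alpha = \mathcal{H}_\kappa(A)^\alpha \oplus_\kappa \mathcal{H}_\kappa(B)^\alpha .
\end{align}
Taking the $\frac{1}{\alpha}$ power yields the stated $\Phi$. Two points need a brief check. First, the coupled sum $A+B+\kappa AB$ of two nonnegative entropies must stay nonnegative, so that the real $\frac{1}{\alpha}$ root exists. For $\kappa\geq 0$ this is immediate. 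For $\kappa<0$, I would argue it through the product representation $\ln_\kappa(xy)$ with $x,y\geq 1$ from equation \eqref{equ_csum}. Second, for completeness I would verify that $\Phi$ has identity element $0$, is symmetric, and is associative. Each property is inherited from $\oplus_\kappa$ by conjugation with the map $u\mapsto u^\alpha$, which shows that $\Phi$ is a legitimate group-type composition law in the sense of Tempesta.

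\textbf{Extensivity.} Theorem \ref{lem_compext} and Axiom 4 give $H_\kappa(N)/N^\alpha\to c^\alpha$. Because $u\mapsto u^{1/\alpha}$ is continuous on $[0,\infty)$,
\begin{align}
\lim_{N\to\infty}\frac{\mathcal{H}_\kappa(N)}{N}=\lim_{N\to\infty}\left(\frac{H_\kappa(N)}{N^\alpha}\right)^{\frac{1}{\alpha}}=(c^\alpha)^{\frac{1}{\alpha}}=c,
\end{align}
so $\mathcal{H}_\kappa\sim cN$.

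\textbf{Main obstacle.} The delicate step is the extensivity input itself. Theorem \ref{lem_compext} computes $H_\kappa(N)=\frac{N^\alpha}{\alpha N_\sigma^\alpha}$ exactly, and this works because $W(N)^{\frac{1}{1+d_\alpha\kappa}}=\exp_\kappa\!\left(\frac{N^\alpha}{\alpha N_\sigma^\alpha}\right)$ and $\ln_\kappa$ inverts $\exp_\kappa$. I would restate this identity explicitly. I would then note that the result does not depend on the asymptotic regime, since it holds for all $N$ in the exponential and sub-exponential domains. This gives $c=(\alpha N_\sigma^\alpha)^{-1/\alpha}=\alpha^{-1/\alpha}N_\sigma^{-1}$, so the non-trace entropy is exactly linear in $N$, not merely asymptotically linear.
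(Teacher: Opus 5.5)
Your proposal is correct and follows essentially the same route as the paper: invert $\mathcal{H}_\kappa=H_\kappa^{1/\alpha}$, substitute into the coupled-sum composability of Theorem~\ref{lem_compext} and take the $1/\alpha$ root, then use the exact identity $H_\kappa(N)=N^\alpha/(\alpha N_\sigma^\alpha)$ to get $\mathcal{H}_\kappa(N)/N=(\alpha N_\sigma^\alpha)^{-1/\alpha}=c$. Your extra checks (nonnegativity of the coupled sum, and identity and associativity inherited by conjugation) are harmless additions, with one caveat: for $\alpha\neq1$ the conjugated law $\Phi(x,y)=(x^\alpha+y^\alpha+\kappa x^\alpha y^\alpha)^{1/\alpha}$ is not of the form $x+y+\cdots$, so it is not a formal group law in Tempesta's strict sense.
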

\begin{proof}
    The composability is determined by the relationships
    \begin{align}
            \left(\mathcal{H}_\kappa(A)^\alpha \oplus_{\kappa}
        \mathcal{H}_\kappa(B)^\alpha \right)^\frac{1}{\alpha}
        &= (H_\kappa(A;d,\alpha)  \oplus_{\kappa}
        H_\kappa(B;d,\alpha) )^\frac{1}{\alpha} \nonumber \\
        &= (H_\kappa(A \cup B;d,\alpha))^\frac{1}{\alpha}
        = \mathcal{H}_\kappa(A \cup B;d,\alpha)
    \end{align}
    utilizing the composability of the trace coupled entropy \ref{lem_compext}. The extensivity is determined by 
            \begin{align} 
                \frac{\mathcal{H}_\kappa(N,\alpha,d)}{N} 
                &= \left(\frac{\frac{N^\alpha}{\alpha N_\sigma^\alpha}}{ N^\alpha} \right)^\frac{1}{\alpha}
                = \frac{1}{\alpha^\frac{1}{\alpha} N_\sigma}=c. 
            \end{align}
\end{proof}

\begin{theorem} \label{lem_CEmoment}
    The finite $\alpha$-moment property is fulfilled by the independent-equals $\alpha$-moment \ref{equ_IEmoment}.
\end{theorem}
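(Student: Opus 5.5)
The plan is to reduce the claim to a statement about tail exponents, using the modified coupling of the independent-equals distribution established in Lemma \ref{def_IEM}. The $\alpha$-moment required by Axiom 5 is the expectation of $u=r^\alpha/\alpha$, or of $\mathbf{x}^{\circ\alpha}$, under $f/^{q(\kappa,\alpha,d)}$ with $q=1+\frac{\kappa}{1+d_\alpha\kappa}$. It is finite exactly when the radial integrand decays faster than $u^{-1}$. Accordingly, I will split into the three domains of Axiom 4: sub-exponential $\kappa>0$, exponential $\kappa=0$, and compact support $-\frac{1}{1+d_\alpha}<\kappa<0$.

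For $\kappa=0$, $q=1$ and the density is a stretched exponential, so every moment is finite. For $\kappa<0$ the support is compact, with $u\le -1/\kappa$, and the escort density $f^q$ is integrable because $q>0$ on this range. The moment is therefore finite trivially, and it suffices to check that the normalizing integral $\int f^q$ converges. Since $q\,\frac{1+d_\alpha\kappa}{-\kappa}>-1$ near the boundary, it does.

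The substantive case is $\kappa>0$. I will write $f^q$ in radial coordinates as in the proof of Theorem \ref{lem_unique}. The exponent of the coupled exponential becomes
\[
q(1+d_\alpha\kappa)=1+(1+d_\alpha)\kappa .
\]
The escort is then again a CSED with coupling $\kappa'=\frac{\kappa}{1+\kappa}$ and scale $\sigma'^\alpha=\sigma^\alpha/(1+\kappa)$. This uses the same bookkeeping as part 1 of Lemma \ref{def_IEM}, now carried out with the factor $d_\alpha$. Its radial tail behaves like $u^{d_\alpha-1}\,u^{-\frac{1+(1+d_\alpha)\kappa}{\kappa}}$, so the moment integrand behaves like $u^{d_\alpha-\frac{1}{\kappa}-1-d_\alpha}=u^{-1-\frac{1}{\kappa}}$. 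This is integrable at infinity for every $\kappa>0$, with no upper limit on $\kappa$, in contrast with the ordinary moment, which requires $\kappa<1$.

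To make the result quantitative rather than merely a finiteness statement, I will reuse the Beta-function computation in the proof of Theorem \ref{lem_unique}. The ratio $B[d_\alpha+1,\tfrac1\kappa]/\bigl(\tfrac{\kappa}{\alpha}B[d_\alpha,\tfrac1\kappa+1]\bigr)$ is finite for all $\kappa>0$ and gives the value $d_\alpha$ for the normalized variable. Equivalently, part 2 of Lemma \ref{def_IEM} gives $\mu_\alpha^{(q)}=\sigma^\alpha$, which is finite, with $\sigma'^\alpha/(1-\kappa')=\sigma^\alpha$.

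The main obstacle is the bookkeeping of the escort exponent in $d$ dimensions. I must confirm that the exponent shift really cancels the $u^{d_\alpha}$ volume factor, so that the tail exponent $-1-1/\kappa$ is independent of $d$ and $\alpha$. I will verify this first by direct comparison of the exponents before invoking the Beta identities.
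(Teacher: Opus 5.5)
Your proposal is correct and follows essentially the same route as the paper. The paper's proof also rests on the fact that the independent-equals distribution with $q=1+\frac{\kappa}{1+d_\alpha\kappa}$ is again a coupled stretched exponential with coupling $\kappa'=\frac{\kappa}{1+\kappa}<1$, which forces the $\alpha$-moment to be finite; your tail exponent $u^{-1-1/\kappa}$ is the explicit form of that fact. Where the paper only asserts this, you additionally check the $d$-dimensional exponent bookkeeping, compute the moment's value, and treat the cases $\kappa\le 0$ explicitly.
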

\begin{proof}
    A $d$-dimensional system with short-range stretched exponential parameter $\alpha$ has a stretched exponential distribution $p(\mathbf{x}) \propto \exp^{-1}\left(\frac{r^\alpha}{\alpha}\right)$, where $r^\alpha = \mathbf{x}^{\circ \sfrac{\alpha}{2} \top} 
(\boldsymbol{\Sigma}^{\circ \sfrac{\alpha}{2}})^{-1}\mathbf{x}^{\circ \sfrac{\alpha}{2}}$ is the radial variable. If the system also has long-range interactions governed by the nonlinear statistical coupling $\kappa$, the distribution is modified to a coupled exponential $p_i \propto \exp_{\kappa}^{-(1+d_\alpha\kappa)}\left(\frac{r^\alpha}{\alpha}\right)$. The cross-dimensional independent-equals $\alpha$-moments of this distribution are 
    \begin{align}
        \mathbb{E}_\kappa[x_i^{\frac{\alpha}{2}}x_j^{\frac{\alpha}{2}}] 
        = \int_{X_i} \int_{X_j}
        x_i^{\frac{\alpha}{2}}x_j^{\frac{\alpha}{2}}
        p(\mathbf{x})/^{1+\frac{\kappa}{1+d_\alpha\kappa}}
        \mathrm{d}F(x_i) \mathrm{d}F(x_j)
        = \sigma_{ij}^\alpha,
    \end{align}
where $\sigma_{ij}$ is the $ij^{th}$ element of $\mathbf{\Sigma}$. The coupling of the independent-equals distribution is $\kappa'=\frac{\kappa}{1+\kappa}$, therefore as $\kappa \rightarrow \infty$, the modified coupling  $\kappa'\rightarrow1$,  ensuring that the moments $m \leq \sfrac{\alpha}{\kappa'}$ are finite.
\end{proof}

Having established a strong theoretical foundation for the information thermodynamics of complex systems, let us now investigate a sample of the broad applications these methods will enable.

\subsection{Applications}\label{subsec_apps}
Having established the uniqueness, universality, and axiomatic foundations for the coupled entropy, a plethora of applications will benefit from the improved analytical precision of its solutions. First, the coupling parameter is compared with other proposed measures of statistical complexity. Then, a zeroth law of thermodynamics for NESS systems with a temperature equal to the scale is proposed. Finally, models of complex infodynamics for intelligence and communications are reviewed.

\subsubsection{A Measure of Statistical Complexity}\label{subsubsec_Complexity}
A consensus regarding a precise definition of statistical complexity has not been established, though there is agreement that purely ordered and disordered states do not have complex structure. There are two aspects of the coupling parameter that suggest it could be a measure of statistical complexity:
\begin{enumerate}
    \item The multiplicative noise model consists of an ordered function and two purely disordered sources of white noise. The addition of these components is still a non-complex system; however, the multiplicative noise component is mediated by the coupling $\kappa,$ and thus this coefficient could measure the degree of complexity.
    \item The BGS measure of information for the coupled exponential distribution is $1+\ln\sigma+\kappa.$ The non-complex information could be associated with the exponential distribution, $1+\ln\sigma$, thereby suggesting that $\kappa$ would be the complex information.
\end{enumerate}

To examine the role of the coupling in measuring complexity more systematically, let's consider a measure based on a comparison of the entropy of a complex distribution with its non-complex equilibrium distribution proposed by Shiner, Davison, and Landsberg (SDL) \cite{shinerSimpleMeasureComplexity1999}. The López-Ruiz, Mancini, and Calbet (LMC) \cite{ruizStatisticalMeasureComplexity2013, martinGeneralizedStatisticalComplexity2006, rudnickiMonotoneMeasuresStatistical2016} considers a broader range of divergences but reduces to the SDL function given the Kullback-Leibler divergence. These measures of complexity seek to account for the distance between order and disorder via the multiplication of complementary ratios with respect to the maximum or equilibrium entropy $H_e$,
\begin{align}
    C^{SDL}
    = \frac{H}{H_e}
    \left(1-\frac{H}{H_e}\right).
\end{align}
Take the exponential distribution and the Gaussian distribution to be the maximum entropy equilibrium distributions. If we consider the coupled exponential (ce) and coupled Gaussian (cg) distributions in which the scale is restricted to satisfy the respective constraint of the equilibrium (e) mean or standard deviation, then the scales of these distributions are $\sigma_{ce} = \sigma_{e}(1-\kappa)$ and $\sigma_{cg} = \sigma_{e}\sqrt{1-\kappa}$, respectively. Note that this  restricts the coupling to $\kappa<1$. First order Taylor expansion  of $C^{SDL}$ shows that the coupling is proportional to the complexity. 

For the coupled exponential case, the complexity measure is:
\begin{align}
    C_{ce}^{SDL}
    &= \left(
    \frac{1+\ln \sigma_{ce}+\kappa}{1+\ln \sigma_{e}}
    \right)
    \left(
    1 - \frac{1+\ln \sigma_{ce}+\kappa}{1+\ln \sigma_{e}}
    \right) \nonumber \\
    &=\left(
    \frac{1+\ln (\sigma_{e}(1-\kappa))+\kappa}{1+\ln \sigma_{e}}
    \right)
    \left(
    1 - \frac{1+\ln (\sigma_{e}(1-\kappa))+\kappa}{1+\ln \sigma_{e}}
    \right) \nonumber \\
    &\approx \frac{\kappa}{1+\ln \sigma_e} + O(\kappa^2)
\end{align}
The entropy of the coupled Gaussian to first order is $H_{cg}=\frac{1}{2}+\frac{1}{2}\ln 2\pi \sigma_{cg}+\frac{\kappa}{2} +O(\kappa^2)$. The complexity of the coupled Gaussian relative to the Gaussian is thus: 
\begin{align}
    C_{cg}^{SDL}&=\left(
        \frac{\frac{1}{2}+\frac{1}{2}\ln2\pi\sigma_e\sqrt{1-\kappa}+\frac{\kappa}{2}}
        {\frac{1}{2}+\frac{1}{2}\ln2\pi\sigma_e}
    \right)
    \left(1-
        \frac{\frac{1}{2}+\frac{1}{2}\ln2\pi\sigma_e\sqrt{1-\kappa}+\frac{\kappa}{2}}
        {\frac{1}{2}+\frac{1}{2}\ln2\pi\sigma_e}
    \right) \nonumber \\
    &\approx \frac{\kappa}{2+2\ln(2\pi\sigma_e)}+O(\kappa^2)
\end{align}

These computations confirm that, to first order, the nonlinear statistical coupling is a measure of the statistical complexity, with the refinement that dividing by a measure of the equilibrium entropy and the stretching parameter improves the metric. 

\subsubsection{Thermodynamics of Complex Systems}\label{subsubsec_thermo}

The zeroth, first, and second laws of thermodynamics can be viewed as establishing fundamental definitions for temperature, energy, and entropy respectively. The third law of thermodynamics establishes a minimum entropy at zero degrees Kelvin, which will not be part of this discussion. The laws of thermodynamics assume equilibrium in part because its important that temperature is an intensive property of a system, while energy and entropy are extensive. Extending these laws to non-equilibrium systems is challenging. For instance, the fluctuations associated with non-equilibrium make it unclear how a single temperature would be defined.

Nevertheless, physicists have sought approaches for defining the intensive and extensive properties of non-equilibrium steady-state systems (NESS). Such a generalization has been a principle objective of the research in nonextensive physics \cite{naudtsGeneralizedThermostatisticsBased2004, naudtsGeneralisedThermostatistics2011}. However, despite significant progress \cite{curadoGeneralAspectsThermodynamical1999, tsallisIntroductionNonextensiveStatistical2004, ferriEquivalenceFourVersions2005,nobreEffectivetemperatureConceptPhysical2012} there remain difficulties in defining a consistent generalized temperature which supports a full thermodynamic framework.  I will proof that the coupled entropy framework enables the definition of a generalized temperature independent of the nonlinear statistical coupling. Given the independence of the generalized temperature from the non-equilibrium fluctuations a generalized zeroth law of thermodynamics can be formulated. From this definition of the generalized temperature consistent expressions for the free energy can be defined using either the coupled algebra \ref{subsubsec_CPA}.

\begin{definition}[Generalized Zeroth Law of Thermodynamics]
    Let systems A, B, and C each possess the same nonlinear statistical coupling, $\kappa$. If, under weak thermal contact, A and C reach a joint non‑equilibrium steady state, and B and C also reach a joint non‑equilibrium steady state, then A and B, when placed under weak thermal contact, will reach a joint non‑equilibrium steady state with the same $\kappa$.
\end{definition}

This generalized zeroth law of thermodynamics depends on showing that there is a generalized temperature which is intensive, and preferably is independent of the temperature fluctuations, which are a function of $\kappa$. In equilibrium thermodynamics, the temperature is equal to the change in entropy with respect to the internal energy, $\frac{1}{T}=\frac{\mathrm{d}S}{\mathrm{d}U}$. For the generalized temperature to be an independent constant, there must be a function $V_\kappa(U_\kappa)$ which grows with the same extensive rate as $S_\kappa$. Recall that the generalized internal energy is the constraint based on the independent-equals moment, $U_\kappa = \mathbb{E}_{1+\frac{\kappa}{1+\kappa/\alpha}}[\epsilon^\alpha]$ \ref{equ_IEmoment}. Considering the stretching parameter, $\alpha=1$, the distribution of the energy is the coupled exponential distribution, $f(\epsilon)=\frac{1}{\sigma}\exp_\kappa^{-(1+\kappa)} \frac{\epsilon}{\sigma}$. Thus $V_\kappa$ must satisfy the relationship
\begin{align}
    \frac{1}{T_\kappa} &\equiv \frac{\partial S_\kappa}{\partial V_\kappa}
    = \frac{\partial S_\kappa}{\partial U_\kappa}
    \frac{\partial U_\kappa}{\partial V_\kappa}
    = \frac{1}{\sigma}.
\end{align}
Since, 
\begin{align}
    \frac{\partial S_\kappa}{\partial U_\kappa} 
    &= \frac{\partial (1 + \ln_\frac{\kappa}{1+\kappa} \sigma)}{\partial \sigma}
    \nonumber \\
    &= \sigma^{\frac{\kappa}{1+\kappa}-1} = \sigma^{-\frac{1}{1+\kappa}},
\end{align}
$\sfrac{\partial U_\kappa}{\partial V_\kappa}$ must equal $\sigma^{1+\kappa}$. With this definition of the generalized temperature, as long as the systems A, B, and C have the same fluctuations defined by $\kappa$, the systems share a common intensive classification based on the generalized temperature, $T_\kappa=\sigma$.

Turning to the first and second laws of thermodynamics, which require that energy is conserved and that heat flows spontaneously from hot to cold subsystems, these laws are encapsulated in the consistency of the expressions for the free energy.  In equilibrium the free energy is $F=U-TS=-T\ln Z$, thereby specifying the relationship between the macroscopic properties and the microscopic energies, respectively. For the NESS systems with coupling, $\kappa$, the free energy relationships must account for the non-additive property of the entropy. This begins by confirming the generalization of the entropy relationship, $S= \frac{U}{T}+\ln Z,$
\newpage
\begin{align}
    S_\kappa &\equiv \frac{U_\kappa}{T_\kappa}
    \oplus_\kappa\ln_\kappa Z^\frac{1}{1+\kappa}  \nonumber \\
    &= 1 \oplus_\kappa\ln_\kappa \sigma^\frac{1}{1+\kappa} \nonumber \\
    &= 1+ \ln_\frac{\kappa}{1+\kappa} \sigma.
\end{align}
Next, utilization of the coupled subtraction confirms the consistency of the coupled free energy,
\begin{align}
    F_\kappa &\equiv U_\kappa \ominus_\frac{\kappa}{T_\kappa^C} 
    T_\kappa S_\kappa 
    \equiv \ominus_\frac{\kappa}{T_\kappa^C} 
    T_\kappa^C \ln_\kappa Z^\frac{1}{1+\kappa}
    = T_\kappa^C \ln_\kappa Z^{-\frac{1}{1+\kappa}}\nonumber \\
    &= \sigma \ominus_\frac{\kappa}{\sigma} \sigma
    (1 \oplus_\kappa \ln_\kappa \sigma^\frac{1}{1+\kappa}) 
    \nonumber \\
    &=0 \ominus_\frac{\kappa}{\sigma} 
    \sigma \ln_\kappa \sigma^\frac{1}{1+\kappa} 
    \rightarrow \ominus_\frac{\kappa}{T_\kappa^C} 
    T_\kappa^C \ln_\kappa Z^\frac{1}{1+\kappa}. \checkmark
    \nonumber \\
    &= \frac{-\sigma \ln_\kappa \sigma^\frac{1}{1+\kappa}}{1+\kappa \ln_\kappa \sigma^\frac{1}{1+\kappa}} \nonumber \\
    &= \frac{-\frac{\sigma}{\kappa} \left(\sigma^\frac{1}{1+\kappa}-1\right)}
    {\sigma^\frac{1}{1+\kappa}} = \frac{\sigma}{\kappa}\left(\sigma^{-\frac{\kappa}{1+\kappa}}-1\right)
    = \sigma\ln_\kappa \sigma^{-\frac{1}{1+\kappa}}
\end{align}
This generalization of the free energy is consistent with treating the probabilities associated with the energy and the entropy of the system as independent, thereby requiring nonlinear combinations of their information.  That is, taking the coupled exponential of the free energy divided by temperature gives the following expression,
\begin{align}
    \exp_\kappa^{-(1+\kappa)} \frac{F_\kappa}{T_\kappa^C} 
    = \exp_\kappa^{-(1+\kappa)}
    \left(\frac{U_\kappa}{T_\kappa^C} \ominus_\kappa 
     S_\kappa \right) \nonumber \\
    = \frac{\exp_\kappa^{-(1+\kappa)}\frac{U_\kappa}{T_\kappa^C}}
    {\exp_\kappa^{-(1+\kappa)}S_\kappa}.
\end{align}

Thus we have a consistent foundation for the thermodynamics of complex systems in non-equilibrium steady state.  This short introduction only outlines the proposed generalization of the laws of thermodynamics, which is intended to motivate deeper research which will probe the strengths and limitations of this model.

While not considered as strong a candidate, its noteworthy that the Lagrange multiplier can also be used to define a generalized temperature, which is equal to the derivative of the entropy with respect to the energy,
\begin{align}
    \frac{1}{T_\kappa^M} &\equiv \frac{\partial S_\kappa}{\partial U_\kappa} = \sigma^{-\frac{1}{1+\kappa}}.
\end{align}
With this definition of the generalized temperature, the equilibrium relationships for the coupled entropy and free energy are unchanged,
\begin{align}
    S_\kappa &= \frac{U_\kappa}{T_\kappa^M} + \ln_\kappa Z^\frac{1}{1+\kappa}
    \nonumber \\
    &= \sigma^{1-\frac{1}{1+\kappa}} 
    + \ln_\kappa \sigma^\frac{1}{1+\kappa} \nonumber \\
    &= \sigma^\frac{\kappa}{1+\kappa} 
    + \frac{1}{\kappa}(\sigma^\frac{\kappa}{1+\kappa} -1) \nonumber \\
    &= \frac{1+\kappa}{\kappa}\sigma^\frac{\kappa}{1+\kappa} - \frac{1}{\kappa} \nonumber \\
    &= 1 + \ln_\frac{\kappa}{1+\kappa} \sigma.
\end{align}

And the two expressions for the free energy are equal,
\newpage
\begin{align}
    \text{First Expression:} \nonumber \\
    F_\kappa &= U_\kappa-T_\kappa S_\kappa \nonumber \\
    &= \sigma - \sigma^\frac{1}{1+\kappa}
    (1+\ln_\frac{\kappa}{1+\kappa} \sigma) \nonumber \\
    &= \sigma + \frac{1}{\kappa}\sigma^\frac{1}{1+\kappa}
    - \frac{1+\kappa}{\kappa} \sigma \nonumber \\
    &= \frac{1}{\kappa}(\sigma^\frac{1}{1+\kappa} - \sigma); \\
    \text{Second Expression:} \nonumber \\
    F_\kappa &= - T_\kappa \ln_\kappa Z_\kappa^\frac{1}{1+\kappa}
    \nonumber \\
    &= - \sigma^\frac{1}{1+\kappa} 
    \ln_\kappa \sigma^\frac{1}{1+\kappa} \nonumber \\
    &= \frac{1}{\kappa}(\sigma^\frac{1}{1+\kappa} - \sigma). \checkmark
\end{align} 
So, the coupled entropy framework achieves two complementary, consistent models for defining a generalized temperature.

Physical evidence for the significance of this framework comes from the research of J. Cleymans, D. Worku, and colleagues \cite{cleymansTsallisDistributionProton2012,cleymansSystematicPropertiesTsallis2013} in their thermodynamic model of high-energy particle collisions. Although they called the distribution Tsallis-B, its physical definitions are closer to the CED, with $q^{CM}=1+\kappa$ and the correct association of the informational scale with the temperature, $\sigma=T$. Table \ref{tab_energy} shows the temperature and coupling results reported in \cite{cleymansSystematicPropertiesTsallis2013} along with computations of the coupled entropy and coupled free energy. An important objective for future research is to whether the numerical values for the entropy and free energy can be compared with independent characteristics, thereby strengthening the physical understanding of complex systems.
 
\begin{table}[h]
\centering
\caption{\textbf{Thermodynamic Properties of Particle Collisions}}
\label{tab_energy}
\begin{tabular}{ccccc}
\toprule
\textbf{Beam Energy (TeV)} & \textbf{Temperature (MeV)} & \textbf{Coupling} & \textbf{Coupled Entropy} & \textbf{Coupled Free Energy} \\
\midrule
0.54 (UA1) & $77.59 \pm 1.40$ & $0.1175 \pm 0.0014$ & $6.52 \pm 0.11$ &  $-242.4 \pm 6.8$ \\
0.9 (ALICE) & $75.45 \pm 3.18$ & $0.1305 \pm 0.0031$ & $6.61 \pm 0.23$ & $-227.2 \pm 13.9$ \\
0.9 (ATLAS) & $83.89 \pm 1.35$ & $0.1217 \pm 0.0007$ & $6.69 \pm 0.06$ & $-263.0 \pm 5.2$ \\
2.36 (ATLAS) & $75.79 \pm 4.01$ & $0.1419 \pm 0.0025$ & $6.73 \pm 0.19$ & $-222.2 \pm 14.3$\\
7 (ATLAS) & $82.42 \pm 1.30$ & $0.1479 \pm 0.0008$ & $6.94 \pm 0.06$ & $-241.6 \pm 4.7$ \\
\bottomrule
\end{tabular}
\end{table}

\subsubsection{Information and Intelligence in non-equilibrium} \label{subsubsec_info}
A consistent generalization of thermodynamics is crucial to establishing a coherent model for information and intelligence in non-equilibrium environments. This section will demonstrate how optimization of the coupled free energy advanced the design of variational inference for extreme environments, can impact on the active inference model of neuroscience, and can improve the design of congested communication systems.

Within neuroscience, the predictive coding model \cite{fristonPredictiveCodingFreeenergy2009, smithRecentAdvancesApplication2021} is a leading candidate for explaining the efficiency and effectiveness of biological intelligence. The model is based on local updates at each neuron in which error signals from lower layers are compared with predictions from higher layers, with the difference generating new error and prediction signals. Nevertheless, within artificial intelligence, global signaling via backpropagation \cite{millidgePredictiveCodingFuture2022} dominates the training of deep learning algorithms \cite{smithRecentAdvancesApplication2021, rosenbaumRelationshipPredictiveCoding2022}. Variational inference \cite{bleiVariationalInferenceReview2017}, in which probabilistic models are learned and utilized for generative and discriminative capabilities, may be a bridge between these two paradigms. Variational inference, including predictive coding, turns the intractable problem of learning an unknown posterior distribution into an optimization problem of approximating that distribution given a parameterized family of distributions. The negative of the evidence lower bound of the optimization is equivalent to the informational free energy and consists of a negative log-likelihood, which is the error signal between a generated and original dataset, and the divergence between the posterior and prior latent distribution, which is a regulator maintaining model simplicity.

Expanding the scope of variational inference to non-exponential distributions could be crucial to improving neurological models and accelerating the capability of artificial systems. Thus, variational inference methods that leverage non-exponential information are an active area of research \cite{goertzelActPCGeomScalableOnline2025}. Such methods could be deployed in the heavy-tailed domain to improve inference robustness \cite{caoCoupledVAEImproved2022}or in the compact-support domain \cite{martinsNonextensiveInformationTheoretic2009} to improve efficiency via spareness. Efforts to utilize the Rényi \cite{liRenyiDivergenceVariational2016}, Tsallis \cite{kobayashisQVAEDisentangledRepresentation2020}, and an initial implementation with the coupled entropy, were limited to the domain of finite variance $(\kappa<1)$ to ensure convergence of the training. Recently, the coupled Variational Autoencoder (CVAE) \cite{nelsonVariationalInferenceOptimized2025} was designed to draw samples from the independent-equals modification to the latent coupled Gaussian distribution, thereby guaranteeing that even the most extreme heavy-tailed distributions with $\kappa \gg 1$ could be trained. This is because the coupling and scale of the latent distribution are transformed by $\frac{\kappa}{1+\kappa}$ and $\frac{\sigma}{1+\kappa},$ respectively. This design demonstrated improvements of $10\%$ and $25\%$ in measures of Learned Perceptual Image Patch Similarity and Multiscale Structural Similarity, respectively, across  coupling values ranging from $10^{-5} \text{ to } 10^5.$ The code base for the nonlinear statistical coupling methods and the CVAE algorithm is referenced in Methods subsection \ref{subsec_github}.

A further development in neuroscience is the extension of variational to active inference \cite{parrActiveInferenceFree2022}, by which planning, decision-making, and actions are incorporated into the same theoretical framework. Traditionally, decisions and actions have been modeled as optimizing utility functions. An important example of this is the reinforcement algorithm in which an optimal expected utility is chosen from a variety of plans. Friston, \cite{fristonFreeEnergyPrinciple2006} showed that normalization of the utility is equivalent to a prior distribution weighting an agents preferences. An agent, such as a cell, is separated by from its environment, such as fluid host, by a Markov Blanket\cite{bruinebergEmperorsNewMarkov2022}, the cell membrane. Importantly, the Markov Blanket must provide a conditionally independent boundary between the agent and its surrounding; however, this condition assumes equilibrium. The coupled free energy function enables the definition of a pseudo-Markov Blanket in which non-equilibrium fluctuations that transverse the boundary could be included in the model. The transition from an exponential equilibrium model to a non-exponential non-equilibrium model would still require that the agent/environment cross terms of the correlation matrix be zero but now a nonlinear dependence between the boundary would be permitted. While developed for neuroscience, the active inference model is a general intelligence model that can be applied to variety of agent types navigating turbulent environments. 

In parallel with the complexity of AI systems, modern communications systems, such as 5G wireless and cognitive radio networks, are facing increased complexity as dense traffic creates heavy-tailed interference. In contrast to a Gaussian noise channel, interference fluctuates in intensity depending on the instantaneous use of the channel. For example, Clavier, et al. measured \cite{clavierExperimentalEvidenceHeavy2021} tails shapes above $\kappa>2,$ in 5G networks, indicative of quite extreme fluctuations. In cognitive radio designs \cite{munozRenyiEntropyBasedSpectrum2020} secondary users much adaptively switch to open channels in order to guarantee non-interference with primary users. These applications cannot use standard metrics such as the mean-square error or the Shannon entropy, since the outliers invalidate the metrics foundational assumptions.

To date, the Rényi entropy has been the most common generalized entropy function used in information systems \cite{principeInformationTheoreticLearning2010, hildAnalysisEntropyEstimators2006, erdogmusAdaptiveBlindDeconvolution2004}. The probability analysis of entropy function presented in Section \ref{subsec_required} provides insight regarding why the Rényi entropy, despite being non-optimal, is nevertheless, effective in these circumstances. A high-interference channel characterized by shape between $1<\kappa/2<2, $ such as the distributions shown in Figure \ref{fig_entropies}, has a Shannon entropy measure dominated by the tail. The Rényi entropy dramatically reduces this influence, if its index matches this tail shape via $q(\kappa,\alpha,d)=1+\frac{\kappa}{1+d_\alpha\kappa}.$\footnote{As explained in Methods subsection \ref{subsec_Entropies} $q$ is used for the Rényi index, since $\alpha$ is being used for the stretching parameter.} The coupled entropy framework is expected to further improve communication design in the presence of heavy-tailed noise in two ways. First, the connection between multiplicative noise  \ref{subsubsec_noise}, heavy-tailed tail shape, and the index of the coupled entropy are made explicit, which will make the design choices more explicit. Second, the full discounting of the tail shape influence on the entropy measurement requires the coupled entropy.

\section{Discussion}\label{sec_disc}
In this paper, a set of theorems are proved establishing the coupled entropy as the unique, universal entropy for the measurement of uncertainty in complex systems. This result advances the entropic modeling of complex information thermodynamics initiated by Rényi, who originated the use of the generalized mean, Sharma, who pioneered the study of non-additive entropies, and Tsallis, who showed that the constraints for entropy maximization should raise the probability distribution to a power. The Tsallis $q$-statistics framework is shown to fall short of a physical model due to the non-physical choice of parameters. 

The axiomatic foundations for the coupled entropy are defined such that a) the scaling classes defined by the first three Shannon-Khinchin axioms are not restricted, b) the relationship between the composability and extensivity defines the coupled logarithm matching the growth of states defined by the coupled stretched exponential function, and c) the requirement for a finite $\alpha$-moment specifies the independent-equals moment.

The \textit{Nonlinear Statistical Coupling} \cite{nelsonNonlinearStatisticalCoupling2010} framework provides a set of properties and principles for modeling the physical connection between complex dynamics, non-exponential distributions, and information thermodynamics, including:
\begin{itemize}
    \item The nonlinear statistical coupling, $\kappa$, can be isolated and quantified, particularly by an examination of the simplest nonlinear differential equation.
    \item A second nonlinearity, the stretching parameter, $\alpha$, quantifies the short-range dynamics and the exponential shape of the CSED near the location.
    \item Together, $\kappa/\alpha$ is the asymptotic tail shape of power-law distributions.
    \item These short- and long-range nonlinearities are mediated by a unique information scale, $\sigma$, which is defined by the point where the log-log slope of the CSED is -1, and that quantifies the linear source of uncertainty.
    \item The symmetry between the coupling as a multiplier and and as an inverse exponent applies to the survival function, not the probability density function, $S_\kappa(x) = P(X>x) \equiv \exp_\kappa^{-1} \left(\frac{x^\alpha}{\alpha\sigma^\alpha}\right)$. This requirement, which is the only way to correctly define the information scale, was the foundational ambiguity in $q$-statistics that created a misalignment between the $\beta$ parameter and the physical property of temperature.
    \item Raising a probability to a power, $p^{q(\kappa,\alpha,d)}$, where $q(\kappa,\alpha,d)=1+\frac{\kappa}{1+d\kappa/\alpha}$, is properly interpreted to be the probability of $q$ independent, equal random variables. Thus $q$ is a secondary rather primary property of complex systems.
    \item An improved intuition regarding entropy is gained by translating the measure back to the probability domain, where it defines the average density of a probability density function.
    \item There is a unique entropy function for the measurement of uncertainty in nonlinear, complex systems. The requirement is that entropy measure the uncertainty at the scale of the maximizing distribution.
    \item The unique entropy for complex systems (coupled entropy) has three essential components:
    \begin{itemize}
        \item the coupled logarithm, inverse to the coupled exponential of the pdf; this coupled logarithm has coupled sum composition quantified by $\kappa$ and extensive scaling quantified by $\frac{\kappa}{1+d\kappa/\alpha}$;
        \item the average is taken over the independent-equals moment;
        \item an optional non-trace form modifies the scaling to be $\frac{\kappa/\alpha}{1+d\kappa/\alpha}$.
    \end{itemize}
\item The uniqueness of the coupled entropy does not restrict its universality as defined by the two-parameter Hanel-Thurner scaling.
\item The two axioms beyond the first three Shannon-Khinchin axioms specifying the unique, universal entropy for complex systems establish
\begin{itemize}
    \item  the relationship between composability and extensivity, and
\end{itemize}
\begin{itemize}
    \item  the independent-equals moment given the requirement that entropy be measurable for all nonlinear systems.
\end{itemize}
\item A foundation for information thermodynamics is established with a consistent temperature equal to the information scale, and free energy function that incorporates the nonlinearity via the coupled subtraction.
\end{itemize}

The proof of the coupled entropy's uniqueness and universality provides a foundation for a deeper understanding of complex phenomena and thereby the creation of new applications for the improvement of living and artificial systems. Already, the development of robust variational inference has been demonstrated, whereby models with near infinite Shannon entropy can be trained without losing stability \cite{nelsonVariationalInferenceOptimized2025}. Research programs are underway to properly characterize the thermodynamics of non-equilibrium steady-state systems and to build intelligent agents using active inference with the coupled free energy that can survive and thrive in these complex systems.

The design of systems that can respond adaptively to the complexities of a natural environment is fundamental to modern science and engineering. The heavy-tailed interference of dense, adaptive communications networks is indicative of these challenges. AI developers aspire to mimic and integrate with biological intelligence \cite{ikleArtificialGeneralIntelligence2025}; city planners seek ways of assuring that human habitats can flourish from and contribute back to natural habitats \cite{whiteModelingCitiesRegions2015}; and modern finance requires management of complex risks impacted by political and environmental instabilities \cite{glennGlobalGovernanceTransition2025}. Each of these domains requires detailed modeling of the nonlinearities that create fluctuating noise patterns. The unique, universal entropy function for complex adaptive systems, fulfilled by the nonlinear statistical coupling framework, will be a crucial tool in the development of these precise analytical tools.

\section{Methods}\label{sec_methods}
\subsection{Maximization of the Coupled Entropy}\label{subsec_maxCE}
A core methodology for information thermodynamics is the derivation of the distribution that maximizes an entropy function using the Lagrangian function. In this section, the proof that the coupled stretched exponential distributions maximizes the coupled entropy is completed. As was established in the preliminaries \ref{subsec_prelim}, the proper definition for the survival function of the shape-scale distributions is established from the foundational nonlinear differential equation. This definition assures that the maximizing distributions are the shape-scale distributions given the independent-equals constraints. While the proof focuses on the shape-scale distributions, a variety of other models are possible by including additional constraints. Without loss of generality, the dimension is set to one and the location is set to zero.

\begin{lemma}[Coupled Entropy Maximized by the Coupled Exponential Distribution]
    Given the coupled entropy 
    \begin{equation} H_\kappa(f(x)) = \int_{x\in\mathcal{X}}  
        f/^{\left(\frac{1+(1+1/\alpha)\kappa}{1+\kappa/\alpha}\right)}(x)\ln_{\kappa}\left(f(x)\right)^{-\frac{1}{1+\kappa/\alpha}}
    \end{equation} 
and  two constraints, the normalization and  the independent equals moment, $\mu_1^{(1+\frac{\kappa}{1+\kappa/\alpha})}=\sigma^\alpha$;
then, the maximum coupled entropy distribution is the coupled stretched exponential of equation \eqref{def_CSED}, 
\begin{align}
f_\kappa(x;\alpha) &= \frac{1}{Z_\kappa(\sigma,\alpha)}\left(1+\kappa\frac{x^\alpha}{\alpha\sigma^\alpha}\right)^{-\frac{1+\kappa/\alpha}{\kappa}}
=\exp_{\kappa}^{-(1+\kappa/\alpha)}
\left(\frac{x^\alpha}{\alpha \sigma^\alpha}
\oplus_{\kappa} \ln_{\kappa} Z^\frac{1}{1+\kappa/\alpha}
\right).
\end{align}
The Lagrangian multiples are
\begin{align}
    \begin{matrix}
        \text{Normalization:} & \lambda_0 =  
        - \frac{1} {(1+\kappa/\alpha) 
    \int_{x\in\mathcal{X}} 
    f(x)^{1+\frac{\kappa}{1+\kappa/\alpha}} dF(x)}; \\
    \text{Independent Equals Moment:} & \lambda_1 =
    \frac{Z_\kappa^\frac{\kappa}{1+\kappa/\alpha} }
     {\alpha \sigma^\alpha}.
    \end{matrix}
\end{align}
\end{lemma}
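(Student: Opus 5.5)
Since the coupled entropy is a ratio of integrals, I will not vary it directly. Instead I will use the equivalent arithmetic-average form from equation \eqref{equ_CEtrace}. Write $q\equiv 1+\frac{\kappa}{1+\kappa/\alpha}$ and $\kappa_\alpha\equiv\frac{\kappa}{1+\kappa/\alpha}$, and treat the escort normalizer $N[f]\equiv\int f^{q}\,dx$ as a functional of $f$. The entropy can then be written as
\begin{align}
H_\kappa(f)=\frac{1}{\kappa N[f]}\int f^{q}\left(f^{-\kappa_\alpha}-1\right)dx
=\frac{1}{\kappa}\left(\frac{1}{N[f]}-1\right),
\end{align}
because $f^{q}f^{-\kappa_\alpha}=f$ integrates to one. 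This reduction is the key simplification. On the normalized simplex, maximizing $H_\kappa$ is equivalent to minimizing $N[f]$ when $\kappa>0$ and maximizing it when $\kappa<0$. The independent-equals moment constraint is $\int x^\alpha f^{q}\,dx=\sigma^\alpha N[f]$. I will form the Lagrangian
\begin{align}
\mathcal{L}=H_\kappa(f)-\lambda_0\left(\int f\,dx-1\right)-\lambda_1\,\frac{\int \tfrac{x^\alpha}{\alpha}f^{q}\,dx}{N[f]}
\end{align}
and take its pointwise functional derivative with respect to $f(x)$.

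Setting the variation to zero gives an equation of the form $A\,f^{q-1}(x)\left(c_1+c_2\frac{x^\alpha}{\alpha}\right)=\lambda_0$. Here $c_1$ and $c_2$ are constants involving $N[f]$ and $\lambda_1$, and the $-\lambda_1\sigma^\alpha/\alpha$ terms from differentiating the denominator are absorbed into $c_1$. Since $q-1=\kappa_\alpha$, solving gives
\begin{align}
f(x)\propto\left(1+\kappa\frac{x^\alpha}{\alpha s^\alpha}\right)^{-\frac{1+\kappa/\alpha}{\kappa}}
\end{align}
for some effective scale $s$. This is exactly the coupled stretched exponential of \eqref{def_CSED} with the exponent $-1/\kappa_\alpha$ matched.

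The next step is to identify $s$ with $\sigma$. I will invoke Lemma \ref{def_IEM}(2), which says the independent-equals $\alpha$-moment of a coupled stretched exponential with scale $s$ equals $s^\alpha$. The moment constraint then forces $s=\sigma$, and normalization fixes $Z_\kappa(\sigma,\alpha)$. Back-substituting gives the multipliers. Using $N[f]=\int f^{q}$, $\lambda_0$ takes the form $-\frac{1}{(1+\kappa/\alpha)N[f]}$. $\lambda_1$ follows by matching the coefficient of $x^\alpha/\alpha$ against $\kappa/(\alpha\sigma^\alpha)$ after pulling $Z_\kappa^{-\kappa_\alpha}$ out of $f^{\kappa_\alpha}$, which yields $\lambda_1=Z_\kappa^{\kappa_\alpha}/(\alpha\sigma^\alpha)$.

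I expect the main obstacle to be the bookkeeping in the functional derivative. Both the entropy and the constraint involve the normalizer $N[f]$, so the stationarity equation picks up cross terms. The multipliers must be normalized to the paper's convention so that they match the stated $\lambda_0$ and $\lambda_1$; I will fix sign and scale conventions by demanding consistency at $\kappa\to0$, where the problem reduces to the Boltzmann–Gibbs–Shannon case with multipliers $-1$ and $1/(\alpha\sigma^\alpha)$. Finally, for the claim that the stationary point is a maximum rather than just a critical point, I will use the reduction $H_\kappa=\frac{1}{\kappa}(1/N-1)$ together with convexity of $f\mapsto\int f^{q}$ for $q>1$ (and concavity for $q<1$) on the affine constraint set. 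This is the step where the domain restriction $\kappa>-\frac{1}{1+1/\alpha}$ enters.
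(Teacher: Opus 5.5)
Your reduction $H_\kappa=\frac{1}{\kappa}\bigl(N[f]^{-1}-1\bigr)$ is a cleaner route to the functional form than the paper's quotient-rule variation of $A/B$, and it does give $f^{-\kappa_\alpha}\propto 1+\kappa\frac{x^\alpha}{\alpha\sigma^\alpha}$.

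\textbf{The gap is $\lambda_0$.} You assert that your Lagrangian returns the stated $\lambda_0$, but it does not. The reduction uses $\int f\,dx=1$: it replaces $\frac{1}{\kappa N}\bigl(\int f\,dx-N\bigr)$ by $\frac{1}{\kappa N}(1-N)$. The multiplier attached to a constraint depends on how the objective is extended off that constraint. Carrying out your variation gives
\begin{align*}
\lambda_0=-\frac{q f^{\kappa_\alpha}}{\kappa N^2}\Bigl[1+\frac{\lambda_1\kappa N}{\alpha}\bigl(x^\alpha-\sigma^\alpha\bigr)\Bigr].
\end{align*}
Matching this to the CSED, with $N=Z_\kappa^{-\kappa_\alpha}/(1+\kappa/\alpha)$, yields
\begin{align*}
\lambda_1=\frac{Z_\kappa^{\kappa_\alpha}}{\sigma^\alpha},\qquad
\lambda_0=-\frac{q}{\kappa N}=-\frac{1}{\kappa N}-\frac{1}{(1+\kappa/\alpha)N}.
\end{align*}
The factor $\alpha$ in $\lambda_1$ is harmless. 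It comes from the $\frac{x^\alpha}{\alpha}$ you placed in the constraint; the paper uses $\lambda_1(\sigma^\alpha-\mu)$.

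The extra $-\frac{1}{\kappa N}$ in $\lambda_0$ is not harmless. It is additive and diverges as $\kappa\to0$, so no sign or scale convention fixed by a $\kappa\to0$ limit can remove it. Your anchor is also off: the ordinary BGS Lagrangian gives normalization multiplier $\ln Z-1$, not $-1$. The value $-1$ is an artifact of the ratio representation.

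To obtain the stated $\lambda_0$, vary the unreduced ratio as the paper does. There $\delta A/\delta f=\frac{1}{\kappa}\bigl(1-qf^{\kappa_\alpha}\bigr)$ supplies the extra constant $\frac{1}{\kappa B}$ that appears in \eqref{equ_Lder}. Equivalently, shift your multiplier by $+\frac{1}{\kappa N}$ and state explicitly which off-constraint extension of $H_\kappa$ the quoted $\lambda_0$ refers to.

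\textbf{The maximality argument also needs repair.} The independent-equals constraint $\int(x^\alpha-\sigma^\alpha)f^{q}\,dx=0$ is not affine in $f$. So the feasible set is not convex, and convexity of $f\mapsto\int f^q$ does not upgrade the critical point to a global optimum.

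The argument does work in the escort variable $g=f^q/N$. Both constraints are linear in $g$, $N=\bigl(\int g^{1/q}\bigr)^{-q}$, and $H_\kappa=\frac{1}{\kappa}\bigl((\int g^{1/q})^{q}-1\bigr)$. Hence for $\kappa>0$ you maximize the strictly concave $\int g^{1/q}$. For $-\frac{1}{1+1/\alpha}<\kappa<0$, which is exactly $0<q<1$, you minimize the strictly convex one; this is where the domain restriction enters.

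The paper itself only establishes stationarity, so this step goes beyond its proof. If you include it, it must be done in $g$, not $f$.
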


\begin{proof}
    Given the coupled entropy function and the two constraints, the Lagrangian function is:
\begin{align}
\mathcal{L} &= \int_{x\in\mathcal{X}} f/^{\left(1+\frac{\kappa}{1+\kappa/\alpha}\right)} (x)
\ln_{\kappa} f(x)^{-\frac{1}{1+\kappa/\alpha}} dF(x) 
+ \lambda_0 \left(1 - \int_{x\in\mathcal{X}} f(x)  dF(x) \right) \nonumber \\
&\quad+ \lambda_1 \left( \sigma^\alpha -
\int_{x\in\mathcal{X}} x^\alpha f/^{\left(1+\frac{\kappa}{1+\kappa/\alpha}\right)} (x)dF(x)
\right),
\end{align}
where $\lambda_0$ and $\lambda_1$ are the Lagrangian multiples for the normalization and independent equals alpha moment, respectively. For maximization, the derivative must be zero, \(\delta \mathcal{L} / \delta f = 0\). The derivative is in terms of a particular value $y=x'$ such that the integrals only have a non-zero derivative at $y$:
\begin{enumerate}
\item Entropy Derivative
            \begin{align}
            &\text{Separate the integrals within the coupled entropy:} \nonumber \\
            H_\kappa(\textbf{X})=\frac{A}{B}; 
            A &= \int_{x\in\mathcal{X}} f(x)
            ^{1+\frac{\kappa}{1+\kappa/\alpha}}
\ln_{\kappa} f(x)^{-\frac{1}{1+\kappa/\alpha}} dF(x); \ 
            B = \int_{x\in\mathcal{X}} f(x)
            ^{1+\frac{\kappa}{1+\kappa/\alpha}} dF(x) \nonumber \\
                \frac{\delta H_\kappa(f(y))}{\delta f(y)}
                &= \frac{\frac{1+(1+1/\alpha)\kappa}{1+\kappa/\alpha}}{B}
                f(y)^\frac{\kappa}{1+\kappa/\alpha} 
                \ln_{\kappa} f(y)^{-\frac{1}{1+\kappa/\alpha}}
                -\frac{1}{B(1+\kappa/\alpha)} \nonumber \\
    &- \frac{1+(1+1/\alpha)\kappa}{1+\kappa/\alpha}
    f(y)^\frac{\kappa}{1+\kappa/\alpha} \frac{A}{B^2}.
            \end{align} 
\item The normalization derivative is $-\lambda_0$.
\item Independent equals derivative
            \begin{align}
                &\frac{\delta}{\delta f(y)}  \lambda_1 \left( \sigma^\alpha 
                - \int_{x\in\mathcal{X}} x^\alpha
                f/^{\left(1+\frac{\kappa}{1+\kappa/\alpha}\right)} 
                (x)dF(x) \right) \nonumber \\
                &= -\lambda_1 y^\alpha \frac{1+(1+1/\alpha)\kappa}{1+\kappa/\alpha}
                \frac{f(y)^\frac{\kappa}{1+\kappa/\alpha}}{B} 
                \nonumber \\
                &+\lambda_1 \frac{1+(1+1/\alpha)\kappa}{1+\kappa/\alpha}
                f(y)^\frac{\kappa}{1+\kappa/\alpha}
                \frac{\int_{x\in\mathcal{X}} x^\alpha
                f(x)^{1+\frac{\kappa}{1+\kappa/\alpha}}dF(x)}
                {B^2} \nonumber \\
                &= -\lambda_1\frac{1+(1+1/\alpha)\kappa}{(1+\kappa/\alpha)B}
                f(y)^\frac{\kappa}{1+\kappa/\alpha}
                \left( y^\alpha - \sigma^\alpha\right)
            \end{align}
\end{enumerate}
Combining terms, the Lagrangian derivative is:
\begin{align}\label{equ_Lder}
    \delta \frac{\mathcal{L}}{\delta f} 
    &= \frac{1+(1+1/\alpha)\kappa}{(1+\kappa/\alpha)B}
    f(y)^\frac{\kappa}{1+\kappa/\alpha}
    \left(\ln_{\kappa} f(y)^{-\frac{1}{1+\kappa/\alpha}}
    -\frac{A}{B}-\lambda_1 (y^\alpha - \sigma^\alpha)\right) \nonumber \\
     &-\frac{1}{B(1+\kappa/\alpha)} - \lambda_0 \nonumber \\
    &=0.
\end{align}
Multiplying by $-(1+\kappa/\alpha)\kappa B$ and solving for $f(y)$ gives:
\begin{align}
    &(1+(1+1/\alpha)\kappa)
    \left(1 + \frac{A}{B}\kappa
    +\lambda_1 \kappa(y^\alpha - \sigma^\alpha) \right)
    f(y)^\frac{\kappa}{1+\kappa/\alpha} \nonumber \\
    &= (1+(1+1/\alpha)\kappa)
     -\kappa - \lambda_0 (1+\kappa/\alpha)\kappa B \nonumber \\
     f(y)^{-\frac{\kappa}{1+\kappa/\alpha}} 
     &= \frac{(1+(1+1/\alpha)\kappa)
    \left(1 + \frac{A}{B}\kappa
    +\lambda_1 \kappa(y^\alpha - \sigma^\alpha) \right)}
     {(1+(1+1/\alpha)\kappa)
     -\kappa - \lambda_0 (1+\kappa/\alpha)\kappa B} \nonumber \\
     &= \frac{(1+(1+1/\alpha)\kappa)
     \left(1 + \frac{A}{B}\kappa
    -\lambda_1 \kappa \sigma^\alpha\right)
    \left(1 + \frac{\lambda_1 \kappa y^\alpha}
    {1 + \frac{A}{B}\kappa 
    -\lambda_1 \kappa \sigma^\alpha} 
    \right)}
     {(1+(1+1/\alpha)\kappa)
     -\kappa - \lambda_0 (1+\kappa/\alpha)\kappa B}
\end{align}
The right-hand side has the form $a(1+b y^\alpha),$ confirming that the coupled exponential distribution is the maximizing distribution. The Langragian multipliers are determined from the constraints, which specify that $a=Z^\frac{\kappa}{1+\kappa/\alpha}$ and $b=\frac{\kappa}{\alpha\sigma^\alpha}.$ Given $ H_\kappa(f(y)) = \frac{A}{B} = \frac{1}{\alpha}+\ln_\frac{\kappa}{1+\kappa/\alpha}Z_\kappa$ from Lemma \eqref{lem_unique} 
the moment constraint multiple is determined from the equation:
\begin{align}
    \frac{1}{\sigma^\alpha} 
    &= \frac{\lambda_1 \alpha}
    {1 + \left(\frac{1}{\alpha}+\ln_\frac{\kappa}{1+\kappa/\alpha}
    Z_\kappa\right)\kappa 
    -\lambda_1 \kappa \sigma^\alpha} \nonumber \\
    &= \frac{\lambda_1 \alpha}
    {(1+\kappa/\alpha) Z_\kappa^\frac{\kappa}{1+\kappa/\alpha} 
    -\lambda_1 \kappa \sigma^\alpha} \nonumber \\
     \frac{(1+\kappa/\alpha) Z_\kappa^\frac{\kappa}{1+\kappa/\alpha} }{\sigma^\alpha} 
     &= \lambda_1 \alpha(1 + \kappa/\alpha ) \nonumber \\
     \lambda_1 &= \frac{
     Z_\kappa^\frac{\kappa}{1+\kappa/\alpha} }
     {\alpha \sigma^\alpha}.
\end{align}
The normalization constraint is determined by the expression for the normalization:
\begin{align}
    Z_\kappa^{-\frac{\kappa}{1+\kappa/\alpha}} &= \frac{(1+(1+1/\alpha)\kappa)
     -\kappa - \lambda_0 (1+\kappa/\alpha)\kappa B}
     {(1+(1+1/\alpha)\kappa)
     \left((1+\kappa/\alpha) Z_\kappa^\frac{\kappa}{1+\kappa/\alpha} 
    -Z_\kappa^\frac{\kappa}{1+\kappa/\alpha} \kappa/\alpha \right)} \nonumber \\  
    \lambda_0 &=  
    - \frac{1} {(1+\kappa/\alpha) 
    \int_{x\in\mathcal{X}} 
    f(x)^{1+\frac{\kappa}{1+\kappa/\alpha}} dF(x)}
\end{align}
Thus, completing the proof.
\end{proof}
\begin{remark}
    The Lagrangian multiplier, $\lambda_1,$ is the natural parameter of information geometry, dual to the moment parameter  $\sigma^\alpha.$ The presence in the Lagrangian multipliers of terms involving the integral of the distribution, $B=\int_\mathbf{X}f(x)^{1+\frac{\kappa}{1+\kappa/\alpha}} dF(\mathbf{x}),$ and $Z_\kappa^\frac{\kappa}{1+\kappa/\alpha}$ is related to the nonlinear deformation intrinsic to the coupled exponential family. As noted in the previous section, use of the partition function to normalize the distribution, and the coupled sum to bring this factor within the coupled exponential function, contrasts with the definitions used by Amari to define the $q$-exponential family. Thus, a complete study of the information geometry of the coupled exponential family is merited.
\end{remark}
\begin{remark}
    In the continuous case with $\alpha=1,$ the normalization of the coupled exponential distribution is $Z_\kappa = \sigma.$ In this case, the Lagrangian multipliers are $\lambda_0 = \sigma^\frac{\kappa}{1+\kappa}$ and $\lambda_1 = \sigma^{\frac{\kappa}{1+\kappa}-1} = \sigma^\frac{-1}{1+\kappa}$ and $\frac{\lambda_0}{\lambda_1}=\sigma$. 
\end{remark}

\subsection{Generalized entropy functions and their maximum distribution}\label{subsec_Entropies}

This method subsection will show the various forms in which the generalized entropies can be expressed. Of particular interest is deriving the form in which a generalized mean aggregates the probabilities and then is transformed into an entropy measure via a generalized logarithm. This will assist in explaining the relationship between the entropies. The expressions will utilize the relationship between $q$ and the coupled algebra via $q(\kappa,\alpha,d)=1+\frac{\kappa}{1+d_\alpha\kappa}=\frac{1+(1+d_\alpha)\kappa}{1+d_\alpha\kappa}.$ Within the integrals, $f(\mathbf{x})$ or 
\begin{align}
    f/^{q(\kappa,\alpha,d)}(\mathbf{x})
    =\frac{f(\mathbf{x})^{q(\kappa,\alpha,d)}}
    {\sum_{\mathbf{x\in\mathcal{X}}} f(\mathbf{x})^{q(\kappa,\alpha,d)}}
\end{align} 
will be separated out to clarify the difference between the probability serving as a weight and the function of the probabilities being averaged. The generalized mean associated with each entropy can be shown by applying the inverse of its respective generalized logarithm. 

\textbf{Rényi Entropy}
\begin{align}
    H_q^R(\mathbf{X}) &= -\ln\left(\int_{\mathbf{x\in\mathcal{X}}} f(\mathbf{x})f(\mathbf{x})^{q-1}dF(\mathbf{x})\right )^\frac{1}{q-1} \nonumber \\
&=-\ln\left(\int_{\mathbf{x\in\mathcal{X}}}f(\mathbf{x})f(\mathbf{x})^
\frac{\kappa}{1+d_\alpha\kappa}dF(\mathbf{x})\right)^
{\frac{1+d_\alpha\kappa}{\kappa}}
\end{align}

\textbf{Tsallis Entropy}
\begin{align} \label{equ_TEnt}
     H_q^T(\mathbf{X}) &=\frac{1}{1-q}
     \left(\int_{
     \mathbf{x\in\mathcal{X}}}
     f(\mathbf{x})f(\mathbf{x})^{q-1}dF(\mathbf{x)}
     -1\right) \nonumber \\
     H_\kappa^{T}(\mathbf{X};\alpha,d)
     &= {-\frac{1+d_\alpha\kappa}{\kappa}}
     \left(\int_{\mathbf{x\in\mathcal{X}}}
     f(\mathbf{x})f(\mathbf{x})^
     \frac{\kappa}{1+d_\alpha\kappa}
     dF(\mathbf{x})
     -1\right) \nonumber \\
     &=- \int_{\mathbf{x\in\mathcal{X}}} f(\mathbf{x})
     \ln_\frac{\kappa}{1+d_\alpha\kappa} f(\mathbf{x})
     dF(\mathbf{x}) \nonumber \\
      &=-\ln_\frac{\kappa}{1+d_\alpha\kappa}\left(
    \int_{\mathbf{x}\in\mathcal{X}} 
    f(\mathbf{x}) f(\mathbf{x})^\frac{\kappa}
    {1+d_\alpha\kappa}dF(\mathbf{x})\right)^\frac{1+d_\alpha\kappa}{\kappa}
\end{align}

\begin{align}
    \exp_\frac{\kappa}{1+d_\alpha\kappa}(- H_\kappa^{T}(\mathbf{X}))
    &= \exp_\frac{\kappa}{1+d_\alpha\kappa}\left( \int_{\mathbf{x\in\mathcal{X}}} f(\mathbf{x})
     \ln_\frac{\kappa}{1+d_\alpha\kappa} f(\mathbf{x})
     dF(\mathbf{x})
     \right) \nonumber \\
     &= \left(1+\int_{\mathbf{x\in\mathcal{X}}} 
     \left(f(\mathbf{x})f(\mathbf{x})^\frac{\kappa}{1+d_\alpha\kappa}
     -f(\mathbf{x})\right)dF(\mathbf{x})\right)^\frac{1+d_\alpha\kappa}{\kappa} \nonumber \\
     &= \left(\sum_{\mathbf{x\in\mathcal{X}}} 
     f(\mathbf{x})f(\mathbf{x})^\frac{\kappa}{1+d_\alpha\kappa}
     dF(\mathbf{x})\right)^\frac{1+d_\alpha\kappa}{\kappa}
\end{align}

\textbf{Normalized Tsallis Entropy}
\begin{align}
     H_q^{NT}(\mathbf{X}) &=\frac{1}{1-q}
     \left(1-
     \frac{1}{\int_{\mathbf{x\in\mathcal{X}}}
     f(\mathbf{x})^q \ dF(\mathbf{x})}\right) \nonumber \\
     &=\frac{1}{1-q}\int_{\mathbf{x}\in\mathcal{X}}
     f/^q(\mathbf{x})\left(1-f(\mathbf{x})^{-q}\right)dF(\mathbf{x}) \nonumber \\
     H_\kappa^{NT}(\mathbf{X};\alpha,d)
     &={\frac{1+d_\alpha\kappa}{\kappa}}
     \int_{\mathbf{x\in\mathcal{X}}}
     f/^\frac{1+(1+d_\alpha)\kappa}{1+d_\alpha\kappa}(\mathbf{x})
     \left(f(\mathbf{x})^{-\frac{1+(1+d_\alpha)\kappa}{1+d_\alpha\kappa}}-1\right)
     dF(\mathbf{x}) \nonumber \\
     &=\frac{1+(1+d_\alpha)\kappa}{\kappa}
     \int_{\mathbf{x\in\mathcal{X}}}
     f/^\frac{1+(1+d_\alpha)\kappa}{1+d_\alpha\kappa}(\mathbf{x})
     \ln_\frac{1+(1+d_\alpha)\kappa}{1+d_\alpha\kappa} f(\mathbf{x})^{-1}
     dF(\mathbf{x}) \nonumber \\
     &=\int_{\mathbf{x\in\mathcal{X}}}
     f/^\frac{1+(1+d_\alpha)\kappa}{1+d_\alpha\kappa}(\mathbf{x})
     \ln_\frac{\kappa}{1+d_\alpha\kappa} 
     \left(f(\mathbf{x})^
     {-\frac{1+(1+d_\alpha)\kappa}{\kappa}}\right)
     dF(\mathbf{x})  \nonumber \\
     &=-\ln_\frac{\kappa}{1+d_\alpha\kappa}\left(
    \int_{\mathbf{x}\in\mathcal{X}} 
    f/^\frac{1+(1+d_\alpha)\kappa}{1+d_\alpha\kappa}(\mathbf{x}) f(\mathbf{x})^\frac{\kappa}{1+d_\alpha\kappa}dF(\mathbf{x})\right)^\frac{1+d_\alpha\kappa}{\kappa}
\end{align}

\begin{align}
      &\left(\exp_\frac{\kappa}{1+d_\alpha\kappa} 
      \left(H_\kappa^{NT}(\mathbf{X})\right)\right)^
      {-\frac{\kappa}{1+(1+d_\alpha)\kappa}} \nonumber \\
      &= \left(\exp_\frac{\kappa}{1+d_\alpha\kappa} 
      \left(
      \int_{\mathbf{x\in\mathcal{X}}}
     f/^\frac{1+(1+d_\alpha)\kappa}{1+d_\alpha\kappa}(\mathbf{x})
     \ln_\frac{\kappa}{1+d_\alpha\kappa} f(\mathbf{x})^
     {-\frac{1+(1+d_\alpha)\kappa}{\kappa}} 
      dF(\mathbf{x})\right)\right)^
      {-\frac{\kappa}{1+(1+d_\alpha)\kappa}} \nonumber \\
      &= \left(1+
      \int_{\mathbf{x\in\mathcal{X}}}
     \left(f/^\frac{1+(1+d_\alpha)\kappa}{1+d_\alpha\kappa}(\mathbf{x})
      f(\mathbf{x})^
     {-\frac{1+(1+d_\alpha)\kappa}{1+d_\alpha\kappa}} 
     - f/^\frac{1+(1+d_\alpha)\kappa}{1+d_\alpha\kappa}(\mathbf{x}) \right)
      dF(\mathbf{x})\right)^
      {-\frac{1+d_\alpha\kappa}{1+(1+d_\alpha)\kappa}} \nonumber \\
      &= \left(
      \int_{\mathbf{x\in\mathcal{X}}}
      f/^\frac{1+(1+d_\alpha)\kappa}{1+d_\alpha\kappa}(\mathbf{x})
      f(\mathbf{x})^{-\frac{1+(1+d_\alpha)\kappa}{1+d_\alpha\kappa}} 
      dF(\mathbf{x})\right)^
      {-\frac{1+d_\alpha\kappa}{1+(1+d_\alpha)\kappa}} 
\end{align}

\textbf{Coupled Entropy}
\begin{align}
    H_\kappa(\mathbf{X};\alpha,d) &=\frac{1}{\kappa}
    \int_{\mathbf{x\in\mathcal{X}}} 
    f/^\frac{1+(1+d_\alpha)\kappa}{1+d_\alpha\kappa}(\mathbf{x})
    \left(f(\mathbf{x})^{-\frac{\kappa}{1+d_\alpha\kappa}}
    -1\right)dF(\mathbf{x}) \nonumber \\
    &=\int_{\mathbf{x\in\mathcal{X}}} 
    f/^\frac{1+(1+d_\alpha)\kappa}{1+d_\alpha\kappa}(\mathbf{x})
    -\ln_{\kappa} f(\mathbf{x})^{-\frac{1}{1+d_\alpha\kappa}}
    dF(\mathbf{x}) \nonumber \\
    &= -\ln_{\frac{\kappa}{1+d_\alpha\kappa}} 
    \left(\int_{\mathbf{x\in\mathcal{X}}}
    f/^\frac{1+(d_\alpha)\kappa}{1+d_\alpha\kappa}(\mathbf{x})
    f(\mathbf{x})^{-\frac{\kappa}{1+d_\alpha\kappa}}
    dF(\mathbf{x})\right)^{-\frac{1+d_\alpha\kappa}{\kappa}}
\end{align}
\begin{align}
    &\exp_{\kappa}^{-(1+d_\alpha\kappa)} [H_\kappa(\mathbf{X})]  \nonumber \\
    &=\exp_{\kappa}^
    {-(1+d_\alpha\kappa)}\left(
    \int_{\mathbf{x\in\mathcal{X}}} 
    f/^\frac{1+(1+d_\alpha)\kappa}{1+d_\alpha\kappa}(\mathbf{x})
    \ln_{\kappa} f(\mathbf{x})^{-\frac{1}{1+d_\alpha\kappa}}
    dF(\mathbf{x})\right) \nonumber \\
    &= \left(1+\frac{\kappa}{\kappa}\int_{\mathbf{x\in\mathcal{X}}}
    \left(f/^\frac{1+(1+d_\alpha)\kappa}{1+d_\alpha\kappa}(\mathbf{x})
    f(\mathbf{x})^{-\frac{\kappa}{1+d_\alpha\kappa}}
    -f/^\frac{1+(1+d_\alpha)\kappa}{1+d_\alpha\kappa}(\mathbf{x})\right)
    dF(\mathbf{x})\right)^{-\frac{1+d_\alpha\kappa}{\kappa}} \nonumber \\
    &= \left(\int_{\mathbf{x\in\mathcal{X}}}
    f/^\frac{1+(d_\alpha)\kappa}{1+d_\alpha\kappa}(\mathbf{x})
    f(\mathbf{x})^{-\frac{\kappa}{1+d_\alpha\kappa}}
    dF(\mathbf{x})\right)^{-\frac{1+d_\alpha\kappa}{\kappa}}
\end{align}
In Section \ref{subsec_required}, the entropies for the centered coupled exponential distribution $(\alpha=1,d=1,\mu=0)$ are mapped onto the distribution, which provides a visual comparison of the generalized entropies. The density value equals $\exp_\kappa^{-(1+\kappa)}(H).$ The variable is then determined by solving for $x,$ given a coupled exponential, 
\begin{align}
    x&=\sigma (H\ominus_\kappa \ln_\kappa \sigma^\frac{1}{1+\kappa})
    =\sigma\frac{H-\ln_\kappa \sigma^\frac{1}{1+\kappa}}{1+\kappa 
    \ln_\kappa \sigma^\frac{1}{1+\kappa}}.
\end{align} 
Each of these generalized entropy functions is maximized by the coupled stretched exponential distribution.  The basic properties of these functions are illustrated by their measure of the coupled exponential distribution $(\alpha=1,d=1)$, shown in Figure \ref{fig_entropies} and Table \ref{tab_entropies}. The limit properties as the coupling goes to infinity provides an important distinction in the function. The Shannon entropy approaches infinity in this case, motivating the need for a generalization. The Rényi entropy also approaches infinity but at a logarithmic rate. The Tsallis entropy converges to one, raising questions about its ability to be a metric. The normalized Tsallis entropy goes to infinity faster than the Shannon entropy, reflective of its instability. The required solution, fulfilled by the coupled Entropy and shown in Figure \ref{fig_CE_CSE_a}, approaches the scale $\sigma$.

  \begin{figure}[ht]
      \centering
      \includegraphics[width=0.75\linewidth,page=1]{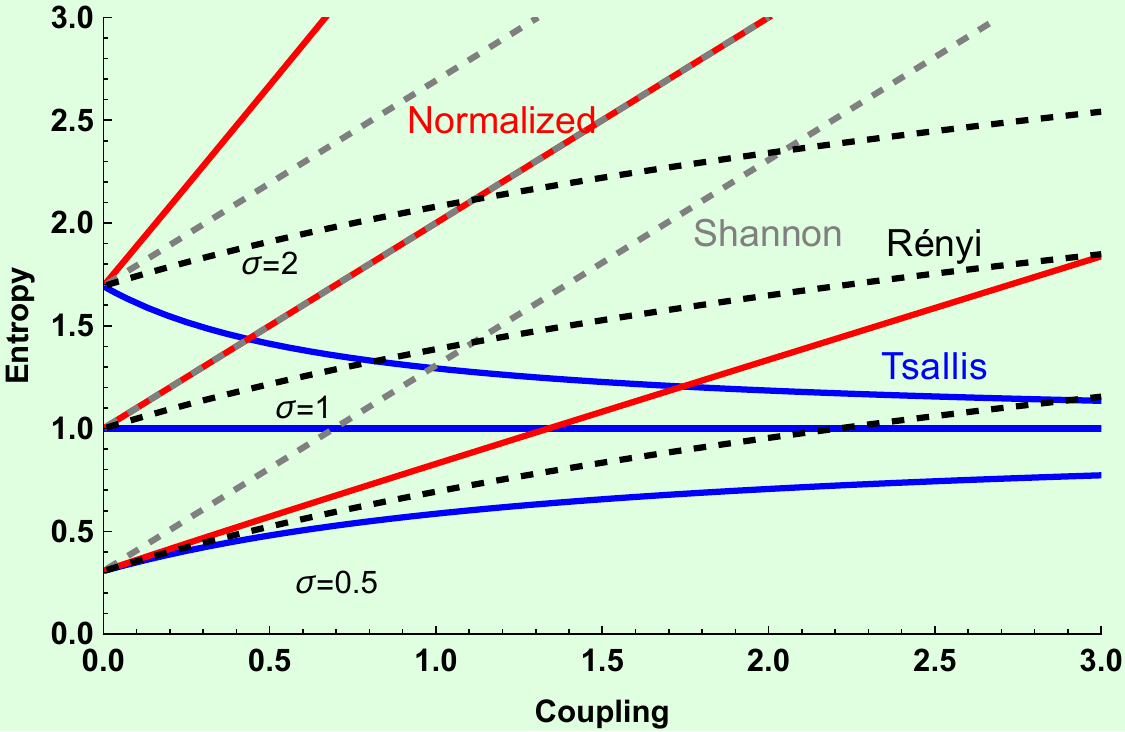} 
      \caption{\textbf{Entropies of the Coupled Exponential Distribution} The entropy versus the coupling $(\kappa)$ is shown for the Shannon (gray, dashed), Rényi (black, dashed), Tsallis (blue), and Normalized Tsallis (red) entropies. Scale $(\sigma)$ values of 0.5, 1, and 2 are shown. Shannon is linear and Renyi is logarithmic with the coupling. Both are logarithmic with the scale. Neither the Tsallis or Normalized Tsallis entropies provide a consistent metric of the scale of the distribution. Tsallis has an inverse relationship with the scale, converging to 1, and the Normalized Tsallis multiplies the scale term with the coupling.} \label{fig_entropies}
  \end{figure}
\begin{table}
    \centering
     \caption{Entropies of the Coupled Exponential Distribution}
    \label{tab_entropies}
    \begin{tabular}{ccc}\toprule
        \textbf{Entropy} & \textbf{of Coupled Exp Dist.} & \textbf{Description}\\\midrule
        \textbf{\textit{BGS}}& $1+\ln(\sigma)+\kappa$ & Logarithmic in Scale; Linear in Shape \\
        \textbf{\textit{Rényi}} & $\ln\sigma+(1+\frac{1}{\kappa})\ln(1+\kappa)$ & Logarithmic in Scale \& Shape\\
        \textbf{\textit{Tsallis}} & $1-\frac{1}{1+\kappa}\ln_\frac{\kappa}{1+\kappa} {\sigma^{-1}}$ & Inverse in Scale \& Shape\\
        \textbf{\textit{Normalized Tsallis}} & $1+(1+\kappa)\ln_\frac{\kappa}{1+\kappa}{\sigma}+\kappa$ & Multiplicative in Scale \& Shape\\
        \textbf{\textit{Required: Coupled}}&  $1+\ln_\frac{\kappa}{1+\kappa} \sigma$& Generalized Log of Scale\\ \bottomrule
    \end{tabular}
   
\end{table}

\subsection{Scale classification of entropy functions} \label{subsec_Hanel}

Assuming just the first three Shannon-Khinchin axioms, presented in \ref{subsec_axioms}, Hanel and Thurner \cite{hanelComprehensiveClassificationComplex2011,Hanel2011a} used two different scaling properties to define a universal family of (c,d)-entropies. The (c,1) entropies are maximized by shape-scale distributions. The (1,d)-entropies are maximized by the stretched exponential distributions. Hanel-Thuner derived a family of $(c,d)$-entropies and maximizing distributions, however, the general expression is quite complex and thus is not widely utilized. The scaling classification methods are reviewed here and in Section \ref{subsec_univ} it is shown that the coupled entropy and CSEDs satisfy the full spectrum of $(c,d)$ scaling.

Hanel and Thurner \cite{hanelComprehensiveClassificationComplex2011, hanelHowMultiplicityDetermines2014, amigoBriefReviewGeneralized2018, korbelClassificationComplexSystems2018} derived a classification of generalized entropies for complex systems based on the scaling of the microstates, W. The classification is based on a scaling by $\lambda W,$ which has a limit of $\lambda^{1-c},$ and a scaling by $W^{1+a},$ which has a limit of $(1+a)^d.$ In order to emphasis the extensivity property, the $c$ scaling parameter is modified here to $\tilde{c}=1-c$. In this classification, generalized entropies are characterized by their trace, $g$, and non-trace components, $G$, which simplifies via the asymptotic equipartition property to the expression on the right,
\begin{align}
    \label{equ_scalefunction}
    F_{G,g}(\mathbf{p})=G\left(\sum_{i=1}^W g(p_i)\right)
    \Rightarrow G\left(Wg(W^{-1})\right).
\end{align}
The $\tilde{c}$ classification of an entropy function, associated with the power-law tail, is determined by the limit of the ratio
\begin{align}
    \lim_{W\rightarrow \infty}\left[
    \frac{G\left(\lambda W 
    g\left(\frac{1}{\lambda W}\right)\right)}{G\left(W 
    g\left(\frac{1}{W}\right)\right)}\right] 
    = \lambda^{\tilde{c}}.
\end{align}
The $d$ classification, associated with the stretched exponential, is based on substituting $\lambda \rightarrow W^a$, and isolating a secondary scaling by multiplying by $W^{-a\tilde{c}}$,
\begin{align}
    \lim_{W\rightarrow W^*}\left[
    \frac{G\left( W^{1+a} 
    g\left(\frac{1}{ W^{1+a}}\right)\right)}{G\left(W 
    g\left(\frac{1}{W}\right)\right)} W^{-a\tilde{c}}\right] 
    = (1+a)^d.
\end{align}

Tempesta \cite{tempestaShannonKhinchinFormulation2016, tempestaUniversalityClassesInformationTheoretic2020} defined a universal class of entropies that are composable with a power series of nonlinear terms. Thus, just as Euclid's fifth axiom can be generalized to define a variety of Riemannian geometries, the fourth Shannon-Khinchin axiom generalizes to define non-additive but composable entropies. In fact, the information geometry of the resulting metrics induces non-Euclidean geometries \cite{amariGeometryQExponentialFamily2011}

\textbf{Tempesta Composability}: Tempesta's \cite{tempestaGroupEntropiesCorrelation2011,tempestaUniversalityClassesInformationTheoretic2020} universal entropy, $H^U(\textbf{p})\equiv \sum_i p_i G\left(\ln p_i^{-1}\right)$, where the power series composition function is $G(t) \equiv \sum_{k=0}^\infty a_k \frac{t^{k+1}}{k+1}$, uniquely satisfies the following composition axiom. Given two statistically independent systems with probabilities, $p_i^I$ and $p_j^{II}$, the joint entropy of the two systems is: 
\begin{align}
    H^U(\textbf{p}^I,\textbf{p}^{II}) = G\left(G^{-1}(H^U(\textbf{p}^I)) + G^{-1}(H^U(\textbf{p}^{II}))\right)
\end{align}
where $G^{-1}(s)$ is the compositional inverse of G(t).

\subsection{Mathematica Github Repository} \label{subsec_github}

A Github repository regarding Nonlinear Statistical Coupling methods is maintained at \cite{nelsonNonlinearStatisticalCoupling2025}. Within the repository is a folder "nsc-mathematica". The file "Coupled Entropy NSP 2025.nb" and its pdf copy "Coupled Entropy NSP 2025Nov20.pdf" contain the computations for the graphics in this paper. The file calls functions from "Coupled Functions.nb", which has a pdf copy "Coupled Functions 2025Nov20.pdf". Some additional computations are completed in "Compare Uncertainty Functions v4.nb", with pdf copy "Compare Uncertainty Functions v4 2025Nov20.pdf". 

\section*{Acknowledgements}

I wish to thank colleagues Amenah Al-Najafi, Igor Oliveira, William Thistleton, and Calden Wloka, whose co-development of the coupled variational inference methods clarified the unique importance of the coupled entropy. Special thanks go to Ugur Tirnakli and Bruce Boghosian for the invitation to the 2025 Nonextensive Statistical Physics Workshop to present the research on the coupled entropy. Conversations with Rudolf Hanel were helpful in clarifying the structure of the entropy scaling classes.

There are no funding grants to acknowledge with this study.

\section*{Declarations}

There are no conflicts of interest to declare. Mathematica and DeepSeek, were used as analytical aides in the research and development process; however, the manuscript was written without AI aides. An editorial review was completed using Gemini. Wikipedia was used for background information, though references refer to primary sources.
\newpage
\bibliographystyle{sn-nature}
\bibliography{MICS_BibTex}

\end{document}